\newcommand{\ignore}[1]{}
\newtheorem{theorem}{Theorem}[section]
\newtheorem{lemma}[theorem]{Lemma}
\newtheorem{corollary}[theorem]{Corollary}
\newtheorem{remark}[theorem]{Remark}
\newtheorem{proposition}[theorem]{Proposition}
\let\oldReturn\Return
\renewcommand{\Return}{\State\oldReturn}
\definecolor{winered}{rgb}{0.5,0,0}
\setlist[description]{leftmargin=\parindent,labelindent=\parindent}
\begin{document}

\title{Linear-Time Algorithms for Computing Twinless Strong Articulation Points and Related Problems\thanks{Research supported by the Hellenic Foundation for Research and Innovation (H.F.R.I.) under the ``First Call for H.F.R.I. Research Projects to support Faculty members and Researchers and the procurement of high-cost research equipment grant'', Project FANTA (eFficient Algorithms for NeTwork Analysis), number HFRI-FM17-431.}}
\author{Loukas Georgiadis$^{1}$ \and Evangelos Kosinas$^{2}$}

\maketitle

\begin{abstract}
A directed graph $G=(V,E)$ is twinless strongly connected if it contains a strongly connected spanning subgraph without any pair of antiparallel (or \emph{twin}) edges. The twinless strongly connected components (TSCCs) of a directed graph $G$ are its maximal twinless strongly connected subgraphs. These concepts have several diverse applications, such as the design of telecommunication networks and the structural stability of buildings.
A vertex $v \in V$ is a twinless strong articulation point of $G$, if the deletion of $v$ increases the number of TSCCs of $G$.
%
%
We show that the computation of twinless strong articulation points reduces to the following problem in undirected graphs, which may be of independent interest:
Given a $2$-vertex-connected (biconnected) undirected graph $H$, find all vertices $v$ that belong to a vertex-edge cut-pair, i.e., for which there exists an edge $e$ such that
$H \setminus \{v,e\}$ is not connected.
We develop a linear-time algorithm that not only finds all such vertices $v$, but also computes the number of edges $e$ such that
$H \setminus \{v,e\}$ is not connected. This also implies that for each twinless strong articulation point $v$ which is not a strong articulation point in a strongly connected digraph $G$, we can compute the number of TSCCs in $G \setminus v$.
We note that the problem of computing all vertices that belong to a vertex-edge cut-pair
can be solved in linear-time by exploiting the structure of $3$-vertex-connected (triconnected) components of $H$, represented by an SPQR tree of $H$.
Our approach, however, is conceptually simple, and thus likely to be more amenable to practical implementations.
\end{abstract}

\footnotetext[1]{Department of Computer Science \& Engineering, University of Ioannina, Greece. E-mail: loukas@cs.uoi.gr}
\footnotetext[2]{Department of Computer Science \& Engineering, University of Ioannina, Greece. E-mail: ekosinas@cs.uoi.gr}

\section{Introduction}
\label{sec:intro}

Let $G=(V,E)$ be a directed graph (digraph), with $m$ edges and $n$ vertices. Digraph $G$ is \emph{strongly connected} if there is a directed path from each vertex to every other vertex.
The \emph{strongly connected components} (SCCs) of $G$ are its maximal strongly connected subgraphs. Two vertices $u,v \in V$  are \emph{strongly connected} if they belong to the same strongly connected component of $G$.
We refer to a pair of antiparallel edges, $(x,y)$ and $(y,x)$, of $G$ as \emph{twin edges}.
A digraph $G=(V,E)$ is \emph{twinless strongly connected} if it contains a strongly connected spanning subgraph $(V,E')$ without any pair of twin edges.
The \emph{twinless strongly connected components} (TSCCs) of $G$ are its maximal twinless strongly connected subgraphs. Two vertices $u,v \in V$  are \emph{twinless strongly connected} if they belong to the same twinless strongly connected component of $G$.
Twinless strong connectivity is motivated by several diverse applications, such as the design of telecommunication networks and the structural stability of buildings~\cite{Raghavan2006}.
Raghavan~\cite{Raghavan2006} provided a characterization of twinless strongly connected digraphs, and, based on this characterization, provided a linear-time algorithm for computing the TSCCs of a digraph.

In this paper, we further explore the notion of twinless strong connectivity, with respect to $2$-connectivity in digraphs.
An edge (resp., a vertex) of a digraph $G$ is a \emph{strong bridge} (resp., a \emph{strong articulation point}) if its removal increases the number of strongly connected components. Thus, strong bridges (resp., strong articulation points) are $1$-edge (resp., $1$-vertex) cuts for digraphs.
A strongly connected digraph $G$ is \emph{$2$-edge-connected} if it has no strong bridges, and it is
\emph{$2$-vertex-connected} if it has at least three vertices and no strong articulation points.
Let $C \subseteq V$.
The induced subgraph of $C$, denoted by $G[C]$, is the subgraph of $G$ with vertex set $C$ and edge set $E \cap (C \times C)$.
If $G[C]$ is $2$-edge-connected (resp., $2$-vertex-connected), and there is no set of vertices $C'$ with $C \subsetneq C' \subseteq V$ such that $G[C']$ is also
$2$-edge-connected (resp., $2$-vertex-connected), then $G[C]$ is a \emph{maximal $2$-edge-connected (resp., $2$-vertex-connected) subgraph of $G$}.
Two vertices $u, v\in V$ are said to be \emph{$2$-edge-connected} (resp., \emph{$2$-vertex-connected}) if there are two edge-disjoint  (resp., two internally vertex-disjoint) directed paths from $u$ to $v$ and two edge-disjoint  (resp., two internally vertex-disjoint) directed paths from $v$ to $u$ (note that a path from $u$ to $v$ and a path from $v$ to $u$ need not be edge- or vertex-disjoint). A \emph{$2$-edge-connected component} (resp., \emph{$2$-vertex-connected component}) of a digraph $G=(V,E)$ is defined as a maximal subset $B \subseteq V$ such that every two vertices $u, v \in B$ are $2$-edge-connected (resp., $2$-vertex-connected).
We note that connectivity-related problems for digraphs are known to be much more difficult than for undirected graphs, and indeed many notions for undirected connectivity do not translate to the directed case. See, e.g., \cite{CHILP:2CS,2C:GIP:SODA,2CC:HenzingerKL:ICALP15}.
Indeed, it has only recently been shown that all strong bridges and strong articulation points of a digraph can be computed in linear time~\cite{Italiano2012}.
Additionally, it was shown very recently how to compute the $2$-edge- and $2$-vertex-connected components of digraphs in linear time \cite{2ECC:GILP:TALG,2VCB:IC}, while
the best current bound for computing the maximal $2$-edge- and the $2$-vertex-connected subgraphs in digraphs is not even linear, but it is
$O(\min\{m^{3/2}, n^2\})$~\cite{CHILP:2CS,2CC:HenzingerKL:ICALP15}.

\begin{figure}[t!]
\begin{center}
\centerline{\includegraphics[trim={0 0 0 12.5cm}, clip=true, width=\textwidth]{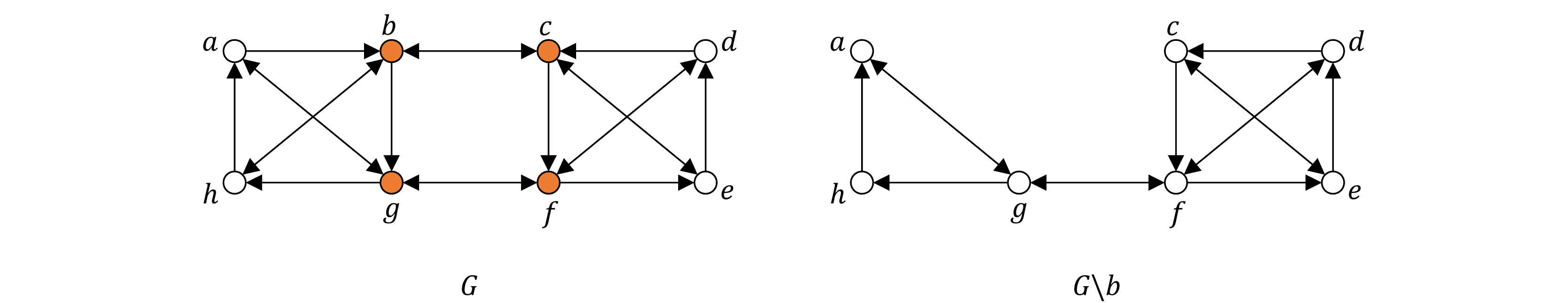}}
\caption{A $2$-vertex-connected digraph $G$ that is not twinless $2$-vertex-connected. Vertices $b$, $c$, $g$ and $f$ are twinless strong articulation points but not strong articulation points; for instance, $G\setminus b$ is strongly connected but not twinless strongly connected.}
\label{figure:TwinlessSAP-example}
\end{center}
\end{figure}

The above notions extend naturally to the case of twinless strong connectivity.
An edge $e \in E$ is a \emph{twinless strong bridge} of $G$ if the deletion of $e$ increases the number of TSCCs of $G$. Similarly,
a vertex $v \in V$ is a \emph{twinless strong articulation point} of $G$ if the deletion of $v$ increases the number of TSCCs of $G$.
A linear-time algorithm for detecting all twinless strong bridges can be derived by combining the linear-time algorithm of Italiano et al.~\cite{Italiano2012} for computing all the strong bridges of a digraph and 
a linear-time algorithm for computing all the edges which belong to a cut-pair in a $2$-edge-connected undirected graph. (See Section \ref{sec:twinless-bridges} for details.)
Previously, Jaberi~\cite{jaberi2019twinless} studied the properties of twinless strong articulation points and some related concepts, and presented an $O(m(n-s))$-time algorithm for their computation, where $s$ is the number of strong articulation points of $G$.
Hence, this bound is $O(mn)$ in the worst case.
Here, we present a linear-time algorithm that identifies all the twinless strong articulation points.
Specifically,  we show that the computation of twinless strong articulation points reduces to the following problem in undirected graphs, which may be of independent interest:
Given a $2$-vertex-connected (biconnected) undirected graph $H$, find all vertices $v$ that belong to a vertex-edge cut-pair, i.e., for which there exists an edge $e$ such that
$H \setminus \{v,e\}$ is not connected.
We develop a linear-time algorithm that not only finds all such vertices $v$, but also computes the number of vertex-edge cut-pairs of $v$ (i.e., the number of edges $e$ such that
$H \setminus \{v,e\}$ is not connected). This implies that for each twinless strong articulation point $v$, that is not a strong articulation point in a digraph $G$, we can also compute the number of twinless strongly connected components of $G \setminus v$.

Our algorithm exploits properties of depth-first search (DFS), and concepts that are defined on the structure given by the DFS,
which are reminiscent of the seminal $3$-vertex-connected (triconnected) components algorithm of Hopcroft and Tarjan~\cite{3-connectivity:ht}.
Indeed, we can compute the vertices that form a vertex-edge cut-pair by exploiting the structure of the triconnected components of $H$, represented by an SPQR tree~\cite{SPQR:triconnected,SPQR:planarity} of $H$. (See Appendix~\ref{section:SPQR} for details.)
In order to construct an SPQR tree, however, we need to know the triconnected components of the graph~\cite{SPQR:GM}, and efficient algorithms that compute triconnected components are considered conceptually complicated (see, e.g., \cite{3connectivity-LOCAL,SPQR:GM,3-connectivity:ht}).
Our approach, on the other hand, is conceptually simple and thus likely to be more amenable to practical implementations.
Also, we believe that our results and techniques will be useful for the design of faster algorithms for related connectivity problems, such as computing twinless $2$-connected components~\cite{jaberi20192edgetwinless,jaberi2019computing}.

\section{Preliminaries}
\label{sec:preliminaries}

Let $G$ be a (directed or undirected) graph. We denote by $V(G)$ and $E(G)$, respectively, the vertex set and edge set of $G$.
For a set of edges (resp., vertices) $S$, we let $G \setminus S$ denote the graph that results from $G$ after deleting the edges in $S$ (resp., the vertices in $S$ and their adjacent edges). We extend this notation for mixed sets $S$, that may contain both vertices and edges of $G$, in the obvious way. Also, if $S$ has only one element $x$, we abbreviate $G \setminus S$ by $G \setminus x$.
Let $C \subseteq V(G)$.
The induced subgraph of $C$, denoted by $G[C]$, is the subgraph of $G$ with vertex set $C$ and edge set $E \cap (C \times C)$.

For any digraph $G$, the associated undirected graph $G^u$ is the graph with vertices $V(G^u)=V(G)$ and edges $E(G^u) = \{ \{u,v\} \mid (u,v) \in E(G) \vee (v,u) \in E(G) \}$.
Let $H$ be an undirected graph. An edge $e \in E(H)$ is a \emph{bridge} if its removal increases the number of connected components of $H$. A connected graph $H$ is $2$-edge-connected if it contains no bridges.
Raghavan~\cite{Raghavan2006} proved the following characterization of twinless strongly connected digraphs.

\begin{theorem}(\cite{Raghavan2006})
\label{theorem:TwinlessCharacterization}
Let $G$ be a strongly connected digraph. Then $G$ is twinless strongly connected if and only if its underlying undirected graph $G^u$ is $2$-edge-connected.
\end{theorem}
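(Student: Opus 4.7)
The plan is to split the biconditional into its two implications. The forward direction is a short contrapositive from a hypothetical bridge; the backward direction is a greedy peeling of twin edges whose correctness reduces to a single key lemma about twin pairs in digraphs with $2$-edge-connected underlying graph.

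For the forward direction, suppose $G$ is twinless strongly connected and let $G'$ be a strongly connected, twin-edge-free spanning subgraph of $G$. Since $(G')^u$ is a spanning subgraph of $G^u$ and $2$-edge-connectivity is monotone under edge additions, it suffices to prove $(G')^u$ is $2$-edge-connected. If $(G')^u$ had a bridge $\{u,v\}$ inducing a vertex partition $(S, \bar S)$ with $u \in S$ and $v \in \bar S$, then strong connectivity of $G'$ would force at least one $S \to \bar S$ directed edge and one $\bar S \to S$ directed edge in $E(G')$, each necessarily underlying $\{u,v\}$; hence both $(u,v)$ and $(v,u)$ would lie in $E(G')$, contradicting the twinless property.

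For the backward direction, I would set $G^{(0)} := G$ and iterate: as long as $G^{(i)}$ still contains a twin pair $\{(u,v),(v,u)\}$, delete from $G^{(i)}$ one of the two edges that is \emph{not} a strong bridge of $G^{(i)}$, obtaining $G^{(i+1)}$. By construction $G^{(i+1)}$ remains strongly connected, and its underlying undirected graph is unchanged since the undirected edge $\{u,v\}$ is still represented; hence the invariants ``$G^{(i)}$ is strongly connected'' and ``$(G^{(i)})^u = G^u$ is $2$-edge-connected'' persist throughout. The process terminates after at most $|E|/2$ steps at a spanning subgraph that is strongly connected and free of twin edges---provided a non-bridge member of each surviving twin pair is always available.

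The crux is therefore the following lemma: in a strongly connected digraph $H$ whose underlying graph $H^u$ is $2$-edge-connected, at most one edge of any twin pair is a strong bridge of $H$. Assume both $(u,v)$ and $(v,u)$ are strong bridges and fix witnessing cuts $(S, \bar S)$ with $u \in S$, $v \in \bar S$ and $(u,v)$ the unique $S \to \bar S$ edge, and $(T, \bar T)$ with $v \in T$, $u \in \bar T$ and $(v,u)$ the unique $T \to \bar T$ edge. A short case-check shows that no directed edge can go from $S \cup T$ into $\bar S \cap \bar T$: such an edge would be forced to equal $(u,v)$ (leading to $v \in \bar T$, impossible) or $(v,u)$ (leading to $u \in \bar S$, impossible); therefore, by strong connectivity, $\bar S \cap \bar T = \emptyset$. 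A symmetric argument yields $S \cap T = \emptyset$, so $V = S \sqcup T$; then the only directed edges across $(S, \bar S)$ are exactly the twin pair $(u,v),(v,u)$, and the undirected cut $[S, \bar S]$ in $H^u$ consists of the single edge $\{u,v\}$, contradicting $2$-edge-connectivity of $H^u$. I expect this case analysis to be the main obstacle, since it is the only step that genuinely combines strong connectivity with $2$-edge-connectivity of $H^u$; once the lemma is in place, the greedy peeling handles the backward direction mechanically.
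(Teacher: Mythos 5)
Your proof is correct. Note, however, that the paper does not prove Theorem~\ref{theorem:TwinlessCharacterization} at all: it is imported verbatim from Raghavan~\cite{Raghavan2006} as a black box, so there is no internal argument to compare yours against. Relative to Raghavan's original treatment (which obtains the hard direction by viewing each twin pair as an undirected edge of a mixed graph and invoking orientation results in the spirit of Robbins' theorem and its Boesch--Tindell extension), your route is more elementary and self-contained: the forward direction is the standard bridge-forces-both-orientations argument, and the backward direction replaces the orientation machinery with a greedy peeling whose correctness rests on the single lemma that in a strongly connected digraph with $2$-edge-connected underlying graph at most one member of a twin pair can be a strong bridge. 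I checked the details: the invariant $(G^{(i)})^u=G^u$ does persist (once one member of a twin pair is deleted, the survivor is no longer part of any twin pair and is never selected again), the witnessing cuts $(S,\bar S)$ and $(T,\bar T)$ with a unique crossing edge do exist for strong bridges of strongly connected digraphs (a standard fact you assert without proof, which is acceptable), and the case analysis forcing $\bar S\cap\bar T=\emptyset$ and $S\cap T=\emptyset$ is sound, yielding the single-edge undirected cut $\{u,v\}$ that contradicts $2$-edge-connectivity. The only cosmetic caveat is the degenerate two-vertex case, where $G^u$ is a single edge and the hypothesis of the backward direction is vacuous, so nothing breaks. Your argument would be a reasonable self-contained substitute for the citation.
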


Theorem~\ref{theorem:TwinlessCharacterization} implies a linear-time algorithm to compute the twinless strongly connected components (TSCCs) of a digraph $G$. It suffices to compute the strongly connected components, $C_1, \ldots, C_k$, of $G$ and compute the $2$-edge-connected components of each underlying undirected graph $G^{u}[C_i]$, $1 \le i \le k$. All these computations take linear time~\cite{dfs:t}.

\section{Computing twinless strong bridges}
\label{sec:twinless-bridges}

Another immediate consequence of Theorem~\ref{theorem:TwinlessCharacterization} is that a twinless strong bridge in a twinless strongly connected graph is either $(1)$ a strong bridge or $(2)$ an edge whose removal destroys the $2$-edge connectivity in the underlying graph. All strong bridges can be found in linear time~\cite{Italiano2012}. To compute the edges of type $(2)$, we only have to find all the edges of the underlying graph whose removal destroys the $2$-edge connectivity.

\begin{lemma}
\label{lemma:tsb}
{{}}
An edge $e=(x,y)$ in a twinless strongly connected digraph $G$ is a twinless strong bridge but not a strong bridge if and only if its twin $(y,x)$ is not an edge of $G$ and $(G \setminus e)^u$ is not $2$-edge-connected.
\end{lemma}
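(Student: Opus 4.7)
The plan is to derive both directions from Raghavan's characterization (Theorem~\ref{theorem:TwinlessCharacterization}), using the observation that the hypothesis $(y,x)\notin E(G)$ is precisely what makes the deletion of the directed edge $e$ from $G$ remove the undirected edge $\{x,y\}$ from $G^u$; when the twin is present, $\{x,y\}$ persists in $(G\setminus e)^u$ and the underlying graph is unaffected.

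For the forward direction I would proceed as follows. Since $e$ is not a strong bridge, $G\setminus e$ is strongly connected, and since $e$ is a twinless strong bridge and $G$ has exactly one TSCC, $G\setminus e$ has more than one TSCC and is therefore not twinless strongly connected. Theorem~\ref{theorem:TwinlessCharacterization} applied to the strongly connected digraph $G\setminus e$ then gives that $(G\setminus e)^u$ is not $2$-edge-connected. For the claim $(y,x)\notin E(G)$, I would argue by contradiction: if the twin were present, the undirected edge $\{x,y\}$ would still sit in $(G\setminus e)^u$ via $(y,x)$, so $(G\setminus e)^u=G^u$; but applying Theorem~\ref{theorem:TwinlessCharacterization} to the twinless strongly connected $G$ says $G^u$ is $2$-edge-connected, contradicting what was just derived.

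For the backward direction I would start from $(y,x)\notin E(G)$ and $(G\setminus e)^u$ not $2$-edge-connected, which together give $(G\setminus e)^u = G^u\setminus\{x,y\}$. Since $G^u$ is $2$-edge-connected (again by Theorem~\ref{theorem:TwinlessCharacterization} applied to $G$), removing the single edge $\{x,y\}$ leaves it connected, so $(G\setminus e)^u$ is connected but lacks $2$-edge-connectivity. I would then establish that $G\setminus e$ itself remains strongly connected, so that $e$ is not a strong bridge, and then use Theorem~\ref{theorem:TwinlessCharacterization} in the reverse direction on $G\setminus e$ to conclude it is not twinless strongly connected, hence has more than one TSCC, so $e$ is a twinless strong bridge.

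The main obstacle I anticipate is this last promotion inside the backward direction, namely lifting undirected connectivity of $(G\setminus e)^u$ to strong connectivity of $G\setminus e$. This does not follow from the underlying-graph observation alone, and I expect the argument will use the twinless strong connectivity of $G$ together with $(y,x)\notin E(G)$, for instance by starting from a strongly connected spanning subgraph of $G$ without twin edges and extracting directed detours around $e$ from the cycle through $\{x,y\}$ supplied by the $2$-edge-connectivity of $G^u$.
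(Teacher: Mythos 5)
Your forward direction is exactly the paper's argument and is correct. The trouble is the backward direction, and you have put your finger on precisely the right spot: the step where you must show that $G\setminus e$ is strongly connected (so that $e$ is not a strong bridge) is left as a plan, and no plan can succeed, because that step is false. Take $G$ to be the directed triangle on $\{1,2,3\}$ with edges $(1,2),(2,3),(3,1)$ and $e=(1,2)$: here $G$ is twinless strongly connected ($G^u$ is a triangle, hence $2$-edge-connected), the twin $(2,1)$ is absent, and $(G\setminus e)^u$ is the path $1-3-2$, which is not $2$-edge-connected --- yet $e$ \emph{is} a strong bridge, since vertex $1$ has out-degree $0$ in $G\setminus e$. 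So the right-hand side of the equivalence holds while the left-hand side fails. Your proposed repair --- extracting a directed detour around $e$ from the undirected cycle through $\{x,y\}$ --- breaks on this example for exactly the reason one would fear: the remaining edges of that cycle are oriented $2\to 3\to 1$, the wrong way to replace $(1,2)$.

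For what it is worth, the paper's own proof dismisses this direction with ``the converse is an immediate consequence of Theorem~\ref{theorem:TwinlessCharacterization},'' which is where the error hides; your instinct that something extra is needed was sound, but the missing ingredient has to be \emph{assumed}, not derived. The statement that is actually true (and that the algorithm of Section~\ref{sec:twinless-bridges} relies on) is: if $(y,x)\notin E(G)$ and $(G\setminus e)^u$ is not $2$-edge-connected, then $e$ is a twinless strong bridge --- possibly also a strong bridge. Equivalently, the backward direction becomes correct, and your Theorem~\ref{theorem:TwinlessCharacterization} argument goes through verbatim, once ``$e$ is not a strong bridge'' is added to the right-hand side of the equivalence; since strong bridges are detected separately in linear time, the overall algorithm is unaffected. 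But as a proof of the lemma as literally stated, your proposal (like the paper's) cannot be completed.
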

\begin{proof}
Let $e=(x,y)$ be an edge in a twinless strongly connected digraph $G$ which is a twinless strong bridge but not a strong bridge. Theorem~\ref{theorem:TwinlessCharacterization} implies, that the removal of this edge leaves us with a graph $H=G \setminus e$ whose underlying undirected graph $H^u$ is not $2$-edge-connected. Now, since the initial graph is twinless strongly connected, its underlying undirected graph $G^u$ is $2$-edge-connected; therefore the twin $(y,x)$ of $e$ is not an edge of $G$ (otherwise, the removal of $e$ would leave the underlying graph unchanged). The converse is an immediate consequence of Theorem~\ref{theorem:TwinlessCharacterization}.
\end{proof}

This suggests the following algorithm for the computation of all twinless strong bridges. First, we mark all edges that are strong bridges. Then we find all edges in the underlying graph which belong to a cut-pair, and mark those that correspond to an edge in the initial graph whose twin is missing from it. An algorithm which computes, in linear time, all the edges which belong to a cut-pair in a $2$-edge-connected undirected graph is given by Tsin~\cite{TSIN:3EC}.

In Appendix \ref{section:finding_cut_pairs} we apply the framework that we develop in Section \ref{sec:dfs}, in order to provide an alternative linear-time algorithm for computing all the edges that belong to a cut-pair in a $2$-edge-connected undirected graph.
Our algorithm also counts, for every edge $e$, all edges $e'$ such that $\{e,e'\}$ is a cut-pair. This is useful for counting the TSCCs after the removal of a twinless strong bridge that is not a strong bridge, in a twinless strongly connected digraph (see Lemma \ref{lemma:counting_tsccs_edge}).
Furthermore, we describe how a minor extension of our algorithm can efficiently answer queries of the form ``report all edges that form a cut-pair with $e$''.

\section{Computing twinless strong articulation points}
\label{sec:twinless-sap}

It is an immediate consequence of Theorem~\ref{theorem:TwinlessCharacterization} that a twinless strong articulation point in a twinless strongly connected digraph $G$ is either $(1)$ a strong articulation point or $(2)$ a vertex whose removal destroys the $2$-edge connectivity in the underlying undirected graph $G^u$.
Since all strong articulation points can be computed in linear time~\cite{Italiano2012}, it remains to find all vertices of type $(2)$.
Note that such a vertex $x$ either $(a)$ entirely destroys the connectivity of the underlying graph $G^{u}$ with its removal, or $(b)$, upon removal, it leaves us with a graph $G^{u} \setminus x$ that is connected but not $2$-edge-connected.
Clearly, the set of vertices with property $(a)$ are a subset of the set of strong articulation points.
Therefore, it suffices to find all vertices with property $(b)$.
To that end, we process each $2$-vertex-connected component of $G^{u}$ separately, as the next lemma suggests.

\begin{lemma}
\label{lemma:2vcc}
Let $H$ be a $2$-edge-connected undirected graph. Let $v$ be a vertex that is not an articulation point, and let $C$ be its $2$-vertex-connected component (2VCC). For any edge $e$, $H \setminus \{v,e\}$ is not connected if and only if $e$ belongs to $C$ and $C \setminus \{v,e\}$ is not connected.
\end{lemma}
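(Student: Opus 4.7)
The plan is to prove both directions using the block structure of $H$. A useful preliminary observation is that, since $H$ is $2$-edge-connected, it has no bridges, so every block of $H$ has at least three vertices and is $2$-vertex-connected, hence also $2$-edge-connected. Moreover, since $v$ is not an articulation point, $v$ lies in a unique block of $H$, namely $C$, and $H \setminus v$ is connected. Thus $H \setminus \{v,e\}$ is disconnected if and only if $e$ is a bridge of $H \setminus v$, and I would recast the entire proof in this reformulation.

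For the forward direction I would argue the contrapositive in two short steps. If $e$ lies in a block $B \neq C$, then $v \notin B$ (otherwise $v$ would lie in two distinct blocks and would be an articulation point), so $B \subseteq H \setminus v$; since $B$ is $2$-edge-connected, $B \setminus e$ is connected, and any path through $e$ in $H \setminus v$ can be rerouted through $B \setminus e$, showing $e$ is not a bridge of $H \setminus v$. Hence $e$ must lie in $C$. Similarly, if $C \setminus \{v,e\}$ is connected, then the endpoints of $e$ are joined by a path in $C \setminus \{v,e\} \subseteq H \setminus \{v,e\}$, and again $e$ is not a bridge of $H \setminus v$.

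For the backward direction, assume $e \in C$ and that $C \setminus \{v,e\}$ splits into two components $A_1$ and $A_2$. I would decompose $H$ via its block-cut tree as $C$ together with ``side graphs'' $H_i$, one for each articulation point $a_i$ of $H$ that lies in $C$; here $H_i$ is the union of all blocks of $H$ in the subtree of the block-cut tree hanging off $a_i$ away from $C$. Standard properties of the block-cut tree give that each $H_i$ is connected, that $H_i \cap C = \{a_i\}$, that distinct $H_i, H_j$ are vertex-disjoint outside their shared articulation points, and that the edges of $H$ are partitioned among $C$ and the $H_i$. Since $v$ is not an articulation point, no $H_i$ contains $v$, and each $a_i$ lies in $C \setminus v$, hence in $A_1$ or in $A_2$. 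Consequently, $H \setminus \{v,e\}$ is the union of the two sets $A_1 \cup \bigcup_{a_i \in A_1} H_i$ and $A_2 \cup \bigcup_{a_i \in A_2} H_i$; these are vertex-disjoint and joined by no edge, because any edge with both endpoints in $C$ must be an edge of $C$ (using that two distinct blocks share at most one vertex) and therefore cannot cross $A_1, A_2$ except via $e$ itself, while every other edge of $H$ lies inside a single $H_i$. Hence $H \setminus \{v,e\}$ is disconnected.

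The step I expect to require the most care is the bookkeeping in the backward direction: one must invoke the fact that two distinct blocks share at most one vertex in order to rule out ``shortcut'' edges that would reconnect $A_1$ and $A_2$ either directly inside $C$ or indirectly through some side graph $H_i$. The remaining steps are immediate consequences of the observation that, in a $2$-edge-connected graph, every block is $2$-edge-connected.
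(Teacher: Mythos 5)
Your proof is correct and follows essentially the same route as the paper: the forward direction rests on the facts that every block of a bridgeless graph is $2$-edge-connected and that $v$ lies in a unique block, and the backward direction rests on the tree structure of the blocks. The only differences are cosmetic — you organize the backward direction as an explicit partition of $H \setminus \{v,e\}$ into two edge-disjoint pieces rather than the paper's path-following contradiction, and you explicitly verify the second conjunct of the forward direction (that $C \setminus \{v,e\}$ is disconnected), which the paper leaves implicit.
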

\begin{proof}
\label{proof:vcc}
($\Rightarrow$) Since $H$ is $2$-edge-connected, it contains no bridges. Therefore, every 2VCC of $H$ contains at least three vertices, and thus it is a $2$-vertex-connected subgraph of $G$. Now, let $e$ be an edge such that the graph $H \setminus \{v,e\}$ is not connected, and let $C'$ be the 2VCC of $H$ that contains $e$. 
Suppose, for contradiction, that $C' \ne C$. Since $C'$ is $2$-vertex-connected, is must also be $2$-edge-connected, so $e$ is not a bridge in $C'$.
Moreover, $v$ is not an articulation point, so it is contained in only one 2VCC of $H$.
Hence, $C' \setminus \{v\}=C'$. This means that $e$ is not a bridge in $C' \setminus \{v\}$, and therefore it is not a bridge in $H\setminus\{v\}$.\\
($\Leftarrow$) 
Recall the block graph representation of $H$~\cite{graph_theory:diestel}:
Let $T$ be the graph whose vertices are the 2VCCs and the articulation points of $H$, and which contains an edge $e$ if and only if $e$ connects a 2VCC $C'$ with an articulation point $x \in C'$; then $T$ is a tree. Now, suppose that there exists an edge $e=(x,y)$ in $C$ such that $C \setminus \{v,e\}$ is not connected, but $G \setminus \{v,e\}$ is connected. This means that there exists a simple path $P$ in $G \setminus \{v,e\}$ connecting $x$ and $y$. Since $x$ and $y$ are not connected in $C \setminus \{v,e\}$, $P$ must contain vertices from $G \setminus C$. So let $z$ be the first vertex in $P$ such that $z \in C$ but its successor in $P$ is not (such a vertex exists, since $x \in C$). Since $z$ is in $C$ and has a neighbor that belongs to a different 2VCC, it is an articulation point. Now let $w$ be the first vertex after $z$ in $P$ such that $w \in C$ (such a vertex exists, since $y \in C$). Due to the tree structure of the 2VCCs of $H$, we conclude that $w=z$. (In other words, when a path leaves a 2VCC through an articulation point, in order to return to this 2VCC it must pass again through the same articulation point.) But this contradicts the simplicity of $P$.
\end{proof}

So, in order to find all twinless strong articulation points, it is sufficient to solve the following problem:
\emph{Given a $2$-vertex-connected undirected graph $G$, find all vertices $v$ for which there exists an edge $e$ such that 
$G \setminus \{v,e\}$ is not connected.}
Next, we describe a linear-time algorithm for this problem.
Our algorithm utilizes properties of depth-first search (DFS), which are reminiscent of the seminal algorithm of Hopcroft and Tarjan for computing the triconnected components of a graph~\cite{3-connectivity:ht}.

Formally, our main technical contribution is summarized in the following theorem:
\begin{theorem}
\label{theorem:cut-pairs}
Let $G$ be an undirected graph. There is a linear time algorithm that computes, for all vertices $v$ that are not articulation points of $G$, the number of edges $e$ such that $G \setminus \{v,e\}$ is not connected.
\end{theorem}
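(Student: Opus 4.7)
The plan is to first reduce to the 2-vertex-connected case and then solve the reduced problem by a single DFS augmented with Hopcroft--Tarjan-style low-point machinery. Lemma~\ref{lemma:2vcc} tells us that, for each non-articulation vertex $v$, the edges $e$ for which $G\setminus\{v,e\}$ is disconnected lie in the 2VCC $C$ containing $v$ and are exactly the edges $e\in E(C)$ for which $C\setminus\{v,e\}$ is disconnected. So I would compute the block-cut decomposition of $G$ in linear time and then process each 2VCC independently; summing over 2VCCs stays linear since $\sum_C(|V(C)|+|E(C)|)=O(n+m)$.

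Inside a 2-vertex-connected graph $H$, because $H\setminus v$ is connected, the condition ``$H\setminus\{v,e\}$ is disconnected'' is equivalent to ``$e$ is a bridge of $H\setminus v$''. The core task is therefore: for each $v\in V(H)$, count the bridges of $H\setminus v$. I would run a DFS of $H$ from an arbitrary root, recording preorder numbers, the standard Hopcroft--Tarjan low-points $\mathrm{low}_1$ and $\mathrm{low}_2$, a ``high'' value (the deepest ancestor reached by a back edge emanating from $T(u)$), and the counter $b(u)$ of back edges with exactly one endpoint in $T(u)$, all in one postorder sweep. With these quantities in hand, I would use the following structural observations. (i) A tree edge $(u,p(u))$ with $p(u)\ne v$ is a bridge of $H\setminus v$ iff $u$ is a strict descendant of $v$ and every back edge counted by $b(u)$ has its outside endpoint equal to $v$; this is decidable in $O(1)$ per $u$ using $\mathrm{low}_1(u)$, $\mathrm{low}_2(u)$, the ``high'' value, and $b(u)$, and it identifies at most one candidate $v$. (ii) A back edge $(x,y)$ is a bridge of $H\setminus v$ iff $v$ lies strictly between $y$ and $x$ on the tree path and every other back edge whose fundamental cycle meets the segment of that path through $v$ also uses $v$; to pin down the (at most one) candidate $v$, I would group the back edges into Hopcroft--Tarjan chains and compare each back edge against the nesting of its neighbours in the chain. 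Cases (i) and (ii) then yield an $O(1)$-per-edge increment to a single vertex counter, so the aggregation is linear.

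The hard part will be justifying (ii). A back edge typically lies in many cycles of $H$, so I must show that the existence of a cycle through $(x,y)$ avoiding $v$ is captured exactly by the presence of a ``parallel'' back edge (in the sense of the chain decomposition) whose fundamental cycle avoids $v$. This is analogous to the separation-pair analysis in Hopcroft--Tarjan's triconnectivity algorithm, and I expect the argument to proceed by case analysis on the relative DFS positions of $v$, $x$, $y$ and the potential bypassing back edges, followed by a cycle-space argument that every $v$-avoiding cycle through $(x,y)$ can be replaced by one built from fundamental cycles of such bypassing back edges. Once (ii) is proven, the algorithm is a standard postorder DFS with constant-time bookkeeping, yielding the claimed $O(n+m)$ running time.
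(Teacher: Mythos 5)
Your reduction to 2VCCs via Lemma~\ref{lemma:2vcc} and the reformulation of the task as ``count the bridges of $H\setminus v$ for each $v$'' match the paper's setup, but the counting scheme that follows contains a fatal gap: you assume each edge $e$ determines at most one candidate vertex $v$, so that a single $O(1)$ increment per edge suffices for the aggregation. This is false, and the paper's own example shows why: in a cycle on $n$ vertices every edge forms a vertex--edge cut-pair with all $n-2$ vertices not incident to it, so the total number of cut-pairs can be $\Theta(n^2)$, and no scheme that performs one counter increment per edge can produce the correct values of $\mathit{count}(v)$. A linear-time algorithm must count most pairs \emph{implicitly, in batches}; this is precisely why the paper's Algorithms~\ref{algorithm:Mu>v} and~\ref{algorithm:high(u)<v} sweep the sorted lists $M^{-1}(m)$ and $M_p^{-1}(m)$ and add a running quantity $\mathit{n\_edges}$ to $\mathit{count}[p(c)]$ for whole ranges of vertices $c$ at once, and why the back-edge case is counted per vertex $v$ (as the number of children $c$ with $\mathit{b\_count}(c)=1$) rather than per edge.

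The characterizations themselves are also incorrect or incomplete. For a tree edge $(u,p(u))$, Lemma~\ref{lemma:pos_of_e} shows the partner vertex $v$ can lie either below the edge ($u$ a proper descendant of $v$) or \emph{above} it ($u$ a proper ancestor of $v$, the case governed by $M(u)$ being a descendant of $v$ and split further in Propositions~\ref{proposition:MuEqv} and~\ref{proposition:Mu>v}); your case (i) silently drops this second possibility. Even in the descendant case, the condition ``every back edge leaving $T(u)$ ends at $v$'' is sufficient but not necessary: Proposition~\ref{proposition:high(u)=v} admits $\mathit{low}(u)<v$ provided $u\leq M_p(c)$, i.e.\ $T(u)$ may reach proper ancestors of $v$ as long as every back-edge escaping $T(c)$ above $v$ originates inside $T(u)$, and Proposition~\ref{proposition:high(u)<v} handles $\mathit{high}(u)<v$ separately. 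Your case (ii) is close in spirit to the characterization in Section~\ref{section:e_is_a_back_edge} (the back edge must be the unique one from $T(c)$ to a proper ancestor of $v$), but you leave its proof as a hope, and again a single back edge can pair with $\Theta(n)$ vertices. To complete the argument you would need the position lemma, the case split on $M(u)$ and $\mathit{high}(u)$, and the batched counting over the $M^{-1}$, $M_p^{-1}$ and $\mathit{high}^{-1}$ lists --- none of which is present in the proposal.
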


Then, Theorem~\ref{theorem:cut-pairs} implies the following results:

\begin{corollary}
\label{corollary:twinless-SAP}
Let $G$ be a twinless strongly connected digraph. There is a linear time algorithm that finds all the twinless strong articulation points of $G$. Moreover, for all twinless strong articulation points $v$ that are not strong articulation points of $G$, the algorithm computes (in linear time) the number of TSCCs in $G \setminus v$.
\end{corollary}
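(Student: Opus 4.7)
The plan is to combine three ingredients: the linear-time algorithm of Italiano et al.~\cite{Italiano2012} for strong articulation points, Raghavan's characterization (Theorem~\ref{theorem:TwinlessCharacterization}), and Theorem~\ref{theorem:cut-pairs} applied to the underlying undirected graph $G^u$.

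First, I would compute all strong articulation points of $G$ in linear time via~\cite{Italiano2012}, form $G^u$, and apply Theorem~\ref{theorem:cut-pairs} to obtain, for every vertex $v$ that is not an articulation point of $G^u$, the number $N(v)$ of edges $e$ for which $G^u \setminus \{v,e\}$ is disconnected. As observed at the start of Section~\ref{sec:twinless-sap}, every twinless strong articulation point of $G$ is either a strong articulation point (type (1)) or a vertex whose removal destroys the $2$-edge-connectivity of $G^u$ (type (2)); moreover, every articulation point of $G^u$ is itself a strong articulation point of $G$, since a disconnected $G^u \setminus v$ precludes $G \setminus v$ from being strongly connected. Consequently, the set of twinless strong articulation points is exactly the union of the strong articulation points with the non-articulation-point vertices $v$ satisfying $N(v) > 0$, and all of this is produced in linear time.

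To count TSCCs after removing a twinless strong articulation point $v$ that is not a strong articulation point, I would argue as follows. For such a $v$, $G \setminus v$ is strongly connected and $v$ is not an articulation point of $G^u$, so $(G \setminus v)^u = G^u \setminus v$ is connected. By Theorem~\ref{theorem:TwinlessCharacterization}, the number of TSCCs of $G \setminus v$ equals the number of $2$-edge-connected components of $G^u \setminus v$. The bridges of $G^u \setminus v$ are precisely the edges $e$ for which $G^u \setminus \{v,e\}$ is disconnected, of which there are $N(v)$. Since the bridge tree of any connected graph has the $2$-edge-connected components as its vertices and the bridges as its edges, the number of $2$-edge-connected components equals the number of bridges plus one, so $G \setminus v$ has $N(v) + 1$ TSCCs.

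The entire conceptual weight sits in Theorem~\ref{theorem:cut-pairs}; once that result is available, the rest of the argument is a bookkeeping exercise chaining Theorem~\ref{theorem:TwinlessCharacterization} with the bridge-tree identity. The only subtlety I would double-check is that articulation points of $G^u$ never interfere with the counting step, which is automatic because they already appear among the strong articulation points and thus fall outside the ``not a strong articulation point'' regime for which the TSCC count is required.
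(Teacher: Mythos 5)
Your proposal is correct and follows essentially the same route as the paper: strong articulation points via Italiano et al., Raghavan's characterization to reduce type-(2) vertices to vertex--edge cut-pairs in $G^u$ (with articulation points of $G^u$ absorbed into the strong articulation points), and the bridge-tree identity giving $N(v)+1$ TSCCs, which is exactly the paper's Lemma~\ref{lemma:counting_tsccs}. The only cosmetic difference is that you invoke Theorem~\ref{theorem:cut-pairs} on $G^u$ globally, whereas the paper phrases the count per biconnected component and uses Lemma~\ref{lemma:2vcc} to show the two counts agree --- a reduction already built into the statement of Theorem~\ref{theorem:cut-pairs}.
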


\begin{corollary}
\label{corollary:cut-pairs}
Let $G$ be a biconnected undirected graph. After linear-time preprocessing, we can answer queries of the form: Given a vertex $v$ of $G$, report all the edges in the set $C(v) = \{ e \in E(G)  \mid  G \setminus \{v, e\}$ {is not connected} $\}$, in $O(|C(v)|)$ time.
\end{corollary}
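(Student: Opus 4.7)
The plan is to augment the preprocessing of Theorem~\ref{theorem:cut-pairs} with a compact structural encoding of each set $C(v)$, and then unfold this encoding on query. A naive strategy — storing $C(v)$ explicitly as a list for each $v$ — fails because $\sum_v |C(v)|$ can be quadratic. For instance, in a simple cycle on $n$ vertices, $|C(v)| = n-2$ for every $v$, giving a total of $\Theta(n^2)$. Thus the preprocessing must produce a representation of total size $O(n+m)$ from which each $C(v)$ can be enumerated in $O(|C(v)|)$ time.

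First, I would revisit the DFS-based framework of Section~\ref{sec:dfs} used to establish Theorem~\ref{theorem:cut-pairs} and extract the explicit structural description of $C(v)$ that its analysis produces. The expectation is that every $e \in C(v)$ can be attributed to one of a constant number of \emph{generators} attached to $v$, of two kinds: (i) a segment of a DFS tree-path, specified by its two endpoints (traversable edge-by-edge via parent pointers), and (ii) a bundle of back edges sharing a common ancestor/descendant range, stored once as a linked list and shared among all vertices that reference it. During the DFS sweep, each such generator object is created at most once, and each vertex stores $O(1)$ pointers into the collection of generators whose union equals $C(v)$. This keeps the total preprocessing storage linear.

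To answer a query on $v$, we walk its $O(1)$ pointer list; for each tree-path segment we march along $T$ from one endpoint to the other, and for each back-edge bundle we traverse the corresponding linked list. Every edge of $C(v)$ is produced exactly once, so the total query time is $O(|C(v)|)$. This is essentially the vertex analogue of the extension for edges promised at the end of Section~\ref{sec:twinless-bridges}, where the same kind of encoding is used to answer ``report all edges that form a cut-pair with $e$'' in linear output-sensitive time.

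The main obstacle is to confirm that the characterization of $C(v)$ arising from the proof of Theorem~\ref{theorem:cut-pairs} really does decompose into a bounded number of generators of the above form, and that the union of these generators over all vertices has total size $O(n+m)$. This reduces to a structural claim about how the low/high-style invariants of the DFS framework interact at each vertex with vertical segments of the DFS tree and with the back edges crossing them; verifying this claim — and making sure the generators can be produced on-the-fly during the single linear-time DFS sweep, without ever materialising individual cut-pairs — is where the technical content of the corollary lies.
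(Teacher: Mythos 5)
Your high-level strategy---do not materialise $C(v)$ (whose total size over all $v$ can be quadratic, as your cycle example correctly shows), but store for each $v$ a small collection of implicit ``generators'' that are unfolded in output-sensitive time at query---is exactly the strategy the paper follows. The problem is that your write-up defers precisely the part that constitutes the proof: the decomposition of $C(v)$ into generators is the content of the five-case analysis of Sections~\ref{section:e_is_a_back_edge} through~\ref{section:high(u)<v} ($e$ a back edge; $e=(u,p(u))$ with $u$ a proper ancestor of $v$ and $M(u)=v$, or $M(u)$ a proper descendant of $v$; $e=(u,p(u))$ with $u\in T(v)$ and $\mathit{high}(u)=v$, or $\mathit{high}(u)<v$), and you state explicitly that you have not verified the structural claim on which everything rests. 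As it stands this is a plan, not a proof.

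Moreover, the concrete generator forms you posit do not match what the case analysis yields, and one of them would fail. First, the number of generators attached to $v$ is not $O(1)$ but $O(1+\#\{\text{children of } v\})$: in the two ``indirect'' cases (Sections~\ref{section:M(u)>v} and~\ref{section:high(u)<v}) the paper stores, for each child $c$ of $v$, one starting pointer (\textit{last}, resp.\ $\mathit{max}[c]$) into the decreasingly sorted list $M^{-1}(M_p(c))$; this is still linear in total, which is what actually matters. Second, the shared lists are lists of \emph{tree} edges $(u,p(u))$ for $u$ ranging over a contiguous segment of $M^{-1}(m)$---not segments of a DFS tree path traversable edge-by-edge via parent pointers. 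The members of $M^{-1}(m)$ all lie on $T[m,r]$ but are in general not consecutive there; marching from one endpoint to the other via $p(\cdot)$ would visit intermediate vertices $w$ with $M(w)\ne m$, whose edges $(w,p(w))$ are not cut-pair partners of $v$ of this type (Propositions~\ref{proposition:Mu>v} and~\ref{proposition:high(u)<v} force $M(u)=M_p(c)$), and the time spent skipping them cannot be charged to $|C(v)|$. The correct mechanism traverses the precomputed linked list $M^{-1}(m)$ itself. Third, no ``bundles of back edges'' arise: a back edge forms a cut-pair with $v$ only when it is the \emph{unique} back edge leaving $T(c)$ for a proper ancestor of $v$, so there is at most one per child, and these---like the cut-pairs of the two explicit tree-edge cases, whose totals are bounded by $\sum_v|M^{-1}(v)|\le n$ and $\sum_v|\mathit{high}^{-1}(v)|\le n$---can simply be stored in per-vertex lists. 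The query procedures that realise all of this are spelled out at the ends of Sections~\ref{section:M(u)>v} and~\ref{section:high(u)<v}.
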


\subsection{Depth-first search, low and high points}
\label{sec:dfs}

Let $G$ be a $2$-vertex-connected graph.
We consider a DFS traversal of $G$, starting from an arbitrarily selected vertex $r$, and let $T$ be the resulting DFS tree~\cite{dfs:t}.
A vertex $u$ is an ancestor of a vertex $v$ ($v$ is a descendant of $u$) if the tree path from $r$ to $v$ contains $u$.
Thus, we consider a vertex to be an ancestor (and, consequently, a descendant) of itself.
We let $p(v)$ denote the parent of a vertex $v$ in $T$.
If $u$ is a descendant of $v$ in $T$, we denote the set of vertices of the simple tree path from $u$ to $v$ as $\mathit{T}[u,v]$. The expressions $\mathit{T}[u,v)$ and $\mathit{T}(u,v]$ have the obvious meaning (i.e., the vertex on the side of the parenthesis is excluded from the tree path).
Furthermore, we let $T(v)$ denote the subtree of $T$ rooted at vertex $v$.
We identify vertices in $G$ by their DFS number, i.e., the order in which they were discovered by the search.
Hence, $u \leq v$ means that vertex $u$ was discovered before $v$.
The edges in $E(T)$ are called tree-edges; the edges in $E(G) \setminus E(T)$ are called back-edges, as their endpoints have ancestor-descendant relation in $T$.
When we write $(v,w)$ to denote a back-edge, we always mean that $w \le v$, i.e., $v$ is an descendant of $w$ in $T$.

\begin{figure}[t!]
\begin{center}
\centerline{\includegraphics[trim={0 0 0 4.5cm}, clip=true, width=\textwidth]{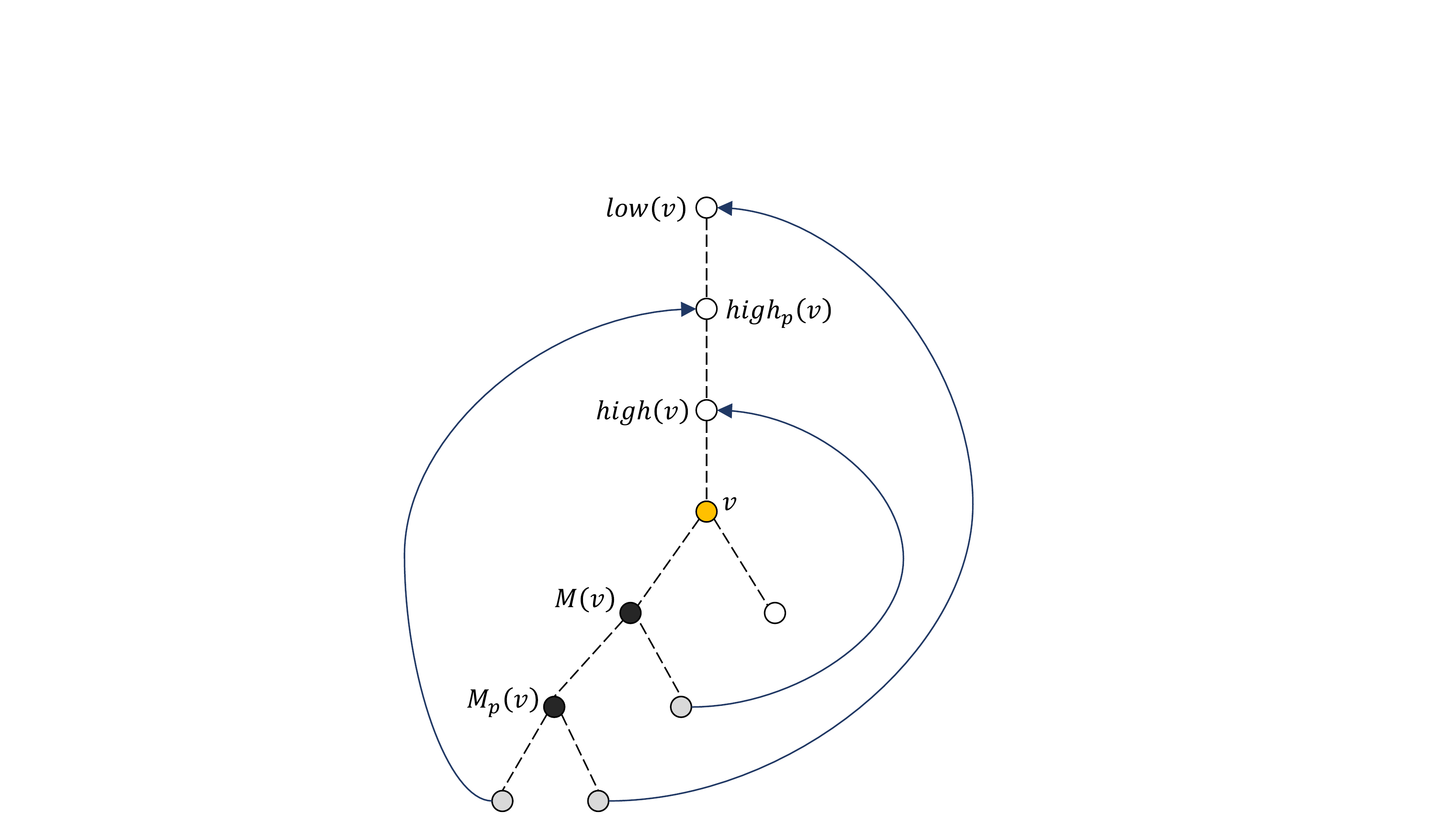}}
\caption{Concepts defined on the structure of the DFS tree that are essential to our algorithm.
Dashed lines correspond to DFS tree paths. Back-edges are shown directed from descendant to ancestor.}
\label{figure:DFS}
\end{center}
\end{figure}

Now we describe some concepts that are defined on the structure given by the DFS and are essential to our algorithm. For an illustration, see Figure~\ref{figure:DFS}. Let us note beforehand, that all these concepts are well-defined, for $G$ is $2$-vertex-connected.
Define the “low point”, $\mathit{low}(v)$, of a vertex $v\neq r$, as the minimum vertex (w.r.t. the DFS numbering) that is connected via a back-edge to a descendant of $v$, i.e.,
the minimum vertex in the set $\{u \mid $ {there is a back-edge} $ (w,u) $ {such that} $w$ {is a descendant of} $v\}$.
Define the “high point”, $\mathit{high}(v)$, of $v\neq r$, as the maximum proper ancestor of $v$
which is connected with a back-edge to a descendant of $v$.
The notion of low points plays central role in classic algorithms for computing the biconnected components~\cite{dfs:t} and the triconnected components~\cite{3-connectivity:ht} of a graph. The low points of all vertices can be computed in linear time.
Hopcroft and Tarjan~\cite{3-connectivity:ht} also use a concept of high points, which, however, is different from ours.
Since $G$ is $2$-vertex-connected, $r$ has a unique child vertex $c$ in $T$.
For any vertex $v \not= r,c$, define $\mathit{high_p}(v)$ to be the maximum proper ancestor of $\mathit{p}(v)$ which is connected with a back-edge to a descendant of $v$.
Finally, for any vertex $v\neq r$, define $\mathit{M}(v)$ to be the nearest common ancestor
of all descendants of $v$ that are connected with a back-edge to a proper ancestor of $v$.
For any vertex $v\neq r, c$, define $\mathit{M_p}(v)$ as the maximum common ancestor of all descendants of $v$ that are connected with a back-edge to a proper ancestor of $\mathit{p}(v)$.
Next, we show how to compute $\mathit{high}(v)$, $\mathit{high_p}(v)$, $M(v)$ and $M_p(v)$, for all vertices $v$, in total linear time.

\subsection{Finding all $\mathit{high}(v)$ and $\mathit{high_p}(v)$ in linear time}
\label{sec:high}

The basic idea to compute all $\mathit{high}(v)$ (for $v\ne r$) is to do the following:
We process the back-edges $(u,v)$ in decreasing order with respect to their lower end $v$.
When we process $(u,v)$, we ascend the path $T[u,v]$, and for each visited vertex $x$ such that
$\mathit{high}[x]$ is still undefined, we set
$\mathit{high}[x] \leftarrow v$. (See Algorithm~\ref{algorithm:high}.)
It should be clear that this process, which forms the basis of our linear-time algorithm, computes all $\mathit{high}(v)$, for $v\ne r$, correctly.

\SetKwComment{tcc}{/*}{*/}

\begin{algorithm}[!h]
\caption{\textsf{High}}
\label{algorithm:high}
\LinesNumbered
\DontPrintSemicolon

\lForEach{vertex $v \not= r$}{set $\mathit{high}[v] \leftarrow \mathit{null}$}
sort the back-edges $(u,v)$ in decreasing order w.r.t. to their lower end $v$\;

\ForEach{back-edge $(u,v)$}{
\While{$u>v$}{
\lIf{$\mathit{high}[u] = \mathit{null}$}{$\mathit{high}[u] \leftarrow v$}
$u \leftarrow \mathit{p}(u)$
}
}
\end{algorithm}

In order to achieve linear running time, we have to be able, when we consider a back-edge $(u,v)$, to bypass all vertices on the path $\mathit{T}[u,v]$ whose $\mathit{high}$ value has been computed.
To that end, it suffices to know, for every vertex $x$ in $\mathit{T}[u,v]$, the nearest ancestor of $x$ whose $\mathit{high}$ value is still null.
We can achieve this by applying a disjoint-set-union (DSU) structure~\cite{dsu:tarjan}.
Specifically, we maintain a forest $F$ that is a subgraph of $T$, subject to the following operations:
\begin{description}
\item[$\mathit{link}(x,y)$:] Adds the edge $(x,y)$ into the forest $F$.
\item[$\mathit{find}(x)$:] Return the root of the tree in $F$ that contains $x$.
\end{description}
Let $F_x$ denote the tree of $F$ that contains a vertex $x$.
Initially, $F$ contains no edges, so $x$ is the unique vertex in $F_x$.
In our algorithm, the link operation always adds some tree edge $(u,p(u))$ to $F$, so the invariant that $F$ is a subgraph of $T$ is maintained. This is implemented by uniting the corresponding sets of $u$ and $p(u)$ in the underlying DSU structure, and setting the root of $F_{p(u)}$ as the representative of the resulting set.
Then, $\mathit{find}(u)$ returns the root of $F_u$, which will be the nearest ancestor of $u$ in $T$ whose $\mathit{high}$ value is still null.
Algorithm~\ref{algorithm:fasthigh} gives a fast algorithm for computing
$\mathit{high}(v)$, for every vertex $v\ne r$.

\begin{algorithm}[!h]
\caption{\textsf{FastHigh}}
\label{algorithm:fasthigh}
\LinesNumbered
\DontPrintSemicolon

initialize a forest $F$ with $V(F)=V(T)$ and $E(F) = \emptyset$\;
\lForEach{vertex $v \not= r$}{set $\mathit{high}[v] \leftarrow \mathit{null}$}

sort the back-edges $(u,v)$ in decreasing order w.r.t. to their lower end $v$\;

\ForEach{back-edge $(u,v)$}{
$u \leftarrow \mathit{find}(u)$\;
\While{$u>v$}{\label{highwhile}
$\mathit{high}[u] \leftarrow v$\;
$next \leftarrow \mathit{find}(p(u))$\;
\label{line_in_highFast}
$\mathit{link}(u,p(u))$\;
$u \leftarrow next$
}
}
\end{algorithm}

The next lemma summarizes the properties of Algorithm~\ref{algorithm:fasthigh}.

\begin{lemma}
Algorithm \ref{algorithm:fasthigh} is correct. Furthermore, it will perform $n-1$ \textbf{link} and $2m-n+1$ \textbf{find} operations on a $2$-vertex-connected graph with $n$ vertices and $m$ edges.
\end{lemma}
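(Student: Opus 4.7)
My plan is to establish three claims---correctness of the computed $\mathit{high}$ values, exactly $n-1$ link operations, and the stated find count---via a single DSU invariant.

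The invariant is that at every moment of the execution and for every vertex $x$, $\mathit{find}(x)$ returns the nearest ancestor of $x$ in $T$ (including $x$ itself) whose $\mathit{high}$ value is still $\mathit{null}$. It holds trivially at initialization, and is preserved because the only links performed are of the form $\mathit{link}(u,p(u))$ on line~9, each executed in the same iteration that assigns a non-null value to $\mathit{high}[u]$ on line~7; so a vertex $u$ stops being the root of its own tree in $F$ exactly when $\mathit{high}[u]$ becomes non-null. Granted this invariant, when FastHigh processes a back-edge $(u,v)$ the while loop visits exactly those ancestors on $T[u,v)$ whose $\mathit{high}$ is still null and assigns each of them to $v$---the same assignments made by Algorithm~\ref{algorithm:high}. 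Termination of the inner loop is guaranteed because each iteration replaces $u$ by a strict ancestor. Since back-edges are processed in decreasing order of lower endpoint, the first value ever written to $\mathit{high}[x]$ is the largest $v$ such that some descendant of $x$ has a back-edge to $v$, which is precisely $\mathit{high}(x)$.

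The link count follows immediately from the invariant: every while-loop iteration pairs one link with one high-assignment, and once $u$ is linked no future $\mathit{find}$ can return $u$, so the loop body cannot re-process $u$. Hence each non-root vertex is linked at most once. Conversely, $2$-vertex-connectivity of $G$ guarantees that every non-root vertex $u$ has a back-edge from $T(u)$ to a proper ancestor of $u$, so processing that back-edge (or an earlier one with a higher lower endpoint) eventually sets $\mathit{high}[u]$ and performs the corresponding link. Thus exactly $n-1$ links are performed.

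The find count is where the careful bookkeeping lives, and is the step I expect to be the main obstacle. I would charge each find either to a back-edge (line~5, executed once per back-edge, contributing $m-n+1$ finds) or to a high-assignment (line~8, executed once per while-loop iteration, whose total across all back-edges is governed by the link-count analysis). Assembling these contributions, while making sure no find is missed or double-counted---in particular, checking that each back-edge may be charged twice because its endpoint is revisited both in the initial find and within a predecessor's inner walk---should yield the claimed total $2m - n + 1$. Once this accounting is pinned down, the lemma is complete.
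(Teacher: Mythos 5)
Your correctness and link-count arguments are essentially the paper's: the paper proves, by induction on the processed back-edges, a three-part invariant whose core is exactly your statement that $\mathit{find}(x)$ always returns the nearest ancestor of $x$ with null $\mathit{high}$ value, together with the fact that each set in the DSU forest is a subtree of $T$ whose root is its only null-$\mathit{high}$ member. Your compressed version of the preservation argument (a vertex ceases to be a root precisely when its $\mathit{high}$ value is assigned, and the merged set's representative is set to the root of $F_{p(u)}$) is sound, and your observation that the inner loop therefore performs exactly the assignments of Algorithm~\ref{algorithm:high}, each vertex receiving the largest eligible $v$ first, matches the paper's correctness conclusion. The count of $n-1$ links, one per non-root vertex, with $2$-vertex-connectivity guaranteeing that every non-root vertex is eventually reached, is also the paper's argument.

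The find count, which you explicitly leave open, is where your proposal does not close -- but note that your own partial tally is the honest one. Algorithm~\ref{algorithm:fasthigh} performs one $\mathit{find}$ per back-edge before the inner loop ($m-n+1$ in total) and one $\mathit{find}$ on line~\ref{line_in_highFast} per inner-loop iteration ($n-1$ in total, by the link count), giving exactly $m$ finds; there is no mechanism in the code by which a back-edge endpoint is ``revisited within a predecessor's inner walk'' as a separate $\mathit{find}$, so the double-charging you gesture at cannot be made to work. The figure $2m-n+1$ in the statement corresponds to charging one find to \emph{each endpoint} of every back-edge plus one to each non-root vertex, i.e.\ $2(m-n+1)+(n-1)$, and the paper's own proof text even writes $2m-n-1$; the constant is simply not tight in the source. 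Since $m\le 2m-n+1$ for a $2$-vertex-connected graph, the bound as stated is still valid as an upper estimate, and the only fact the lemma is used for downstream is that the total number of operations is $O(m+n)$ so that the Gabow--Tarjan structure yields linear time. So: rather than trying to manufacture $2m-n+1$, you should finish by stating the exact count of $m$ finds (or justify $2m-n+1$ explicitly as an overestimate); as written, the accounting step is a genuine unfinished gap, and the repair you sketch for it would not succeed.
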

\begin{proof}
\label{proof:fasthigh}
Let $B$ be the sorted list of the back-edges. (Notice that $B$ contains $m-n+1$ edges.) We will prove the theorem inductively by showing that, for every $t$ in $\{0,\dotsc,m-n\}$: \textbf{if}, after having run the algorithm for the first $t$ back-edges, we now have that, (1) for every vertex $x$, $\mathit{find}(x)$ returns the nearest ancestor of $x$ whose $hig$h value is still null, (2) for every back-edge $(u,v)$ in $B[1,t]$, $\mathit{high}[x]$ has been computed correctly for every $x$ in $T[u,v)$, and the $\mathit{high}$ value of every other vertex, which does not belong to such a set, is still null, and (3) every set that has been formed due to the $\mathit{link}$ operations that have been performed is a subtree of $T$, of whose members only its root has its $high$ value still set to null, \textbf{then}, if we run the algorithm once more for the $t+1$ back-edge, properties (1), (2) (for $t+1$), and (3) will still hold true.

For the basis of our induction, let us note that the premise of the inductive proposition for $t=0$ is trivially true: Before we have begun traversing $B$, the set containing $x$ is a singleton, $\mathit{find}(x)=x$, and $\mathit{high}[x]$ is null, for every vertex $x$. Now, suppose the premise of the inductive proposition is true for some $t$ in $\{0,\dotsc,m-n\}$, and let $(u,v)$ be the $t+1$ back-edge. Let $x_1,\dotsc,x_k$, in decreasing order, be the vertices in $T[u,v]$ whose $\mathit{high}$ value is still null. (Note that, since $B$ is sorted in decreasing order w.r.t the lower end-point of its elements, we have $x_k=v$.) We observe two facts. First, by (1), we have that $x_1=\mathit{find}(u)$, and $x_i=\mathit{find}(p(x_{i-1}))$, for $i=2,\dotsc,k$. Second, (2) implies that the correct $\mathit{high}$ value of $x_i$, for every $i=1,\dotsc,k-1$, is $v$ (although now it is still set to null). From these two facts we can see that, in order to prove that our algorithm is going to correctly compute the values $\mathit{high}[x_i]$, for $i=1,\dotsc,k-1$, and not mess with those that have already been computed, it is sufficient to show that the function $\mathit{find}(p(x))$, in line \ref{line_in_highFast}, will return, every time it is invoked, the closest ancestor of $p(x)$ whose $\mathit{high}$ value is still set to null - despite all the $\mathit{link}$ operations which might have been performed in the meantime. To see this, observe that (1) and (3) imply that, for every $i=1,\dotsc,k-1$, $x_i$ and $p(x_i)$ belong to different sets (since $x_i$, having its $\mathit{high}$ value still set to null, is the root of the set it belongs to). From this we conclude, that after linking $x_i$ with $p(x_i)$, $x_j$ and $p(x_j)$ still belong to different sets, for every $j=i+1,\dotsc,k-1$. It should be clear now that, by executing our algorithm for the $t+1$ back-edge, only the $\mathit{high}$ values of $x_1,\dotsc,x_{k-1}$ are going to be affected (and computed correctly). This shows that (2) (for $t+1$) still holds true. We also see that all the sets that have been formed due to the $\mathit{link}$ operations that have been performed are still subtrees of $T$, since every such operation is linking a vertex with its parent. Now, let $x$ be a vertex that belongs to one of the sets that have been affected by the $\mathit{link}$ operations that have been performed during the execution of the algorithm for the $t+1$ back-edge. By (1), this means that, before running the algorithm for this back-edge, $\mathit{find}(x)=x_i$, for some $i=1,\dotsc,k$. We conclude, that after running the algorithm for the $t+1$ back-edge, the closest ancestor of $x$ that has its $\mathit{high}$ value still set to null is $v$ (since, now, every vertex in $T[u,v)$ has its $\mathit{high}$ value computed). This shows that (1) still holds true. Furthermore, this also shows that every vertex in $T[x,v]$ is part of the same set. We conclude that the root of the set which contains $x$ is $v$. Thus we have shown that (3) still holds true. (We do not have to consider the vertices whose set has not been affected by the $\mathit{link}$ operations.)

Thus we have proved that, since the premise of the inductive proposition for $t=0$ is true, (2) in the conclusion of the inductive proposition for $t=m-n$ is also true. In other words, our algorithm computes correctly the $\mathit{high}$ value of every $x$ which belongs to a set of the form $T[u,v)$, for some back-edge $(u,v)$. Since the graph is $2$-vertex-connected, every vertex $x \ne r$ belongs to such a set. Furthermore, after the execution of the algorithm, precisely $n-1$ $\mathit{link}$ operations (one for every vertex $x \ne r$), and $2m-n-1$ $\mathit{find}$ operations (one for every end-point of every back-edge, and one for every vertex $x \ne r$) will have been performed.
\end{proof}

Since all the $\mathit{link}$ operations we perform are of the type $\mathit{link}(u,p(u))$, and the total number of $\mathit{link}$ and $\mathit{find}$ operations performed is $O(m+n)$, we may use the static tree DSU data structure of Gabow and Tarjan~\cite{dsu:gt} to achieve linear running time.

Finally, we note that the algorithm for computing all $\mathit{high_p}(v)$ is almost identical to Algorithm~\ref{algorithm:fasthigh}. The only difference is in line~\ref{highwhile}, where we have to replace ``\textbf{while} $u>v$'' with ``\textbf{while} $p(u)>v$''. The proof of correctness and linearity is essentially the same.

\subsection{Finding all $\mathit{M}(v)$ and $\mathit{M_p}(v)$ in linear time}
\label{sec:M}

Recall that $\mathit{M}(v)$ is the nearest common ancestor
of all descendants of $v$ that are connected with a back-edge to a proper ancestor of $v$, while
$\mathit{M_p}(v)$ is the nearest common ancestor of all descendants of $v$ that are connected with a back-edge to a proper ancestor of $\mathit{p}(v)$.

Before we describe our algorithm for the computation of $\mathit{M}(v)$ (and $\mathit{M_p}(v)$), we state a lemma that will be useful in what follows.

\begin{lemma}
\label{MvLemma}
Let $u$ and $v$ be such that $v$ is an ancestor of $u$ and $\mathit{M}(v)$ is a descendant of $u$. Then $\mathit{M}(v)$ is a descendant of $\mathit{M}(u)$.
\end{lemma}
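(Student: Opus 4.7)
The plan is to unpack the definition of $M(\cdot)$ and show a set-inclusion that forces the desired ancestor relation. Let
\[
S_v = \{\,x \mid x \text{ is a descendant of } v \text{ and some back-edge } (x,y) \text{ has } y \text{ a proper ancestor of } v\,\},
\]
and define $S_u$ analogously. By definition $M(v)$ is the nearest common ancestor (NCA) of $S_v$, and $M(u)$ is the NCA of $S_u$. The monotonicity property of NCA in a rooted tree says that if $S \subseteq S'$, then $\mathrm{NCA}(S')$ is an ancestor of $\mathrm{NCA}(S)$. So it suffices to establish $S_v \subseteq S_u$.

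To verify the inclusion, let $x \in S_v$. First, $x$ is a descendant of $M(v)$, because $M(v)$ is an ancestor of every element of $S_v$. Since $M(v)$ is, by hypothesis, a descendant of $u$, we conclude that $x$ is a descendant of $u$. Second, by membership in $S_v$, there is a back-edge $(x,y)$ with $y$ a proper ancestor of $v$; and since $v$ is itself an ancestor of $u$, $y$ is also a proper ancestor of $u$. Hence $x \in S_u$, proving $S_v \subseteq S_u$.

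Applying monotonicity of NCA then yields that $M(u) = \mathrm{NCA}(S_u)$ is an ancestor of $M(v) = \mathrm{NCA}(S_v)$, i.e., $M(v)$ is a descendant of $M(u)$, as required. The only subtlety worth checking carefully is that $S_v$ is nonempty (so that $M(v)$ is well-defined and lies in the tree), which is guaranteed by $2$-vertex-connectedness of $G$ (as already remarked when these quantities were defined), and that the NCA monotonicity is applied to a nonempty subset; both conditions are automatic here, so no real obstacle arises.
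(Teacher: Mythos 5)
Your proof is correct and follows essentially the same route as the paper's: you show that every witness vertex for $M(v)$ is also a witness vertex for $M(u)$ (i.e., $S_v \subseteq S_u$) and then conclude via the standard monotonicity of nearest common ancestors, which is exactly the paper's argument phrased slightly more formally. No issues.
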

\begin{proof}
Let $e=(x,y)$ be a back-edge with $x$ a descendant of $v$ and $y$ a proper ancestor of $v$. Since $v$ is an ancestor of $u$, $y$ is a proper ancestor of $u$. And since $\mathit{M}(v)$ is a descendant of $u$, $x$ is a descendant of $u$. This shows that $\mathit{M}(u)$ is an ancestor of $x$. Since $e$ was chosen arbitrarily, we conclude that $\mathit{M}(u)$ is an ancestor of $\mathit{M}(v)$.
\end{proof}

\begin{remark}
We note that the lemma still holds if we replace $\mathit{M}(v)$ with $\mathit{M_p}(v)$.
\end{remark}

Our algorithm for the computation of $\mathit{M}(v)$ works recursively on the children of $v$. So, let $v$ be a vertex (different from $r$). We define $\mathit{l}(v)= \mathit{min}\{\{u \mid $ there exists a back-edge $(v,u) \}\cup\{v\}\}$. Now, if $\mathit{l}(v)<v$, we have $\mathit{M}(v)=v$. (This is always the case when $v$ is a leaf, since the graph is $2$-vertex-connected.) Furthermore, if there exist two children $c$, $c'$ of $v$ such that $\mathit{low}(c)<v$ and $\mathit{low}(c')<v$, then, again, $\mathit{M}(v)=v$. The difficulty arises when there is only one child $c$ of $v$ with the property $\mathit{low}(c)<v$ (one such child of $v$ must of necessity exist, since the graph is $2$-vertex-connected), in which case $\mathit{M}(v)$ is a descendant of $c$, and, therefore, by Lemma~\ref{MvLemma}, $\mathit{M}(v)$ is a descendant of $\mathit{M}(c)$. In this case, we repeat the same process in $\mathit{M}(c)$: we test whether $\mathit{l}(\mathit{M}(c))<v$ or whether there exists only one child $d$ of $\mathit{M}(c)$ such that $\mathit{low}(d)<v$, in which case we repeat the same process in $\mathit{M}(d)$, and so on.

Now, we claim that a careful implementation of the above procedure yields a linear-time algorithm for the computation of $\mathit{M}(v)$, for all vertices $v\ne r$. To that end, it suffices to store, for every vertex $v$ that is not a leaf of $T$, two pointers, $\mathit{L}(v)$ and $\mathit{R}(v)$, on the list of the children of $v$. Initially, $\mathit{L}(v)$ points to the first child $c$ of $v$ that has $\mathit{low}(c)<v$, and $\mathit{R}(v)$ points to the last child $c'$ of $v$ that has $\mathit{low}(c')<v$. Our algorithm works in a bottom-up fashion. Provided we have computed $\mathit{M}(u)$ for every descendant $u$ of $v$, we execute Procedure~\ref{proc:findM}$(v)$.

\ignore{
\begin{algorithm}[!h]
\caption{\textsf{Mv}}
\label{algorithm:Mv}
\LinesNumbered
\DontPrintSemicolon
\lIf{$l(v)<v$}{\KwRet $v$}
\lIf{$L[v] \ne R[v]$}{\KwRet $v$}
$m \leftarrow M[L[v]]$\;
\While{$M(v) = \emptyset$}{
  \lIf{$l(m)<v$}{\KwRet $m$}
  \lWhile{$\mathit{low}(L[m])\geq v$}{$L[m] \leftarrow$ next child of $m$}
  \label{alg_Mv_line_1}
  \lWhile{$\mathit{low}(R[m])\geq v$}{$R[m] \leftarrow$ previous child of $m$}
  \label{alg_Mv_line_2}
  \lIf{$L[m] \ne R[m]$}{\KwRet $m$}
  $m \leftarrow M[L[m]]$\;
}
\end{algorithm}
}

\begin{procedure}[!h]
\LinesNumbered
\DontPrintSemicolon

	\caption{FindM($v$)}
	\label{proc:findM}

\lIf{$l(v)<v$}{\KwRet $v$}
\lIf{$L[v] \ne R[v]$}{\KwRet $v$}
$m \leftarrow M[L[v]]$\;
\While{$M[v] = \emptyset$}{
  \lIf{$l(m)<v$}{\KwRet $m$}
  \lWhile{$\mathit{low}(L[m])\geq v$}{$L[m] \leftarrow$ next child of $m$}
  \label{alg_Mv_line_1}
  \lWhile{$\mathit{low}(R[m])\geq v$}{$R[m] \leftarrow$ previous child of $m$}
  \label{alg_Mv_line_2}
  \lIf{$L[m] \ne R[m]$}{\KwRet $m$}
  $m \leftarrow M[L[m]]$\;
}
\end{procedure}

\begin{lemma}
\label{lemma:findM}
By executing Procedure~\ref{proc:findM}$(v)$, for all vertices $v \not= r$,  in bottom-up fashion of $T$,
we can compute all $\mathit{M}(v)$ in linear-time.
\end{lemma}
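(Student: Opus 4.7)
The plan is to prove correctness by induction on the bottom-up processing order, and then establish linear running time by amortized analysis.

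\emph{Correctness.} Suppose inductively that $M[u]$ has been correctly computed for every proper descendant $u$ of $v$. I will verify that FindM$(v)$ returns $\mathit{M}(v)$ by analyzing its three branches. If $\mathit{l}(v)<v$, then $v$ itself is a descendant of $v$ that reaches a proper ancestor via a back-edge, so $\mathit{M}(v)=v$. If $L[v]\ne R[v]$, then $v$ has at least two children whose subtrees contain descendants reaching proper ancestors of $v$, and the NCA of any such set of descendants is $v$. Otherwise there is a unique child $c$ with $\mathit{low}(c)<v$; setting $m := \mathit{M}(c)$, Lemma~\ref{MvLemma} forces $\mathit{M}(v)$ to lie inside the subtree rooted at $m$. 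The inner loop then reapplies the same trichotomy at $m$, but with the threshold ``${<}v$'' replacing ``${<}m$'': either $\mathit{l}(m)<v$ and we return $m$; or $L[m]\ne R[m]$ after the inner advances, so $m$ has at least two child-subtrees reaching above $v$ and we again return $m$; or we descend one more step to $M[L[m]]$ and repeat. The pointer invariant that $L[m],R[m]$ point to the leftmost/rightmost children $c$ of $m$ with $\mathit{low}(c)<v$ at the moment of access holds because every FindM call touching a given $m$ is issued from an ancestor of $m$, these ancestors are visited in decreasing DFS-number order under bottom-up processing, and the pointers only move inward; the inner while loops restore the invariant at each visit.

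\emph{Linearity.} The total work splits into (i) pointer movements and (ii) constant per-iteration overhead. For (i), $L[m]$ and $R[m]$ each move strictly inward, bypassing at most $\deg_T(m)-1$ children over the whole execution, so total pointer work is $O(\sum_m \deg_T(m))=O(n)$. For (ii), each iteration of the inner while loop either returns (at most one such iteration per FindM call, totalling $O(n)$) or executes $m\leftarrow M[L[m]]$, which uses the already-computed $M$-value of a strict descendant of $m$. The crucial observation is that such a step ``jumps'' directly to the cached result of an earlier FindM call, bypassing all the intermediate chain vertices produced during that earlier computation. A potential-function argument, tracking the remaining candidates available as future chain targets, then yields an aggregate $O(n)$ bound on the non-returning, non-pointer iterations, completing the linear-time analysis.

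\emph{Main obstacle.} The delicate step is exactly the amortization of the non-pointer, non-returning chain iterations. The chain length of a single FindM$(v)$ call is bounded by the depth of $T$, and a naive summation would give only $O(n\cdot\text{depth})$, which is not linear. Making the aggregate argument tight requires exploiting the cached $M$-values to show that each vertex serves as a non-terminal chain element in only a bounded total number of FindM calls across the whole execution; this mirrors the amortized reasoning underlying Hopcroft--Tarjan's triconnected-components algorithm and is the main combinatorial content of the proof.
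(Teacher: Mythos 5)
Your correctness argument is sound and essentially matches the paper's: the inductive trichotomy at each chain vertex $m$, the appeal to Lemma~\ref{MvLemma} to justify descending to $M[L[m]]$, and the observation that the thresholds seen by a fixed $m$ decrease over time (they come from successively higher ancestors of $m$ under bottom-up processing), so a child skipped at threshold $x$ has $\mathit{low}\ge x>v$ and is correctly skipped at every later threshold $v$ as well. This is a slightly cleaner way to phrase the pointer invariant than the paper's (which argues directly that all relevant back-edges originate between $L[m]$ and $R[m]$), but it is the same mechanism.

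The linearity half, however, has a genuine gap. You correctly isolate the only dangerous cost --- the iterations that advance the chain via $m\leftarrow M[L[m]]$ without returning --- and you correctly state what must be shown, namely that each vertex occurs as a non-terminal element of such a descent chain only a bounded number of times over the whole execution. But you then assert that ``a potential-function argument \ldots yields an aggregate $O(n)$ bound'' without exhibiting the potential or the argument, and your closing paragraph concedes that this is ``the main combinatorial content of the proof.'' That content is exactly what is missing; without it your analysis establishes only the $O(n\cdot\mathrm{depth}(T))$ bound you yourself flag as insufficient. The paper closes this gap by proving the sharp claim that two descent chains can meet only in their last elements, i.e., an intermediary chain element $m$ belongs to exactly one chain $S(v)$: take $v$ minimal (deepest) such that $m$ is intermediary in $S(v)$, and suppose $m\in S(u)$ for the closest proper ancestor $u$ of $v$; using Lemma~\ref{MvLemma} one shows that every element of $S(u)$ is either a proper ancestor of $v$ or a descendant of $M(v)$, and since $M(v)$ is a \emph{proper} descendant of $m$ (because $m$ is intermediary in $S(v)$), the vertex $m$ lies in neither category --- a contradiction. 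Some such argument (or a concretely specified potential function doing equivalent work) is required before the lemma can be considered proved.
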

\begin{proof}
\label{proof:findMlemma}
To prove correctness, it is sufficient show that, for the computation of $\mathit{M}(v)$, if $\mathit{M}(v)$ lies in $\mathit{T}(m)$, for some descendant $m$ of $v$, and $\mathit{l}(m)\geq v$, then every back-edge that starts from $\mathit{T}(v)$ and ends in a proper ancestor of $v$ has its starting-point in a subtree of the form $\mathit{T}(c)$, where $c$ is a child of $m$ between $\mathit{L}[m]$ and $\mathit{R}[m]$. It’s easy to see this inductively: that is, let $v$ be a vertex, all of whose descendants had this property as the algorithm was running. Now, suppose that $m$ is a descendant of $v$ such that $\mathit{M}(v)$ lies in $\mathit{T}(m)$. If $\mathit{L}[m]$ points to the first child of $m$ and $\mathit{R}[m]$ to the last child of $m$, then there is nothing to prove. But if one of these two pointers was moved (in lines \ref{alg_Mv_line_1} or \ref{alg_Mv_line_2}) during the execution of the algorithm, this means, thanks to the inductive hypothesis, that for an ancestor $x$ of $m$ which is also a proper descendant of $v$ it is true that every back-edge that starts from $\mathit{T}(x)$ and ends in a proper ancestor of $x$ has its starting-point in a subtree of the form $\mathit{T}(c)$, for some child $c$ of $m$ between $\mathit{L}[m]$ and $\mathit{R}[m]$. Now we see why every back-edge that starts from $\mathit{T}(v)$ and ends in a proper ancestor of $v$ has its starting-point in a subtree of the same form: for if this is not the case, then there exists a back-edge that starts from $\mathit{T}(d)$, for some child $d$ of $m$ which is not between $\mathit{L}[m]$ and $\mathit{R}[m]$, and ends in a proper ancestor of $v$, and this is also a back-edge that starts from $\mathit{T}(x)$ and ends in a proper ancestor of $x$ - a contradiction.

Now, to prove linearity, we note that the only way our algorithm could be making an excessive amount of steps, would be by visiting some vertices a lot of times, when it recursively descends to the descendants of some vertices, in order to compute their $M$ value. So we define, for every vertex $v$, the (possibly empty) list $\mathit{S}(v)=\{m_1,\dotsc,m_{k_v}\}$ of the proper descendants of $v$ to which the algorithm had to descend in order to compute $\mathit{M}(v)$, sorted by the order of visit (i.e. ordered increasingly). (Notice that $m_1=M(c)$, for a child $c$ of $v$, $m_2=M(d)$, for a child $d$ of $m_1$, and so on, and $M(v)=m_{k_v}$.) We will prove linearity by showing that two such distinct lists can meet only in their last element.  Equivalently, we may show that a non-last member $m$ of such a list (let us call it: an intermediary member), can appear only in that list. So, let $m$ be an intermediary member of a list $\mathit{S}(v)$, for some vertex $v$, and let $v$ be the first vertex in whose list $m$ appears as an intermediary member (that is, there is no proper descendant of $v$ in whose list $m$ appears as an intermediary member). We note, that, since $m$ is an intermediary member of $\mathit{S}(v)$, $\mathit{M}(v)$ is a proper descendant of $m$. Now, suppose that there exists a proper ancestor $u$ of $v$ such that $m$ is a member of $\mathit{S}(u)$, and let $u$ be the closest proper ancestor of $v$ that has this property. Then we have $\mathit{l}(u)=u$, and there is a unique child $c$ of $u$ with the property $\mathit{low}(c)<u$. Furthermore, $\mathit{M}(c)$ (the first member of $\mathit{S}(u)$) belongs to $T[m,c]$. But $\mathit{M}(c)$ does not belong to $T[m,v]$: for otherwise, since $c$ is an ancestor of $v$, Lemma \ref{MvLemma} implies that $\mathit{M}(c)$ is a descendant of $\mathit{M}(v)$, which is a proper descendant of $m$. We conclude, that $\mathit{M}(c)$, the first member of $\mathit{S}(u)$, is a proper ancestor of $v$. Now, continuing in this fashion, (i.e. considering the unique child $d$ of $\mathit{M}(c)$ that has the property $\mathit{low}(d)<u$, so that $\mathit{M}(d)$ is the second member of $\mathit{S}(u)$), we see that the members of $\mathit{S}(u)$ are either proper ancestors of $v$ or descendants of $\mathit{M}(v)$, and so none of them can be $m$. A contradiction.
\end{proof}

We use a similar algorithm in order to compute all $\mathit{M_p}(v)$.
The only change we have to make in Procedure~\ref{proc:findM} is to replace every comparison to $v$ with a comparison to $\mathit{p}(v)$. The proof of correctness and linearity is essentially the same.

\section{Finding all vertices that belong to a vertex-edge cut-pair}
Let $H=(V,E)$ be a $2$-vertex-connected undirected graph. For every $v$ in $V$, we define $\mathit{count}(v)$ $:=$ $\#\{e \in E \mid \{v,e\}$ is a cut-pair$\}$. We will find all vertices which belong to a vertex-edge cut-pair of $H$ by computing all $\mathit{count}(v)$. We notice that the parameter $\mathit{count}(v)$ is also useful for counting TSCC, as Lemma~\ref{lemma:counting_tsccs} suggests.
Thus, Corollary~\ref{corollary:twinless-SAP} follows.

\begin{lemma}
\label{lemma:counting_tsccs}
Let $G$ be a twinless strongly connected digraph, and let $v$ be a twinless strong articulation point of $G$ which is not a strong articulation point. Then $\mathit{count}(v)+1$ (computed in the $2$-vertex-connected component of $v$ in $G^u$) is the number of twinless connected components of $G\setminus{v}$.
\end{lemma}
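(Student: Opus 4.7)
The plan is to chain three facts: (i) for a strongly connected digraph, the TSCCs coincide with the $2$-edge-connected components of its underlying undirected graph (the observation made after Theorem~\ref{theorem:TwinlessCharacterization}); (ii) a connected undirected graph with $k$ bridges has exactly $k+1$ such components (its bridge-block tree has $k+1$ nodes and $k$ edges); and (iii) Lemma~\ref{lemma:2vcc}, which localizes the bridges of $G^{u}\setminus v$ inside the $2$-vertex-connected component $C$ of $v$.

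First I would verify that the hypotheses needed to apply (i) and (iii) are met. Since $v$ is \emph{not} a strong articulation point of $G$, the digraph $G\setminus v$ is strongly connected, so $(G\setminus v)^{u}=G^{u}\setminus v$ is connected; in particular $v$ is not an articulation point of $G^{u}$, which is precisely what Lemma~\ref{lemma:2vcc} requires. Since $v$ \emph{is} a twinless strong articulation point, $G\setminus v$ is not twinless strongly connected, so by Theorem~\ref{theorem:TwinlessCharacterization} the graph $G^{u}\setminus v$ is connected but not $2$-edge-connected.

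Next I would apply (i) to the strongly connected digraph $G\setminus v$: its TSCCs are in bijection with the $2$-edge-connected components of $G^{u}\setminus v$. By (ii), the number of these components equals $1$ plus the number of bridges of $G^{u}\setminus v$. Finally, by Lemma~\ref{lemma:2vcc}, an edge $e$ is a bridge of $G^{u}\setminus v$ (i.e.\ $G^{u}\setminus\{v,e\}$ is disconnected) if and only if $e$ lies in $C$ and $C\setminus\{v,e\}$ is disconnected; that is, if and only if $\{v,e\}$ is a vertex-edge cut-pair of $C$. The number of such edges is by definition $\mathit{count}(v)$, so the number of TSCCs of $G\setminus v$ equals $\mathit{count}(v)+1$.

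I do not expect any real obstacle here: the entire argument is a reduction to previously established results, and the only delicate point is the initial sanity check that $v$ is not an articulation point of $G^{u}$, which is needed to legitimately invoke Lemma~\ref{lemma:2vcc} with $H=G^{u}$ and to guarantee that the bridge-block tree of $G^{u}\setminus v$ is well-defined on a connected graph.
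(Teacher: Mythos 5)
Your proof is correct and follows essentially the same route as the paper: reduce via Raghavan's correspondence between TSCCs of the strongly connected digraph $G\setminus v$ and the $2$-edge-connected components of $G^{u}\setminus v$, count these as bridges plus one, and localize the bridges inside the $2$-vertex-connected component of $v$ using Lemma~\ref{lemma:2vcc}. Your explicit check that $v$ is not an articulation point of $G^{u}$ (so that Lemma~\ref{lemma:2vcc} applies and $G^{u}\setminus v$ is connected) is a welcome bit of care that the paper's proof leaves implicit.
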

\begin{proof}
Since $v$ is not a strong articulation point, $G\setminus{v}$ is strongly connected. It has been proved in \cite{Raghavan2006}, that the twinless strongly connected components of a strongly connected digraph correspond to the $2$-edge-connected components of its underlying graph. Now, the number of the $2$-edge-connected components of $G^u\setminus{v}$ equals the number of its bridges + $1$ (this is due to the tree structure of the $2$-edge-connected components of a connected graph). By Lemma \ref{lemma:2vcc}, all these bridges lie in the $2$-vertex-connected component of $v$ in $G^u$. By definition, their number is $\mathit{count}(v)$.
\end{proof}

Now, to compute all $\mathit{count}(v)$, we will work on the tree structure $T$, with root $r$, provided by a DFS on $H$. Then, if $\{v,e\}$ is a vertex-edge cut-pair, $e$ can either be a back-edge, or a tree-edge. Furthemore, in the case that $e$ is a tree-edge, we have the following:

\begin{lemma}
\label{lemma:pos_of_e}
If $\{v,e\}$ is a cut-pair such that $e$ is a tree-edge, then $e$ either lies in $T(v)$ or on the simple tree path $T[v,r]$.
\end{lemma}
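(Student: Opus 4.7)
My plan is to prove the contrapositive: if $e$ is a tree edge $(x, p(x))$ (with $x$ the lower endpoint) that lies neither in $T(v)$ nor on $T[v, r]$, then $H \setminus \{v, e\}$ remains connected, and so $\{v, e\}$ cannot be a cut-pair. The first step is to observe that these two hypotheses force $x$ and $v$ to be incomparable in the DFS tree $T$: the condition $e \notin T(v)$ rules out $x$ being a proper descendant of $v$, while $e \notin T[v, r]$ rules out $x$ being either $v$ itself or a proper ancestor of $v$. (The case $v = r$ is vacuous, since then $T(v) = T$ contains every tree edge.) Incomparability implies in particular that the subtrees $T(x)$ and $T(v)$ are vertex-disjoint.

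Next I will analyze the components of $T \setminus \{v, e\}$. They are: the subtree $T(x)$, intact since $e$ is the unique tree edge connecting it to the rest of $T$; for each child $c$ of $v$, the subtree $T(c)$, which lies entirely outside $T(x)$ by incomparability; and the ``upper'' piece $U := V(T) \setminus (V(T(x)) \cup V(T(v)))$, a connected subtree of $T$ containing $r$. Each of these pieces is connected in $H \setminus \{v, e\}$ via its own internal tree edges, since $e$ is incident to $x \notin U \cup T(v)$ and $v$ is incident to none of the edges inside these pieces.

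The core of the argument is to exhibit, for each component other than $U$, a back-edge in $H \setminus \{v, e\}$ joining it to $U$. Since $2$-vertex-connectedness implies $2$-edge-connectedness, $e$ is not a bridge, so there is a back-edge from $T(x)$ to a proper ancestor $w$ of $x$; incomparability forces $w$ to avoid $T(v)$ and $v$ itself, so $w \in U$. Similarly, because $v$ is not an articulation point, a standard DFS argument gives for each child $c$ of $v$ a back-edge from $T(c)$ to a proper ancestor of $v$, and incomparability again forces that ancestor into $U$ (it can be neither $v$ nor a descendant of $x$). Stringing these back-edges together through $U$ shows $H \setminus \{v, e\}$ is connected, contradicting that $\{v, e\}$ is a cut-pair. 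The main bookkeeping obstacle throughout is verifying that the back-edges supplied by $2$-connectivity actually land in $U$ rather than in $v$, $T(v) \setminus \{v\}$, or $T(x)$; this is precisely where the incomparability of $x$ and $v$ gets repeatedly invoked, after which the argument is routine.
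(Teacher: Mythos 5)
Your proof is correct and follows essentially the same route as the paper's: both argue by contradiction/contraposition that the hypotheses force the lower endpoint of $e$ and $v$ to be incomparable in $T$, and then use back-edges guaranteed by $2$-connectivity to show $H\setminus\{v,e\}$ stays connected. The only cosmetic difference is that you verify connectivity of all pieces of $T\setminus\{v,e\}$ explicitly, whereas the paper reduces the matter to showing that the two endpoints of $e$ remain connected (implicitly using that $H\setminus v$ is connected).
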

\begin{proof}
\label{proof:posOfe}
Suppose that $e$ is neither in $T(v)$ nor on the path $T[v,r]$. Since $e$ is a tree-edge, it has the form $(u,p(u))$, for some vertex $u$. Since $H$ is $2$-vertex-connected, there exists a back-edge $e'=(x,y)$ joining a vertex $x$ from $T(u)$ with a proper ancestor $y$ of $u$. Now, remove $v$ and $e$ from the graph. Since $e$ does not lie on the path $T[v,r]$, $v$ is not a descendant of $u$, and therefore $T(u)$ remains connected. Furthermore, since $e$ is not in $T(v)$, $y$ remains connected with $p(u)$. The existence of $e'$ implies that $u$ remains connected with $p(u)$ - a contradiction.
\end{proof}

Thus we have three distinct cases in total, and we will compute $\mathit{count}(v)$ by counting the cut-pairs $\{v,e\}$ in each case. We will handle these cases separately, by providing a specific algorithm for each one of them, based on some simple observations like Lemma \ref{lemma:pos_of_e}. The linearity of these algorithms will be clear.

Now, we shall begin with the case where $e$ is a back-edge, since this is the easiest to handle. We suppose that all $\mathit{count}(v)$ have been initialized to zero.

\subsection{The case where $e$ is a back-edge}
\label{section:e_is_a_back_edge}

\begin{proposition}
If $\{v,e\}$ is a cut-pair such that $e$ is a back-edge, then $e$ starts from the subtree $\mathit{T}(c)$ of a child $c$ of $v$, ends in a proper ancestor of $v$, and is the only back-edge that starts from $\mathit{T}(c)$ and ends in a proper ancestor of $v$. Conversely, if $e$ is such a back-edge, then $\{v,e\}$ is a cut-pair.
\end{proposition}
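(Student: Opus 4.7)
The plan is to classify back-edges of $H$ according to where their endpoints sit in the forest $T\setminus v$, and then to read off which ones can form a cut-pair with $v$. First I would set up the piece decomposition: since $H$ is $2$-vertex-connected, $H\setminus v$ is connected, and the forest $T\setminus v$ consists of the subtrees $T(c_1),\dots,T(c_k)$ at the children $c_i$ of $v$ together with the upper part $T':=T\setminus T(v)$ (which is empty exactly when $v=r$). Each such piece is internally connected through tree edges alone. Because of the ancestor-descendant property of DFS back-edges, any back-edge not incident to $v$ is either \emph{internal} (both endpoints in one piece) or \emph{crossing} (lower endpoint in some $T(c_i)$, upper endpoint in $T'$, i.e., a proper ancestor of $v$); and $2$-vertex-connectivity forces every $T(c_i)$ to be touched by at least one crossing back-edge.

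For the forward direction I would argue that $e$ must be a crossing back-edge and, in fact, the unique crossing back-edge out of its $T(c)$. A cut-pair $\{v,e\}$ with $e$ incident to $v$ is impossible, because then $H\setminus\{v,e\}=H\setminus v$ is connected. So $e$ is a bridge of the connected graph $H\setminus v$. If $e$ were internal, living inside some piece $P$, then $P$ would remain connected via its tree edges, and every $T(c_i)$ would still reach $T'$ via the untouched crossing back-edges, so $H\setminus\{v,e\}$ would remain connected, contradicting the bridge property. Hence $e$ is crossing, say from $T(c)$ to a proper ancestor of $v$. For uniqueness, any second crossing back-edge $e'$ from the same $T(c)$ would still connect $T(c)$ to $T'$ after deleting $\{v,e\}$, again contradicting the cut-pair hypothesis.

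The converse direction is shorter. If $e$ is the only back-edge from $T(c)$ to a proper ancestor of $v$, then every edge that leaves $T(c)$ in $H$ is accounted for in the deletion $\{v,e\}$: the tree edge $(c,v)$ and any back-edge from $T(c)$ to $v$ vanish with $v$; the unique back-edge from $T(c)$ to a proper ancestor of $v$ is $e$ itself. So $T(c)$ becomes an isolated component of $H\setminus\{v,e\}$, proving the cut-pair property.

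I do not anticipate a real obstacle beyond stating the decomposition cleanly. The only care needed is at the boundary cases $v=r$ (where $T'$ is empty, so no crossing back-edge exists and both directions of the proposition hold vacuously) and $v$ a leaf (where there are no children $c_i$, again vacuous). Both are absorbed without change by the piece-plus-crossing classification.
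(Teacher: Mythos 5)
Your proof is correct. The paper states this proposition without proof (it is followed immediately by ``This immediately suggests an algorithm\ldots''), so there is nothing to compare against; your piece decomposition of $H\setminus v$ into $T(c_1),\dots,T(c_k)$ and $T'$, the classification of back-edges as internal or crossing, and the observation that $2$-vertex-connectivity forces a crossing back-edge out of each $T(c_i)$ is exactly the natural argument the authors are implicitly relying on, and your handling of the boundary cases $v=r$ and $v$ a leaf is sound.
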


\begin{algorithm}[!h]
\caption{\textsf{Calculating $\mathit{up}(c)$ and $\mathit{down}(v,c)$}}
\label{algorithm:UD}
\LinesNumbered
\DontPrintSemicolon

initialize all $\mathit{up}(v)$ and $\mathit{down}(v,c)$ to $0$\;
sort the back-edges $(u,v)$ in increasing order w.r.t. their higher end $u$\;
sort the list of the children of every vertex in increasing order\;
\ForEach{vertex $v$}{
  \lIf{$v$ is not childless}{$c_v \leftarrow $ first child of $v$}
}
\ForEach{back-edge $(u,v)$}{
  $\mathit{up}(u) \leftarrow \mathit{up}(u)+1$\;
  \lWhile{$c_v$ is not an ancestor of $u$}{$c_v \leftarrow $next child of $v$}
  $\mathit{down}(v,c_v) \leftarrow \mathit{down}(v,c_v)+1$\;
}
\end{algorithm}

This immediately suggests an algorithm for counting all such cut-pairs. We only have to count, for every vertex $c$ ($\ne r$ or the child of $r$), the number $\mathit{b\_count}(c)$ $:=$ $\#\{$back-edges that start from $T(c)$ and end in a proper ancestor of $p(c)$$\}$. To do this efficiently, we define, for every vertex $v$, $\mathit{up}(v)$ $:=$ $\#\{$back-edges that start from $v$ and end in an ancestor of $v$$\}$, and, for every child $c$ of $v$ (if it has any), $\mathit{down}(v,c)$ $:=$ $\#\{$back-edges that start from $T(c)$ and end in $v\}$.
Algorithm~\ref{algorithm:UD} computes all $\mathit{up}(v)$ and $\mathit{down}(v,c)$ in linear time.
Now, $\mathit{b\_count}(c)$ can be computed recursively: if $d_1,\dotsc,d_k$ are the children of $c$, then $\mathit{b\_count}(c)=\mathit{up}(c)+\mathit{b\_count}(d_1)+\dotsc+\mathit{b\_count}(d_k)-\mathit{down}(p(c),c)$; and if $c$ is childless, $\mathit{b\_count}(c)=\mathit{up}(c)$. Finally, the number of vertex-edge cut-pairs $\{v,e\}$ where $e$ is a back-edge, equals the number of children $c$ of $v$ that have $b\_count(c)=1$.

\subsection{The case where $e$ is part of the simple tree path $T[v,r]$}
Let $\{v,e\}$ be a vertex-edge cut-pair such that $e$ is part of the simple tree path $T[v,r]$. Then there exists a vertex $u$ which is a proper ancestor of $v$ and such that $e=(u,p(u))$. We observe that all back-edges that start from $\mathit{T}(u)$ and end in a proper ancestor of $u$ must necessarily start from $\mathit{T}(v)$. In other words, $\mathit{M}(u)$ is a descendant of $v$. Here we further distinguish two cases, depending on whether $\mathit{M}(u)$ is a proper descendant of $v$.

\subsubsection{The case $\mathit{M}(u)=v$}
\label{section:Mu=v}
Our algorithm for this case is based on the following observation:

\begin{proposition}
\label{proposition:MuEqv}
Let $c_1,\dotsc,c_k$ be the children of $v$ (if it has any), and let $\{v,(u,\mathit{p}(u))\}$ be a cut-pair such that $u$ is an ancestor of $v$ with $\mathit{M}(u)=v$. Then $u$ does not belong in any set of the form $\mathit{T}[\mathit{high_p}(c_i),\mathit{low}(c_i))$, for $i=1,\dotsc,k$. Conversely, given that $u$ is a proper ancestor of $v$ such that $\mathit{M}(u)=v$, and given also that $u$ does not belong in any set of the form $\mathit{T}[\mathit{high_p}(c_i),\mathit{low}(c_i))$, for $i=1,\dotsc,k$, we may conclude that the pair $\{v,(u,\mathit{p}(u))\}$ is a cut-pair. (See Figure \ref{figure:M(u)=v}.)
\end{proposition}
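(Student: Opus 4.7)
The plan is to analyze connectivity of $H \setminus \{v,(u,p(u))\}$ directly by looking at how the DFS tree $T$ decomposes. Removing $v$ together with the tree edge $(u,p(u))$ cuts $T$ into three kinds of pieces: the \emph{outer} part $A := T \setminus T(u)$, which contains $p(u)$ and everything above it; the \emph{middle} part $C := T(u) \setminus T(v)$, which contains $u$ itself together with all branches of $T(u)$ that avoid $v$; and the \emph{lower} pieces $B_i := T(c_i)$, one for each child $c_i$ of $v$. Each piece is connected in $T$ after the removals, so the question becomes which back-edges of $H$ reconnect them.

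The structural fact I would invoke is the hypothesis $\mathit{M}(u) = v$: every back-edge from $T(u)$ to a proper ancestor of $u$ must have its lower endpoint inside $T(v)$. In particular no such back-edge starts in $C$, so $A$ and $C$ share no back-edge; and back-edges cannot join two distinct $B_i$'s either, since their subtrees are disjoint and unrelated by ancestry. It follows that the only mechanism by which $H \setminus \{v,(u,p(u))\}$ can remain connected is for some single $B_i$ to be incident to back-edges reaching both $A$ and $C$. I would then translate these two incidence conditions into the $\mathit{low}/\mathit{high_p}$ language: $B_i$ reaches $A$ iff $\mathit{low}(c_i) < u$, while $B_i$ reaches $C$ iff $\mathit{high_p}(c_i) \ge u$ (since $\mathit{high_p}(c_i)$ is the deepest proper ancestor of $v$ hit by a back-edge from $T(c_i)$, and lies on $T[p(v),u] \subseteq C$ precisely when $\mathit{high_p}(c_i) \ge u$). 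Combining these, $B_i$ reaches both iff $\mathit{low}(c_i) < u \le \mathit{high_p}(c_i)$, which is exactly the condition $u \in T[\mathit{high_p}(c_i),\mathit{low}(c_i))$.

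With this equivalence in hand, both directions of the proposition fall out. If no $B_i$ places $u$ in its interval, then every $B_i$ is incident to back-edges reaching at most one of $A$ or $C$, and since nothing else bridges $A$ and $C$, the graph splits into at least two components, so $\{v,(u,p(u))\}$ is a cut-pair. Conversely, if some $B_i$ does place $u$ in its interval, then $A$, $C$ and every $B_j$ (each of which individually reaches at least one of $A$ or $C$) merge into a single component via that $B_i$, so the removals leave $H$ connected.

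The main subtlety I expect is handling the corner cases that $2$-vertex-connectivity is meant to rule out: namely, that $\mathit{high_p}(c_i)$ is defined for every child $c_i$ of $v$, and that each $B_i$ reaches at least one of $A$ or $C$. Both follow from the standard fact that in a $2$-vertex-connected graph $\mathit{low}(c_i) < v$ for every child $c_i$ of a non-root $v$ (otherwise $v$ would be an articulation point): this forces $\mathit{low}(c_i)$ to lie either strictly above $u$, giving a back-edge into $A$, or on the segment $T[p(v),u]$, giving a back-edge into $C$ and simultaneously witnessing that $\mathit{high_p}(c_i)$ exists and is at least $u$. Once these degeneracies are dispatched, the case analysis above is clean and yields the claim.
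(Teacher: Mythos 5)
Your proof is correct and follows essentially the same route as the paper's: the hypothesis $\mathit{M}(u)=v$ rules out any back-edge from $T(u)\setminus T(v)$ to a proper ancestor of $u$, so after the removal the pieces containing $u$ and $p(u)$ can only be reconnected through some subtree $T(c_i)$, and $T(c_i)$ bridges both sides exactly when $\mathit{low}(c_i) < u \le \mathit{high_p}(c_i)$, i.e., $u \in \mathit{T}[\mathit{high_p}(c_i),\mathit{low}(c_i))$. Your explicit three-piece decomposition into $A$, $C$ and the $B_i$ simply makes systematic the case analysis that the paper carries out more narratively.
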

\begin{proof}
\label{proof:MuEqv}
($\Rightarrow$) Suppose that $u$ belongs to $\mathit{T}[\mathit{high_p}(c),\mathit{low}(c))$, for some child $c$ of $v$. Then $\mathit{high_p}(c)$ is a descendant of $u$, $\mathit{low}(c)$ is an ancestor of $\mathit{p}(c)$, and both of these vertices are proper ancestors of $v$. Now, there exists a back-edge $(x,\mathit{high_p}(c))$, with $x$ in $\mathit{T}(c)$. There also exists a back edge $(x',\mathit{low}(c))$, with $x'$ in $\mathit{T}(c)$. Therefore, it should be clear that the removal of both $v$ and $(u,\mathit{p}(u))$ does not disconnect $u$ from $\mathit{p}(u)$: for, even after this removal, both $u$ and $\mathit{p}(u)$ remain connected with the subtree $\mathit{T}(c)$. This contradicts the fact that $\{v,(u,\mathit{p}(u))\}$ is a cut-pair.\\
($\Leftarrow$) Remove the vertex $v$ and the edge $(u,\mathit{p}(u))$. $\mathit{M}(u)=v$ means that there are no back-edges $(x,y)$ such that $x$ is a descendant of $u$, but not a descendant of $v$, and $y$ is an ancestor of $\mathit{p}(u)$. Therefore, if $u$ remains connected with $\mathit{p}(u)$, they must both be connected with a subtree $\mathit{T}(c)$, for some child $c$ of $v$. Furthermore, if such is the case, there must exist two back-edges, $(x,y)$ and $(x',y')$, such that both $x$ and $x'$ are in $\mathit{T}(c)$, $y$ is proper ancestor of $v$ and a descendant of $u$, and $y'$ is an ancestor of $\mathit{p}(u)$. But this means that $u\leq \mathit{high_p}(c)$ and $\mathit{low}(c)\leq \mathit{p}(u)$. In other words, $u$ is in $\mathit{T}[\mathit{high_p}(c),\mathit{low}(c))$ - a contradiction.
\end{proof}

\begin{figure}[h!]
\begin{center}
\centerline{\includegraphics[trim={0cm 0cm 0cm 8cm}, scale=0.8,clip, width=\textwidth]{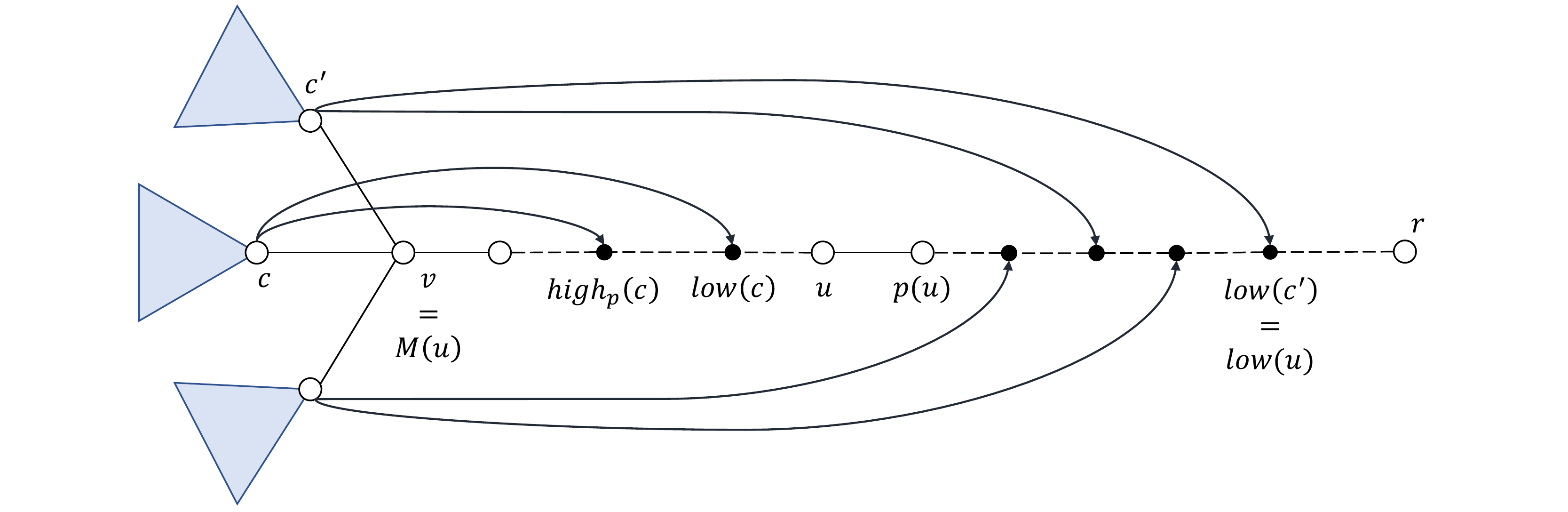}}
\caption{$M(u)=v$ and $v$ forms a cut-pair with $(u,p(u))$.}
\label{figure:M(u)=v}
\end{center}
\end{figure}

Algorithm~\ref{algorithm:Mu=v} describes how we can count, for every vertex $v$, all cut-pairs of the form $\{v,(u,p(u))\}$, where $u$ is a proper ancestor of $v$ with $M(u)=v$.

\begin{algorithm}[!h]
\caption{\textsf{$M(u)=v$}}
\label{algorithm:Mu=v}
\LinesNumbered
\DontPrintSemicolon
calculate all lists $M^{-1}(v)$, for all vertices $v$, and have their elements sorted in decreasing order\;
sort the list of the children of every vertex in decreasing order w.r.t. the $high_p$ value of its elements\;
\ForEach{vertex $v$}{
\lIf{$M^{-1}(v)=\emptyset$}{\textbf{continue}}
  $u \leftarrow $second element of $M^{-1}(v)$ \tcp{the first element of $M^{-1}(v)$ is $v$}
  $c \leftarrow $first child of $v$\;
  $\mathit{min} \leftarrow v$\;
  \While{$u\ne\emptyset$ and $c\ne\emptyset$}{
    $\mathit{min} \leftarrow \mathit{high_p}(c)$\;
    \While{$u\ne \emptyset$ and $u>\mathit{min}$}{
      $\mathit{count}[v] \leftarrow \mathit{count}[v]+1$\;
      $u \leftarrow $next element of $M^{-1}(v)$\;
    }
    $\mathit{min} \leftarrow \mathit{low}(c)$\;
    $c \leftarrow $next child of $v$\;
    \While{$c\ne\emptyset$ and $\mathit{high_p}(c)\geq\mathit{min}$}{
      \lIf{$\mathit{low}(c)<\mathit{min}$}{$\mathit{min} \leftarrow \mathit{low}(c)$}
      $c \leftarrow $next child of $v$\;
    }
    \lWhile{$u\ne \emptyset$ and $u>\mathit{min}$}{$u \leftarrow $next element of $M^{-1}(v)$}
  }
  \While{$u\ne \emptyset$}{
    \lIf{$u\leq\mathit{min}$}{$\mathit{count}[v] \leftarrow \mathit{count}[v]+1$}
    $u \leftarrow $next element of $M^{-1}(v)$\;
  }
}
\end{algorithm}

\begin{theorem}
Algorithm \ref{algorithm:Mu=v} is correct.
\end{theorem}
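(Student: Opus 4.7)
The plan is to reduce correctness to a clean set-theoretic statement and then verify that the algorithm implements a correct coordinated sweep. By Proposition~\ref{proposition:MuEqv}, for each proper ancestor $u$ of $v$ with $M(u)=v$, the pair $\{v,(u,p(u))\}$ is a cut-pair if and only if $u$ avoids the union of ``forbidden intervals'' $F_v := \bigcup_{c\text{ a child of }v} T[\mathit{high_p}(c),\mathit{low}(c))$. Therefore, it suffices to prove that Algorithm~\ref{algorithm:Mu=v} increments $\mathit{count}[v]$ exactly once for each proper ancestor $u$ of $v$ with $M(u)=v$ and $u \notin F_v$. The crucial observation is that every such $u$, together with every endpoint $\mathit{high_p}(c)$ and $\mathit{low}(c)$, lies on the root-to-$v$ tree path; on that path the DFS numbering is a total order matching the ancestor--descendant relation, so $F_v$ is a union of semi-open intervals on a linearly ordered finite set, and the task reduces to counting the relevant $u$'s in the complement of $F_v$.

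Next, I would describe the algorithm as a two-pointer sweep and extract a loop invariant. With children of $v$ sorted by decreasing $\mathit{high_p}$ and $M^{-1}(v)$ sorted by decreasing DFS number, the invariant I would maintain at the top of each outer iteration is: (i) every $u' \in M^{-1}(v)$ which is a proper ancestor of $v$ and satisfies $u' > \mathit{min}$ has already been classified correctly; (ii) the unprocessed-child pointer points to the first child $c$ with $\mathit{high_p}(c) < \mathit{min}$ (or to $\emptyset$); and (iii) the union of the merged forbidden intervals processed so far coincides with $\{\,x\text{ proper ancestor of }v : \mathit{min} < x \leq M_{0}\,\}$, where $M_{0}$ is the largest $\mathit{high_p}$ absorbed into the last merge. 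Preservation is then checked by walking through the body: setting $\mathit{min} \leftarrow \mathit{high_p}(c)$ opens a new interval, and the first ``\textbf{while} $u>\mathit{min}$'' correctly counts the $u$'s that fall in the gap above this new interval (they lie above every remaining $\mathit{high_p}$, hence in no remaining forbidden interval); then resetting $\mathit{min} \leftarrow \mathit{low}(c)$ and greedily absorbing subsequent children whose $\mathit{high_p}$ is still at least $\mathit{min}$ correctly extends the interval downward, by the total ordering on the root-path; finally, the second ``\textbf{while} $u>\mathit{min}$'' silently skips the $u$'s now inside the enlarged interval. Termination is handled by the post-loop block, which counts the remaining $u$'s that lie below every forbidden interval (exactly those with $u \leq \mathit{min}$).

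The main obstacle I anticipate is making the greedy merging rigorous in degenerate and boundary cases: when $\mathit{high_p}(c) = \mathit{low}(c)$ (so the forbidden interval is empty and must contribute nothing), when $M(v) \neq v$ so that $v \notin M^{-1}(v)$ and the comment ``the first element of $M^{-1}(v)$ is $v$'' must be interpreted carefully, and when the two decreasing sweeps cross at boundary values such as $u = \mathit{min}$ or $\mathit{high_p}(c) = \mathit{min}$. These cases depend delicately on the semi-open form $T[\mathit{high_p}(c),\mathit{low}(c))$ of the forbidden intervals and on the precise mix of strict vs.\ weak inequalities in the four while-conditions of the algorithm; I would discharge them by a careful case analysis within the inductive step of the invariant maintenance.
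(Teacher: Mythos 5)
Your proposal is correct and follows essentially the same route as the paper: reduce via Proposition~\ref{proposition:MuEqv} to counting the elements of $M^{-1}(v)$ among the proper ancestors of $v$ that avoid the union of intervals $T[\mathit{high_p}(c),\mathit{low}(c))$ on the root path, and then verify that the algorithm's coordinated sweep over the two sorted lists counts exactly the elements in the complementary gaps. Your loop-invariant formulation is a somewhat more explicit rendering of the same argument (and the boundary worry about $v\notin M^{-1}(v)$ resolves itself, since $M(u)=v$ for a proper ancestor $u$ of $v$ forces $M(v)=v$).
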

\begin{proof}
By Proposition \ref{proposition:MuEqv}, we only have to count, for every vertex $v$, the vertices $u$ in $T(v,r)$ that have $M(u)=v$ and are not contained in any set of the form $T[\mathit{high_p}(c),\mathit{low}(c))$, for any child $c$ of $v$. We do this by finding, in a sense, all maximal subsets of $T(v,r)$ of the form $T[x,y]$, which do not meet any set $T[\mathit{high_p}(c),\mathit{low}(c))$, for any child $c$ of $v$, and we count all elements of $M^{-1}(v)\cap T[x,y]$. If $c_1,\dotsc,c_k$ is the list of the children of $v$ sorted in decreasing order w.r.t. their $\mathit{high_p}$ point, then the first such set is $T(v,\mathit{high_p}(c_1))$, the last one is $T[\mathit{low}(c),r)$, where $c$ is a child of $v$ with $\mathit{low}$ minimal among the children of $v$, and all intermediary sets have the form $T[\mathit{low}(c'),\mathit{high_p}(c''))$, for some children $c',c''$ of $v$. If $v$ is childless, we only have to count the elements of $M^{-1}(v)\cap T(v,r)$.
\end{proof}

\subsubsection{The case where $\mathit{M}(u)$ is a proper descendant of $v$}
\label{section:M(u)>v}
In this case, $\mathit{M}(u)$ belongs to $\mathit{T}(c)$, for a child $c$ of $v$, and so we have that $\{\mathit{p}(c),(u,\mathit{p}(u))\}$ is a cut-pair. We base our algorithm for this case on the following observation:

\begin{proposition}
\label{proposition:Mu>v}
Let $\{\mathit{p}(c),(u,\mathit{p}(u))\}$ be a cut-pair, such that $u$ is an ancestor of $\mathit{p}(c)$ and $\mathit{M}(u)$ is in $\mathit{T}(c)$. Then $\mathit{M_p}(c)=\mathit{M}(u)$ and $\mathit{high_p}(c)<u$. Conversely, if $u$ is a proper ancestor of $\mathit{p}(c)$  such that $\mathit{M_p}(c)=\mathit{M}(u)$ and $\mathit{high_p}(c)<u$, then the pair $\{\mathit{p}(c),(u,\mathit{p}(u))\}$ is a cut-pair. (See Figure \ref{figure:M(u)>v}.)
\end{proposition}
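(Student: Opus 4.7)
The plan is to prove both directions by analyzing, in the forest obtained from $T$ by deleting the vertex $p(c)$ and the tree-edge $(u,p(u))$, which components can be reconnected by back-edges. We may assume $u$ is a proper ancestor of $p(c)$: if $u=p(c)$ then $(u,p(u))$ is incident to $p(c)$, and in the $2$-vertex-connected graph removing $p(c)$ alone does not disconnect it, so $\{p(c),(u,p(u))\}$ cannot be a cut-pair.

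For the forward direction, assume $\{p(c),(u,p(u))\}$ is a cut-pair and $\mathit{M}(u)\in T(c)$. I would first establish $\mathit{high_p}(c)<u$ by contradiction: if $\mathit{high_p}(c)\ge u$, then $\mathit{high_p}(c)$ lies on the path $T[u,p(p(c))]$ and is the upper endpoint of some back-edge $(x,\mathit{high_p}(c))$ with $x\in T(c)$, while $\mathit{M}(u)\in T(c)$ furnishes another back-edge $(y,z)$ with $y\in T(c)$ and $z$ a proper ancestor of $u$. In $G\setminus\{p(c),(u,p(u))\}$ the subtree $T(c)$ is still internally connected, the first back-edge links it to the path from $u$ to $p(p(c))$, and the second links it to the part of $T$ strictly above $u$, so $u$ and $p(u)$ remain in the same component, contradicting the cut-pair hypothesis. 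Once $\mathit{high_p}(c)<u$ is known, every back-edge from $T(c)$ to a proper ancestor of $p(c)$ actually lands on a proper ancestor of $u$; combined with the fact that $\mathit{M}(u)\in T(c)$ forces every back-edge from $T(u)$ to a proper ancestor of $u$ to start inside $T(c)$, the starting-endpoint sets defining $\mathit{M_p}(c)$ and $\mathit{M}(u)$ coincide, and the equality $\mathit{M_p}(c)=\mathit{M}(u)$ follows.

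For the converse, assume $\mathit{high_p}(c)<u$ and $\mathit{M_p}(c)=\mathit{M}(u)$, which gives $\mathit{M}(u)\in T(c)$ since $\mathit{M_p}(c)\in T(c)$ by definition. Decompose the forest $T\setminus\{p(c),(u,p(u))\}$ into $A=T\setminus T(u)$ (containing $p(u)$), $B=T(u)\setminus T(p(c))$ (containing $u$), and the subtrees $T(c_i)$ for each child $c_i$ of $p(c)$. The inequality $\mathit{high_p}(c)<u$ forbids any back-edge from $T(c)$ to a vertex of $B$, so $T(c)$ reconnects only upward into $A$. The condition $\mathit{M}(u)\in T(c)$ forbids any back-edge from $T(u)\setminus T(c)$ to a proper ancestor of $u$, so neither $B$ nor any $T(c_i)$ with $c_i\ne c$ can reach $A$ directly; moreover any back-edge out of such a $T(c_i)$ to a proper ancestor of $p(c)$ must land on $T[u,p(p(c))]\subseteq B$, so these subtrees can merge only with $B$, never with $A$. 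Consequently $u$ and $p(u)$ lie in different components of $G\setminus\{p(c),(u,p(u))\}$, so the pair is a cut-pair.

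The main obstacle will be the converse: one must verify that every back-edge capable of reconnecting $B$ or some $T(c_i)$ with $A$ is ruled out by exactly one of the two hypotheses, and in particular handle back-edges emanating from side subtrees hanging off the path $T[u,p(p(c))]$, which require a small extra check.
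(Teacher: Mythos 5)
Your proof is correct and takes essentially the same approach as the paper: the forward direction extracts $\mathit{high_p}(c)<u$ from the cut-pair property and then matches the defining sets of $\mathit{M_p}(c)$ and $\mathit{M}(u)$, while the converse rules out the same two kinds of reconnecting back-edges that the paper considers, just organized as an explicit component decomposition of $T\setminus\{p(c),(u,p(u))\}$. The ``obstacle'' you flag at the end (back-edges from side subtrees hanging off $T[u,p(p(c))]$) is already disposed of by your observation that $\mathit{M}(u)\in T(c)$ forbids any back-edge from $T(u)\setminus T(c)$ to a proper ancestor of $u$, so nothing further is needed.
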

\begin{proof}
\label{proof:MuGreaterThanv}
($\Rightarrow$) Let $(x,y)$ be a back-edge such that $x$ is in $\mathit{T}(u)$ and $y$ is a proper ancestor of $u$. Since $\mathit{M}(u)$ is in $\mathit{T}(c)$, $x$ is in $\mathit{T}(c)$. Furthermore, since $u$ is an ancestor of $\mathit{p}(c)$, $y$ is a proper ancestor of $\mathit{p}(c)$. This shows that $\mathit{M_p}(c)$ is an ancestor of $\mathit{M}(u)$. Conversely, let $(x,y)$ be a back-edge such that $x$ is in $\mathit{T}(c)$ and $y$ is a proper ancestor of $\mathit{p}(c)$. Since $c$ is a descendant of $u$, $x$ is in $\mathit{T}(u)$. Furthermore, since $\{\mathit{p}(c),(u,\mathit{p}(u))\}$ is a cut-pair, $y$ must be a proper ancestor of $u$. (Otherwise, we can easily see that, by removing the vertex $\mathit{p}(c)$ and the edge $(u,\mathit{p}(u))$, $u$ remains connected with $\mathit{p}(u)$: for there exists a back-edge connecting a vertex from $T(M(u))$ (which is a subtree of $T(c)$) with $\mathit{low}(u)$, which is an ancestor of $p(u)$, and $(x,y)$ connects $T(c)$ with $T(p(c),u]$.) This means that $x$ is a descendant of $\mathit{M}(u)$, and this shows that $\mathit{M_p}(c)$ is a descendant of $\mathit{M}(u)$. We conclude that $\mathit{M_p}(c)=\mathit{M}(u)$. Finally, since $\{\mathit{p}(c),(u,\mathit{p}(u))\}$ is a cut-pair, it should be clear that $\mathit{high_p}(c)$ must be a proper ancestor of $u$ (the argument is the same as in the parenthesis).\\
($\Leftarrow$) Remove the vertex $\mathit{p}(c)$ and the edge $(u,\mathit{p}(u))$. Now, if there exists a path connecting $u$ with $\mathit{p}(u)$, this path should contain at least one back-edge $(x,y)$ such that either (1) $x$ is in $\mathit{T}(c)$ and $y$ is in $\mathit{T}(\mathit{p}(c),u]$, or (2) $x$ is a descendant of some vertex in $\mathit{T}(\mathit{p}(c),u]$, but not a descendant of $\mathit{p}(c)$, and $y$ is an ancestor of $\mathit{p}(u)$. No back-edge satisfies (1), since $\mathit{high_p}(c)<u$. Furthermore, no back-edge satisfies (2), since $\mathit{M}(u)$ is in $\mathit{T}(c)$. We conclude that $u$ has been disconnected from $\mathit{p}(u)$.
\end{proof}

\begin{algorithm}[!h]
\caption{\textsf{$M(u)>v$}}
\label{algorithm:Mu>v}
\LinesNumbered
\DontPrintSemicolon
calculate all lists $M^{-1}(m)$ and $M_p^{-1}(m)$, for all vertices $m$, and have their elements sorted in decreasing order\;
\ForEach{vertex $m$}{
  $c \leftarrow $ first element of $M_p^{-1}(m)$\;
  $u \leftarrow $ first element of $M^{-1}(m)$\;
  \While{$c \ne \emptyset$ and $u \ne \emptyset$}{
    \lWhile{$u \ne\emptyset$ and $u \geq p(c)$}{$u \leftarrow $ next element of $M^{-1}(m)$}
    \lIf{$u=\emptyset$}{\textbf{break}}
    \If{$\mathit{high_p}(c)<u$}{
      $\mathit{n\_edges} \leftarrow 0$\;
      $\mathit{first} \leftarrow u$\;
      \While{$u \ne\emptyset$ and $\mathit{high_p}(c)<u$}{
        $\mathit{n\_edges} \leftarrow \mathit{n\_edges}+1$\;
        $u \leftarrow $ next element of $M^{-1}(m)$\;
      }
      $\mathit{last} \leftarrow $ predecessor of $u$ in $M^{-1}(m)$\;
      $\mathit{count}[p(c)] \leftarrow \mathit{count}[p(c)]+\mathit{n\_edges}$\;
      $c \leftarrow $ next element of $M_p^{-1}(m)$\;
      \While{$c\ne\emptyset$ and $p(c)>\mathit{last}$}{
        \While{$\mathit{first} \geq p(c)$}{
          $\mathit{n\_edges} \leftarrow \mathit{n\_edges}-1$\;
          $\mathit{first} \leftarrow $ next element of $M^{-1}(m)$\;
        }
        $\mathit{count}[p(c)] \leftarrow \mathit{count}[p(c)]+\mathit{n\_edges}$\;
        $c \leftarrow $ next element of $M_p^{-1}(m)$\;
      }
    }
    \Else{
    $c \leftarrow $ next element of $M_p^{-1}(m)$\;
    }
  }
}
\end{algorithm}

Algorithm~\ref{algorithm:Mu>v} describes how we can compute, for every vertex $v$, the number of cut-pairs of the form $\{v,(u,p(u))\}$, where $u$ is a proper ancestor of $v$ with $M(u)$ in $T(c)$ for a child $c$ of $v$.

\begin{figure}[h!]
\begin{center}
\centerline{\includegraphics[trim={0cm 0cm 0cm 9cm}, scale=0.8,clip, width=\textwidth]{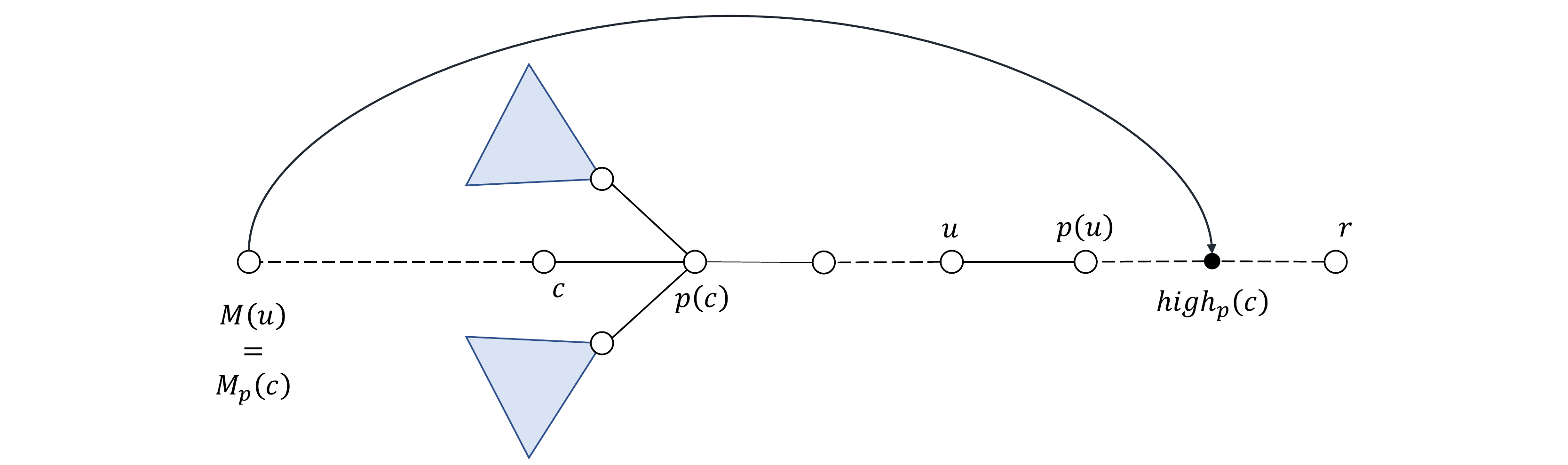}}
\caption{$M(u)$ is in $T(c)$ and $p(c)$ forms a cut-pair with $(u,p(u))$.}
\label{figure:M(u)>v}
\end{center}
\end{figure}

\begin{theorem}
Algorithm \ref{algorithm:Mu>v} is correct.
\end{theorem}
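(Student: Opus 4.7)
The plan is to apply Proposition~\ref{proposition:Mu>v} to reduce correctness to the following combinatorial task: for each vertex $m$, count every pair $(c,u)\in M_p^{-1}(m)\times M^{-1}(m)$ with $u<p(c)$ and $\mathit{high_p}(c)<u$, crediting each such pair to $\mathit{count}[p(c)]$. Because all elements of $M_p^{-1}(m)$ and $M^{-1}(m)$ are ancestors of $m$, they lie on a single root-path and are linearly ordered by DFS number; the algorithm processes them in decreasing order, walking upward from $m$ toward the root.

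The outer loop handles each $c$ in isolation: it advances $u$ past elements with $u\geq p(c)$, then counts the consecutive $u$'s with $\mathit{high_p}(c)<u$, recording $\mathit{first}$ (the deepest such $u$) and $\mathit{last}$ (the shallowest). Its correctness follows directly from Proposition~\ref{proposition:Mu>v}, since the valid $u$'s form an interval on the list $M^{-1}(m)$. The delicate part is the inner loop, which, for each subsequent $c'\in M_p^{-1}(m)$ with $p(c')>\mathit{last}$, reuses $\mathit{n\_edges}$ after trimming from the top the $u$'s that violate $u<p(c')$ via $\mathit{first}$. I plan to justify this reuse through two monotonicity claims, valid whenever $c'<c$ both lie in $M_p^{-1}(m)$: \textbf{(a)} if $u$ satisfies $\mathit{high_p}(c)<u<p(c')$, then $\mathit{high_p}(c')<u$; and \textbf{(b)} whenever $c'$ is processed by the inner loop, any $u\in M^{-1}(m)$ with $u\leq\mathit{high_p}(c)$ also satisfies $u\leq\mathit{high_p}(c')$. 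Combined, (a) and (b) show that the valid $u$'s for $c'$ in the range $(-\infty,p(c'))$ coincide with the counted $u$'s retained after trimming, which is exactly the quantity the inner loop adds to $\mathit{count}[p(c')]$.

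The main obstacle is claim (a), which I will prove by contradiction. Assume a back-edge $(x,y)$ witnesses $\mathit{high_p}(c')\geq u$: so $x$ is a descendant of $c'$, $y$ is a proper ancestor of $p(c')$, and $y\geq u$. If $x\in T(c)$, then since $y<p(c')<p(c)$, the edge $(x,y)$ is also a candidate for $\mathit{high_p}(c)$, forcing $\mathit{high_p}(c)\geq y\geq u$, contradicting $u>\mathit{high_p}(c)$. If instead $x\in T(c')\setminus T(c)$, then $(x,y)$ certifies that $x$ is a descendant of $c'$ back-edging to a proper ancestor of $p(c')$, so $m=M_p(c')$ must be an ancestor of $x$; but $m=M_p(c)$ is a descendant of $c$, hence $x\in T(m)\subseteq T(c)$, a contradiction. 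Claim (b) is easier: the inner-loop guard $p(c')>\mathit{last}$ combined with $\mathit{last}>\mathit{high_p}(c)$ gives $\mathit{high_p}(c)<p(c')$, so the back-edge that witnesses $\mathit{high_p}(c)$ is still admissible for $c'$ (its source, a descendant of $c$, is also a descendant of $c'$), giving $\mathit{high_p}(c')\geq\mathit{high_p}(c)\geq u$.

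A final routine check verifies that when the inner loop exits (because $c'=\emptyset$ or $p(c')\leq\mathit{last}$), the outer loop resumes cleanly: the $u$-pointer has advanced past exactly the $u$'s already accounted for by previous $c$'s, every $u$ to its left lies below $\mathit{last}$ and is therefore a fresh candidate for the outer-loop tests on the new $c'$, and any $u$ skipped in the meantime fails $u<p(c')$ and so could not contribute. Thus each pair $(c,u)$ satisfying the conditions of Proposition~\ref{proposition:Mu>v} is counted exactly once, completing the correctness proof.
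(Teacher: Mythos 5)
Your proof is correct and follows essentially the same route as the paper's: reduce to counting pairs $(c,u)$ via Proposition~\ref{proposition:Mu>v}, observe that the valid $u$'s for a fixed $c$ form a contiguous segment of the decreasingly sorted list $M^{-1}(m)$, and justify the reuse of $\mathit{n\_edges}$ for subsequent $c'$ by showing that $\mathit{high_p}$ does not change across the window. Your claims (a) and (b) together amount to the identity $\mathit{high_p}(c')=\mathit{high_p}(c)$ that the paper asserts without proof, so your write-up is in fact slightly more complete than the original on this point.
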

\begin{proof}
According to Proposition \ref{proposition:Mu>v}, for every cut-pair of the form $\{p(c),(u,p(u))\}$ such that $u$ is an ancestor of $p(c)$ with $M(u)$ in $T(c)$, we have $M(u)=M_p(c)$. Therefore, we may search for these cut-pairs by scanning the lists $M^{-1}(m)$ and $M_p^{-1}(m)$, for every vertex $m$. Suppose we have calculated these lists and have their elements sorted in decreasing order. Now, let $c$ be the first element of $M_p^{-1}(m)$ for which there exists a $u$ in $M^{-1}(m)$ such that $u$ is an ancestor of $p(c)$ and the pair $\{p(c),(u,p(u))\}$ is a cut-pair. Proposition \ref{proposition:Mu>v} implies that $\mathit{high_p}(c)<u$. Furthermore, (and, again, as a consequence of the same Proposition), we have that, for every $u'$ in $M^{-1}(m)\cap T(p(c),\mathit{high_p}(c))$, $\{p(c),(u',p(u'))\}$ is a cut-pair, and these are all the elements in $M^{-1}(m)$ for which this is true. Let $U$ denote the collection of these elements. Now, if $c'$ is in $M_p^{-1}(m)\cap T[c,\mathit{high_p}(c))$, then $\mathit{high_p}(c')=\mathit{high_p}(c)$. By Proposition \ref{proposition:Mu>v}, this means that all the elements $u'$ of $M^{-1}(m)$ with the property that $u'$ is a proper ancestor of $p(c')$ and $\{p(c'),(u',p(u'))\}$ is a cut-pair, are precisely the members of $U\cap T(p(c'),\mathit{high_p}(c))$. This explains the counting procedure of Algorithm \ref{algorithm:Mu>v}. Then, after we have updated $\mathit{count}[p(c')]$ for all $c'$ in $M_p^{-1}(m)\cap T[c,\mathit{high_p}(c))$, we repeat the same process for the greatest element $c'$ of $M_p^{-1}(m)$ which is smaller than (i.e. an ancestor of) $\mathit{high_p}(c)$, and has the property that there exists an element $u'$ in $M^{-1}(m)$ such that $u'$ is an ancestor of $p(c')$ and $\{p(c'),(u',p(u'))\}$ is a cut-pair - and keep repeating, until we have traversed $M_p^{-1}(m)$ (or $M^{-1}(m)$) entirely.
\end{proof}

Notice that in Sections \ref{section:e_is_a_back_edge} and \ref{section:Mu=v}, we were able to count specific types of cut-pairs by detecting all of them explicitly.
Here, on the other hand, we count cut-pairs in an indirect manner. Nevertheless, it is easy to see how a minor extension of Algorithm \ref{algorithm:Mu>v} enables us to answer efficiently queries of the form ``report all edges that form a cut-pair (of this type) with a given vertex $v$''.
We only have to store, for every vertex $c$, the lowest ancestor $u$ of $c$, such that $\{p(c),(u,p(u))\}$ is a cut-pair of this type. This is either $\mathit{null}$ or the vertex stored by variable $\mathit{last}$ in Algorithm \ref{algorithm:Mu>v}. Now, to answer a query of the above type, we only have to scan the list of the children of $v$, and if for a child $c$ of $v$ we have that the vertex stored is not $\mathit{null}$, we traverse the (decreasingly sorted) list $M^{-1}(M_p(c))$ backwards, starting from the stored vertex, until we reach an element $u'$ which is not a proper ancestor of $v$. For every vertex $u$ that we encounter (except $u'$), we mark the edge $(u,p(u))$, as one forming a cut-pair with $v$.

Of course, we were bound to perform the counting indirectly at some point: since we claim a linear-time algorithm for the computation of all $\mathit{count}(v)$, we cannot explicitly find all vertex-edge cut-pairs, as their number can be $O(n^2)$. (Consider, for instance, a cycle with $n$ vertices; every vertex $v$ forms precisely $n-2$ vertex-edge cut-pairs, i.e., with all the edges not incident to $v$.) We will perform the counting in an indirect manner again in Section~\ref{section:high(u)<v}. In the next section we find all cut-pairs explicitly.

\subsection{The case where $e$ lies in $\mathit{T}(v)$}
Let $\{v,(u,\mathit{p}(u))\}$ be a cut-pair with $u$ being a descendant of $v$. Then $u$ is a proper descendant of a child $c$ of $v$. Now, we observe that all back-edges that start from $\mathit{T}(u)$ and end in a proper ancestor of $u$ must necessarily end in an ancestor of $\mathit{p}(c)$. In other words, $\mathit{high}(u)\leq v$. Here we distinguish two cases, depending on whether $\mathit{high}(u)$ is a proper ancestor of $v$.

\subsubsection{The case $\mathit{high}(u)=v$}

Our algorithm for this case is based on the following observation:

\begin{proposition}
\label{proposition:high(u)=v}
Let $\{v,(u,\mathit{p}(u))\}$ be a cut-pair such that $v$ is a proper ancestor of $u$ with $\mathit{high}(u)=v$, and let $c$ be the child of $v$ of which $u$ is a descendant. Then, either $(1)$ $\mathit{low}(u)=p(c)$, or $(2)$ $\mathit{low}(u)<p(c)$ and $u\leq \mathit{M_p}(c)$. Conversely, if $c$ is a proper ancestor of $u$ such that $\mathit{high}(u)=\mathit{p}(c)$ and either $(1)$ or $(2)$ holds, then the pair $\{\mathit{p}(c),(u,\mathit{p}(u))\}$ is a cut-pair. (See Figure \ref{figure:high(u)=v}.)
\end{proposition}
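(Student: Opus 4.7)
The plan is to analyze the connected components of $H\setminus\{v,(u,\mathit{p}(u))\}$ (where $v=\mathit{p}(c)$) by carefully tracing how the DFS tree splits under the removal. Let $A$ denote the part of the graph outside $T(v)$, i.e., the proper ancestors of $v$ together with the subtrees branching off from them. Since $H$ is $2$-vertex-connected, $v$ is not an articulation point, so $H\setminus v$ is connected; the cut-pair condition is thus equivalent to asking whether the additional removal of the tree edge $(u,\mathit{p}(u))$ disconnects this graph.

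The key structural observation I would use is the following: because $\mathit{high}(u)=v$, every back-edge out of $T(u)$ has its upper endpoint either at $v$ itself or at a proper ancestor of $v$, and in particular no back-edge from $T(u)$ reaches a vertex in $T(v,u)$. Combined with the fact that vertices of $T(u)$ are descendants of $u$ and hence are not tree-ancestors of anything in $T(c)\setminus T(u)$, this yields that in $H$ there are no edges between $T(u)$ and $T(c)\setminus T(u)$ other than the removed tree edge $(u,\mathit{p}(u))$. I would also note at the start that $u\neq c$, since otherwise $(u,\mathit{p}(u))=(c,v)$ is already incident to $v$ and removing it is redundant with removing $v$, contradicting the cut-pair hypothesis.

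Given this, after removing $v$ and $(u,\mathit{p}(u))$, the only way $T(u)$ can communicate with the rest of $H$ is via back-edges from $T(u)$ to proper ancestors of $v$, which exist precisely when $\mathit{low}(u)<v$. Likewise, $T(c)\setminus T(u)$ can reach the rest only through back-edges from $T(c)\setminus T(u)$ to proper ancestors of $v$. For the forward direction: if $\mathit{low}(u)=v$, case $(1)$ holds and $T(u)$ gets isolated. Otherwise $\mathit{low}(u)<v$, so $T(u)\cup A$ remains connected; the disconnection must therefore separate $T(c)\setminus T(u)$ from $A$, which forces every back-edge from $T(c)$ to a proper ancestor of $v$ to start inside $T(u)$. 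Since $\mathit{M_p}(c)$ is by definition the nearest common ancestor of these starting points, we get $\mathit{M_p}(c)\in T(u)$, equivalently $u\leq \mathit{M_p}(c)$, giving case $(2)$.

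The converse is handled by simply verifying that each of the two cases does produce a disconnection: in case $(1)$ every back-edge from $T(u)$ terminates at the removed vertex $v$, so $T(u)$ is isolated; in case $(2)$, $u\leq \mathit{M_p}(c)$ places $\mathit{M_p}(c)$ inside $T(u)$, so every back-edge endpoint from $T(c)$ to a proper ancestor of $v$ originates in $T(u)$, leaving $T(c)\setminus T(u)$ cut off from $A$. The main point to be careful about (and what I expect to be the only real obstacle) is the ``no cross-edges'' observation between $T(u)$ and $T(c)\setminus T(u)$: this is exactly what the hypothesis $\mathit{high}(u)=v$ buys us, and without it one could not conclude that the connectivity of $T(u)$ to $A$ and that of $T(c)\setminus T(u)$ to $A$ may be analyzed independently.
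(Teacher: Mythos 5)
Your overall route is essentially the paper's: the same dichotomy $\mathit{low}(u)=p(c)$ versus $\mathit{low}(u)<p(c)$, the observation that $\mathit{high}(u)=v$ forbids any edge between $T(u)$ and $T(c)\setminus T(u)$ other than $(u,p(u))$, and the characterization via $\mathit{M_p}(c)$ lying in $T(u)$; phrasing the forward direction as a component analysis (using that $H\setminus v$ is connected, so the pair is a cut-pair iff $(u,p(u))$ is a bridge of $H\setminus v$) instead of the paper's explicit $u$--$p(u)$ path is only a cosmetic difference.

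There is, however, one step in your converse of case $(2)$ that does not hold as written: you infer that ``$u\leq \mathit{M_p}(c)$ places $\mathit{M_p}(c)$ inside $T(u)$''. The DFS-number inequality alone does not imply descendance: a vertex $w$ with $u\leq w$ may lie in a sibling subtree of $u$ inside $T(c)$. The inference is true here only because the other half of condition $(2)$, namely $\mathit{low}(u)<p(c)$, is available -- and you never invoke it in the converse. It yields a back-edge from some $x\in T(u)$ to a proper ancestor of $p(c)$, so $\mathit{M_p}(c)$ is an ancestor of $x$ and hence comparable with $u$; since $u\leq \mathit{M_p}(c)$, it cannot be a proper ancestor of $u$, so it must be a descendant of $u$. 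This is exactly the short argument the paper inserts at this point. Once it is added, the rest of your converse goes through: every tail of a back-edge from $T(c)$ to a proper ancestor of $p(c)$ is a descendant of $\mathit{M_p}(c)$, hence lies in $T(u)$, so $T(c)\setminus T(u)$, which contains $p(u)$, is cut off from everything else after deleting $p(c)$ and $(u,p(u))$.
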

\begin{proof}
\label{proof:highuEqv}
($\Rightarrow$) Suppose that $\mathit{low}(u)\neq \mathit{p}(c)$. Then, since $\mathit{low}(u)\leq \mathit{high}(u)$ and $\mathit{high}(u)=\mathit{p}(c)$, we have $\mathit{low}(u)<\mathit{p}(c)$. Furthermore, let $e$ be a back-edge that starts from $\mathit{T}(c)$ and ends in a proper ancestor of $\mathit{p}(c)$. We claim that $e$ starts from $\mathit{T}(u)$. For otherwise, the removal of both $\mathit{p}(c)$ and $(u,\mathit{p}(u))$ would not result in the disconnection of $u$ from $\mathit{p}(u)$. Since, in this case, we could start from $u$, traverse the subtree $\mathit{T}(u)$ until we reach  a vertex from which we can land with a back-edge on $\mathit{low}(u)$, then follow the tree path to the end of $e$ which is a proper ancestor of $\mathit{p}(c)$, and, after we land on the other end of $e$, which is a descendant of a proper ancestor of $u$ which is also a descendant of $c$, we can reach $\mathit{p}(u)$ through a path in $\mathit{T}(c)$. This shows that $\mathit{M_p}(c)$ is in $\mathit{T}(u)$, and therefore we have $u\leq \mathit{M_p}(c)$.\\
($\Leftarrow$) If (1) holds, then all back-edges that start from $\mathit{T}(u)$ and end in a proper ancestor of $u$ end precisely in $\mathit{p}(c)$. In this case, the removal of the pair $\{\mathit{p}(c),(u,\mathit{p}(u))\}$ disconnects the vertices $u$ and $\mathit{p}(u)$. If (2) holds, we claim that $\mathit{M_p}(c)$ is in $\mathit{T}(u)$. Indeed: since $\mathit{low}(u)<\mathit{p}(c)$, there exists a back-edge that starts from $\mathit{T}(u)$ and ends in a proper ancestor of $\mathit{p}(c)$. This implies that $\mathit{M_p}(c)$ is an ancestor of a descendant of $u$. But, since $u\leq \mathit{M_p}(c)$, $\mathit{M_p}(c)$ is not a proper ancestor of $u$. Therefore, it must be a descendant of $u$. Now, since $\mathit{M_p}(c)$ is in $\mathit{T}(u)$ and $\mathit{high}(u)=p(c)$, it is easy to see that the removal of the pair $\{\mathit{p}(c),(u,\mathit{p}(u))\}$ results in the disconnection of $u$ from $\mathit{p}(u)$.
\end{proof}

\begin{figure}[h!]
\begin{center}
\centerline{\includegraphics[trim={0cm 0cm 0cm 7cm}, scale=0.8, clip, width=\textwidth]{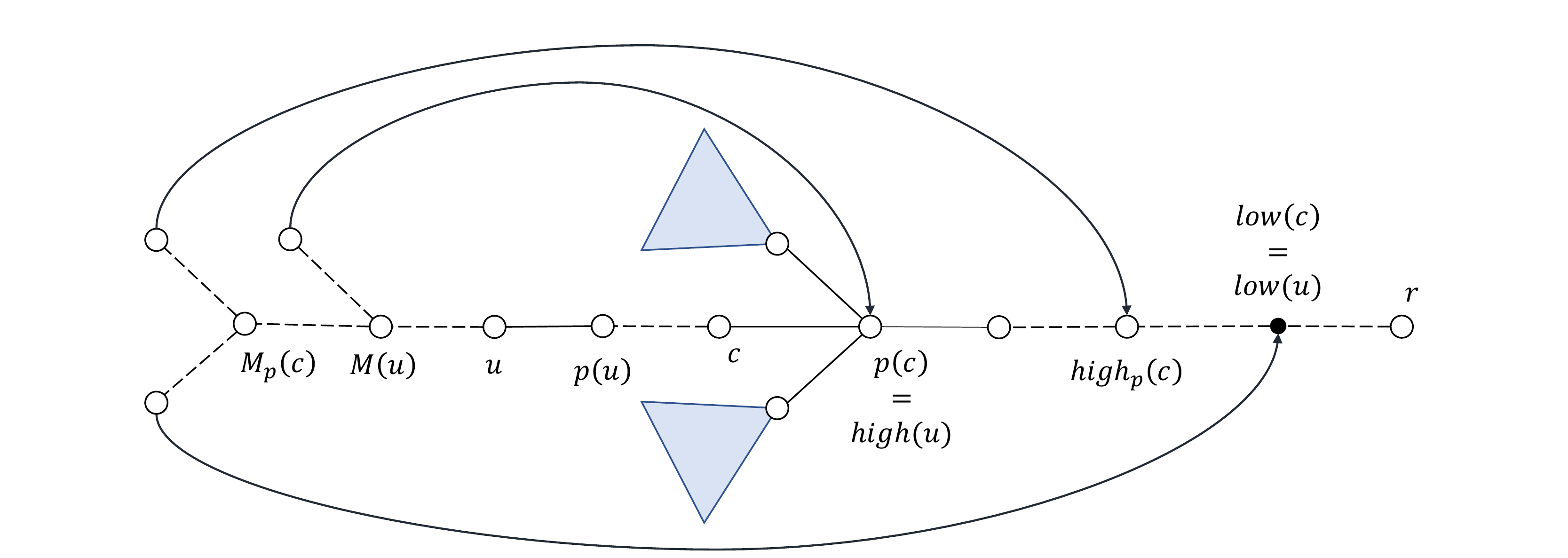}}
\caption{$\mathit{high}(u)=p(c)$ and $p(c)$ forms a cut-pair with $(u,p(u))$. Notice that in this case, where $\mathit{low}(u)<p(c)$, $M_p(c)$ lies in $T(u)$ (and, in fact, is a descendant of $M(u)$).}
\label{figure:high(u)=v}
\end{center}
\end{figure}

It is an immediate application of Proposition~\ref{proposition:high(u)=v} that Algorithm~\ref{algorithm:high(u)=v} correctly computes, for every vertex $v$, the number of cut-pairs $\{v,(u,p(u))\}$ with the property that $u$ is a descendant of $v$ with $\mathit{high}(u)=v$.

\begin{algorithm}[!h]
\caption{\textsf{$\mathit{high}(u)=v$}}
\label{algorithm:high(u)=v}
\LinesNumbered
\DontPrintSemicolon
calculate all lists $\mathit{high}^{-1}(v)$, for all vertices $v$, and have their elements sorted in increasing order\;
sort the list of the children of every vertex in increasing order\;
\ForEach{vertex $v$}{
  $u \leftarrow $ first element of $\mathit{high}^{-1}(v)$\;
  $c \leftarrow $ first child of $v$\;
  \While{$u\ne\emptyset$}{
    \lWhile{$c$ is not an ancestor of $u$}{$c \leftarrow $ next child of $v$}
    \lIf{$\mathit{low}(u)=v$ or ($u\leq\mathit{M_p}(c)$ and $u\neq c$)}
      {$\mathit{count}(v) \leftarrow \mathit{count}(v)+1$}
    $u \leftarrow $ next element of $\mathit{high}^{-1}(v)$\;
  }
}
\end{algorithm}

\subsubsection{The case $\mathit{high}(u)<v$}
\label{section:high(u)<v}

Our algorithm for this case is based on the following observation:

\begin{proposition}
\label{proposition:high(u)<v}
Let $\{\mathit{p}(c),(u,\mathit{p}(u))\}$ be a cut-pair such that $u$ is a descendant of $c$ with $\mathit{high}(u)<\mathit{p}(c)$. Then $\mathit{M}(u)=\mathit{M_p}(c)$. Conversely, if $u$ is a proper descendant of $c$ such that $\mathit{M}(u)=\mathit{M_p}(c)$ and $\mathit{high}(u)<\mathit{p}(c)$, then the pair $\{\mathit{p}(c),(u,\mathit{p}(u))\}$ is a cut-pair. (See Figure \ref{figure:high(u)<v}.)
\end{proposition}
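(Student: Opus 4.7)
My plan is to introduce two natural sets of back-edges and reduce the statement to their equality. Let $B(c)=\{(x,y):x\in T(c),\ y\text{ a proper ancestor of }p(c)\}$ and $B(u)=\{(x,y):x\in T(u),\ y\text{ a proper ancestor of }u\}$, so that $M_p(c)$ is the LCA of the lower endpoints of $B(c)$ and $M(u)$ is the LCA of those of $B(u)$. The first observation to record is that $\mathit{high}(u)<p(c)$ forces $B(u)\subseteq B(c)$: the lower endpoint lies in $T(u)\subseteq T(c)$, and the upper endpoint is at most $\mathit{high}(u)$, hence a proper ancestor of $p(c)$. Moreover $B(u)\neq\emptyset$ because $\mathit{high}(u)$ itself is witnessed by a back-edge out of $T(u)$. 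The proposition will follow if I prove that, under the hypotheses, the following three statements are equivalent: (a) $\{p(c),(u,p(u))\}$ is a cut-pair, (b) $B(c)=B(u)$, and (c) $M(u)=M_p(c)$.

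For (a)$\Rightarrow$(b) I would argue contrapositively. If there is a back-edge $(x,y)\in B(c)\setminus B(u)$, with $x\in T(c)\setminus T(u)$, then I would build an explicit $u$-to-$p(u)$ walk in $H\setminus\{p(c),(u,p(u))\}$: descend inside $T(u)$ to the endpoint of a back-edge reaching $\mathit{high}(u)$, walk along tree-edges from $\mathit{high}(u)$ to $y$ (both are proper ancestors of $p(c)$, so the path avoids $p(c)$), cross the back-edge $(x,y)$, and finally traverse the connected subtree $T(c)\setminus T(u)$ from $x$ to $p(u)$. For (b)$\Rightarrow$(a), I would check that once $B(c)=B(u)$ the subtree $T(c)\setminus T(u)$ has no back-edge leaving $T(p(c))$, and since $\mathit{high}(u)<p(c)$ no back-edge from $T(u)$ lands inside $T[c,p(u)]$ either; hence $T(c)\setminus T(u)$ becomes an isolated component in $H\setminus\{p(c),(u,p(u))\}$ that contains $p(u)$ but not $u$.

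Finally, the equivalence (b)$\Leftrightarrow$(c) is the purely structural step: (b) trivially gives (c) by taking LCAs on both sides, and for (c)$\Rightarrow$(b) I would note that if $(x,y)\in B(c)\setminus B(u)$ with $x\in T(c)\setminus T(u)$, then $M_p(c)$ is a common ancestor of $x$ and of the nonempty set of lower endpoints contributed by $B(u)\subseteq T(u)$, hence a proper ancestor of $u$, whereas $M(u)\in T(u)$ is a descendant of $u$, contradicting $M(u)=M_p(c)$. The main obstacle is the bookkeeping of which back-edges qualify for each of $M(u)$ and $M_p(c)$, and recognising that $\mathit{high}(u)<p(c)$ is precisely the condition that collapses $B(c)$ onto $B(u)$ and thereby ties the cut-pair property to the equality $M(u)=M_p(c)$.
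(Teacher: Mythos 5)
Your proof is correct and follows essentially the same route as the paper: the explicit reconnection walk through $\mathit{high}(u)$ and $(x,y)$ mirrors the paper's argument for why a cut-pair forces every back-edge of $T(c)$ reaching above $p(c)$ to start in $T(u)$, and your converse uses the same key fact that $M(u)=M_p(c)$ places $M_p(c)$ inside $T(u)$, isolating $T(c)\setminus T(u)$. Your intermediate condition $B(c)=B(u)$ is just a clean repackaging of these two observations, not a different argument.
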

\begin{proof}
\label{proof:highuGreaterThanv}
($\Rightarrow$) Let $(x,y)$ be a back-edge such that $x$ is in $\mathit{T}(u)$ and $y$ is a proper ancestor of $u$. Then, since $u$ is a descendant of $c$, $x$ is in $\mathit{T}(c)$, and, since $\mathit{high}(u)<\mathit{p}(c)$, $y$ is a proper ancestor of $\mathit{p}(c)$. This shows that $\mathit{M_p}(c)$ is an ancestor of $\mathit{M}(u)$. Conversely, let $e$ be a back-edge that starts from $\mathit{T}(c)$ and ends in a proper ancestor of $\mathit{p}(c)$. Then it is easy to see that $e$ must start from $\mathit{T}(u)$ (for otherwise, since $\mathit{high}(u)<\mathit{p}(c)$, the pair $\{\mathit{p}(c),(u,\mathit{p}(u))\}$ would not be a cut-pair, since, by removing it, both $u$ and $p(u)$ would remain connected with $\mathit{high}(u)$). Furthermore, since $\mathit{p}(c)$ is an ancestor of $u$, $e$ ends in a proper ancestor of $u$. This shows that $\mathit{M}(u)$ is an ancestor of $\mathit{M_p}(c)$. Thus we conclude that $\mathit{M}(u)=\mathit{M_p}(c)$.\\
($\Leftarrow$) Remove the vertex $\mathit{p}(c)$ and the edge $(u,\mathit{p}(u))$. Now, if it were possible to reach $\mathit{p}(u)$ from $u$ through a path in the remaining graph, such a path would have to include a back-edge that starts from $\mathit{T}(u)$. But such a back-edge will lead us to a proper ancestor of $\mathit{p}(c)$ (since $\mathit{high}(u)<\mathit{p}(c)$), and then the only way to get back to $\mathit{T}(c)$ (in which we must return, for this is where $\mathit{p}(u)$ lies) is to use a back-edge that starts from $\mathit{T}(c)$ and ends in a proper ancestor of $\mathit{p}(c)$. But such a back-edge must start from $\mathit{T}(u)$ (since $\mathit{M_p}(c)$ lies in $\mathit{T}(u)$). This shows that $\mathit{p}(u)$ cannot be reached from $u$.
\end{proof}

\begin{algorithm}[!h]
\caption{\textsf{$\mathit{high}(u)<v$}}
\label{algorithm:high(u)<v}
\LinesNumbered
\DontPrintSemicolon
calculate all lists $M^{-1}(m)$ and $M_p^{-1}(m)$, for all vertices $m$, and have their elements sorted in decreasing order\;
\ForEach{vertex $m$}{
  $u \leftarrow $ first element of $M^{-1}(m)$\;
  $c \leftarrow $ first element of $M_p^{-1}(m)$\;
  \While{$u\ne\emptyset$ and $c\ne\emptyset$}{
    \lWhile{$c\ne\emptyset$ and $c\geq u$}{$c \leftarrow $ next element of $M_p^{-1}(m)$}
    \lIf{$c=\emptyset$}{\textbf{break}}
    \If{$\mathit{high}(u)<p(c)$}{
      \label{alg_line_1}
      $\mathit{n\_edges} \leftarrow 0$\;
      \label{alg_line_2}
      $h \leftarrow \mathit{high}(u)$\;
      \While{$c\ne\emptyset$ and $h<p(c)$}{
         \label{alg_line_3}
         \While{$u\ne\emptyset$ and $c<u$}{
         \label{alg_line_4}
           $\mathit{n\_edges} \leftarrow \mathit{n\_edges}+1$\;
           $u \leftarrow $ next element of $M^{-1}(m)$\;
         }
         $\mathit{count}[p(c)] \leftarrow \mathit{count}[p(c)]+\mathit{n\_edges}$\;
         $c \leftarrow $ next element of $M_p^{-1}(m)$\;
      }
    }
    \Else{
      $u \leftarrow $ next element of $M^{-1}(m)$\;
    }
  }
}
\end{algorithm}

Algorithm~\ref{algorithm:high(u)<v} describes how we can compute, for every vertex $v$, the number of cut-pairs of the form $\{v,(u,p(u))\}$, where $u$ is a descendant of $v$ with $\mathit{high}(u)<v$.

\begin{figure}[h!]
\begin{center}
\centerline{\includegraphics[trim={0cm 0cm 0cm 8cm}, scale=0.8,clip, width=\textwidth]{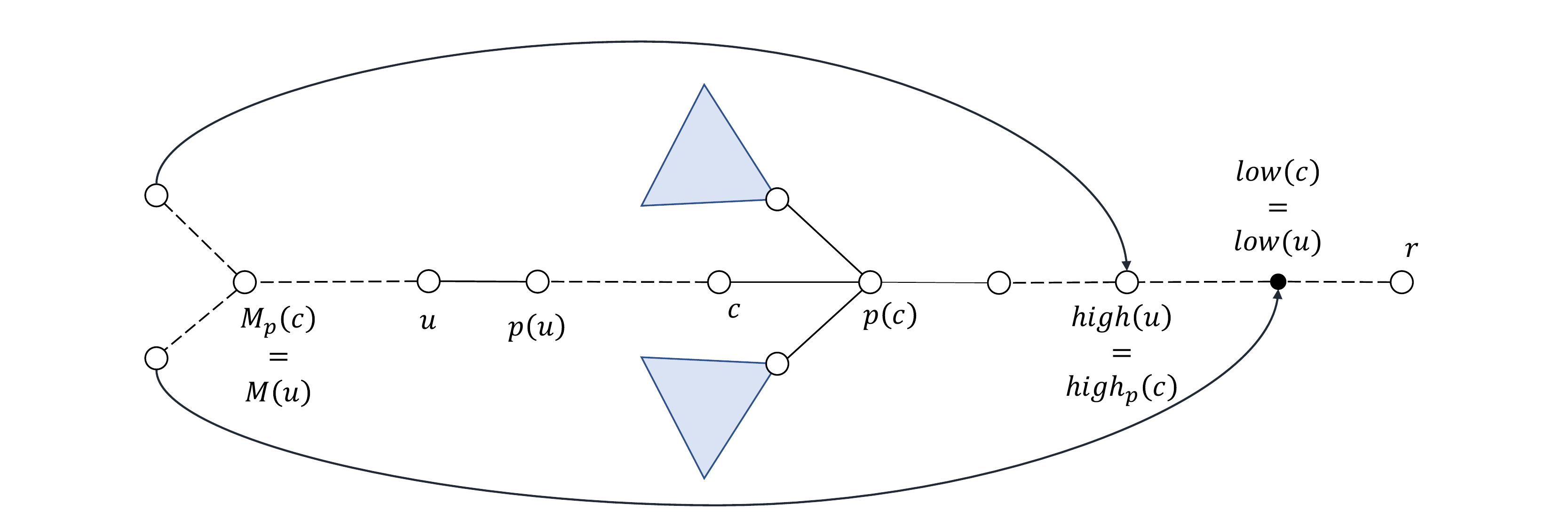}}
\caption{$\mathit{high}(u)<p(c)$ and $p(c)$ forms a cut-pair with $(u,p(u))$.}
\label{figure:high(u)<v}
\end{center}
\end{figure}

\begin{theorem}
Algorithm \ref{algorithm:high(u)<v} is correct.
\end{theorem}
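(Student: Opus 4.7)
The approach is to verify that, for each vertex $m$, the outer-loop iteration counts precisely the pairs $(u,c) \in M^{-1}(m) \times M_p^{-1}(m)$ satisfying $c < u$ (i.e., $c$ a proper ancestor of $u$) and $\mathit{high}(u) < p(c)$, contributing $1$ to $\mathit{count}[p(c)]$ per pair. By Proposition~\ref{proposition:high(u)<v}, this enumeration is precisely the set of cut-pairs of the form $\{p(c),(u,p(u))\}$ with $u$ a descendant of $c$ and $\mathit{high}(u)<p(c)$.

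The proof rests on two structural observations. First, since $M(x)$ and $M_p(x)$ are always descendants of $x$, both $M^{-1}(m)$ and $M_p^{-1}(m)$ lie on the tree path $T[m,r]$ and are therefore linearly ordered by DFS number; in particular, traversing $M_p^{-1}(m)$ in decreasing DFS order yields a strictly decreasing sequence of $p(c)$-values (since $c'<c$ forces $c'$ to be a proper ancestor of $c$, so $p(c')<p(c)$), making the condition ``$h<p(c)$'' monotone along the batch. Second, I claim the \emph{monotonicity of $\mathit{high}$ on $M^{-1}(m)$}: if $u_1,u_2 \in M^{-1}(m)$ with $u_2<u_1$, then $\mathit{high}(u_2) \leq \mathit{high}(u_1)$. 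To prove this, note that $M(u_i)=m$ forces every back-edge $(x,y)$ with $x \in T(u_i)$ and $y$ a proper ancestor of $u_i$ to satisfy $x \in T(m)$, since $m$ is the nearest common ancestor of all such $x$'s. Hence $\mathit{high}(u_i)$ is the maximum $y \in P(u_i) := T[r,p(u_i)]$ over back-edges originating in $T(m)$. Since $P(u_2) \subsetneq P(u_1)$, the maximum over the smaller set is bounded by the maximum over the larger.

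With these facts, correctness follows. Whenever the algorithm enters a batch at $u=u_{\mathrm{initial}}$ with $\mathit{high}(u_{\mathrm{initial}})<p(c)$, it fixes $h=\mathit{high}(u_{\mathrm{initial}})$ and iterates through the prefix of $M_p^{-1}(m)$ with $p(c)>h$. Throughout this batch the cumulative counter $\mathit{n\_edges}$ equals $\lvert\{u' \in M^{-1}(m): u' \leq u_{\mathrm{initial}},\; u'>c\}\rvert$, and by the monotonicity lemma every such $u'$ satisfies $\mathit{high}(u') \leq h<p(c)$, so every counted pair $(u',c)$ is valid. For completeness, every valid pair $(u^*,c^*)$ is counted exactly once. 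Namely, $c^*$ must be processed in some batch: otherwise $c^*$ would be advanced past by the outer-while inner skip, which requires $u \leq c^*$ at some earlier moment, forcing $u^*$ to have been advanced past in an earlier batch at some $u^{**}>u^*$; but then the inner-while condition $c < u^*$ in that batch gives $\mathit{high}(u^{**})<p(c)<u^*$, which (by the analysis of the monotonicity lemma, since the maximum defining $\mathit{high}(u^{**})$ is attained at a $y<u^*$ lying in $P(u^*)$) yields $\mathit{high}(u^*)=\mathit{high}(u^{**})<p(c^*)$, so $c^*$ would also satisfy that earlier batch's condition and be processed there, contradicting the skip. Hence $c^*$ is processed in some batch at $u_{\mathrm{initial}} \geq u^*$, and by the time the update $\mathit{count}[p(c^*)] \mathrel{+}= \mathit{n\_edges}$ is executed, the inner while has advanced $u$ past $u^*$ (because $u^*>c^*$), so $u^*$ lies in $\mathit{n\_edges}$. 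No double counting occurs, since $c^*$ is processed at most once along the traversal of $M_p^{-1}(m)$.

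The main obstacle is establishing the monotonicity of $\mathit{high}$ on $M^{-1}(m)$; the observation that $M(u)=m$ forces every back-edge contributing to $\mathit{high}(u)$ to originate in $T(m)$ is the decisive structural fact, after which the comparison reduces to a ``max over a subset is at most max over the superset'' argument. The remaining bookkeeping---validating counted pairs and ruling out missed or double-counted valid pairs---is then a careful but routine case analysis on the transitions between batches and ``Else'' advancements.
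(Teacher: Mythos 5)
Your overall strategy mirrors the paper's: reduce, via Proposition~\ref{proposition:high(u)<v}, to counting the pairs $(u,c)\in M^{-1}(m)\times M_p^{-1}(m)$ with $c<u$ and $\mathit{high}(u)<p(c)$, and maintain the invariant that $\mathit{n\_edges}$ equals the number of elements of $M^{-1}(m)$ in $(c,u_{\mathrm{initial}}]$. The soundness half is correct, and your characterization of $\mathit{high}(u)$ for $u\in M^{-1}(m)$ as a maximum over back-edges leaving $T(m)$ is a clean way to get both the monotonicity $\mathit{high}(u_2)\le\mathit{high}(u_1)$ and the equality $\mathit{high}(u')=\mathit{high}(u)$ for $u'\in M^{-1}(m)\cap T[u,\mathit{high}(u))$; the latter equality (the analogue of Lemma~\ref{lemma:same_M(v)}) is the fact the paper's argument actually turns on, and mere monotonicity would not suffice for completeness.

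The completeness half, however, has concrete gaps. First, the chain ``$\mathit{high}(u^{**})<p(c)<u^*$ \dots\ yields $\mathit{high}(u^*)=\mathit{high}(u^{**})<p(c^*)$'' fails as written: since $c^*<c$ on the path $T[m,r]$ we have $p(c^*)<p(c)$, so the bound points the wrong way. The inequality $\mathit{high}(u^{**})<p(c^*)$ must instead be derived from the assumed validity of $(u^*,c^*)$ (which gives $\mathit{high}(u^*)<p(c^*)$) combined with the equality $\mathit{high}(u^*)=\mathit{high}(u^{**})$; only then, using the monotonicity of $c\mapsto p(c)$ along $M_p^{-1}(m)$ (which you state but never deploy here), can you conclude that the batch started at $u^{**}$ cannot terminate before reaching $c^*$. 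Second, your case analysis assumes $u^*$ can only be consumed inside a batch; it can also be consumed by the \textbf{Else} branch, and that case needs its own argument (there $\mathit{high}(u^*)\ge p(c')$ for the current $c'$, and $c'\ge c^*$ because $c^*$ is still unconsumed, contradicting validity). Third, the claim that $c^*$ is processed in a batch with $u_{\mathrm{initial}}\ge u^*$ is precisely what must be proved and is not delivered by the dichotomy ``processed in a batch or skipped'': you must also exclude the possibility that $c^*$ is processed in a batch whose initial $u$ is smaller than $u^*$, i.e., that $u^*$ was consumed strictly earlier without $c^*$ being reached. All of these are repairable with the tools you already have, but as written the completeness direction is not established.
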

\begin{proof}
According to Proposition \ref{proposition:high(u)<v}, for every cut-pair of the form $\{p(c),(u,p(u))\}$ with $c$ an ancestor of $u$ and $\mathit{high}(u)<p(c)$, we have $M(u)=\mathit{M_p}(c)$. Therefore, in order to count all these pairs, it is sufficient to focus our attention, for every vertex $m$, on the lists $M^{-1}(m)$ and $M_p^{-1}(m)$. Now, fix a vertex $m$, and let $U(c)$, for a vertex $c$ in $M_p^{-1}(m)$, denote the (possibly empty) set of all $u$ in $M^{-1}(m)$ with the property that $u$ is a (proper) descendant of $c$ such that $\{p(c),(u,p(u))\}$ is a cut-pair. Let $c$ be a vertex in $M_p^{-1}(m)$ such that $U(c)$ is not empty, and let $u$ be the greatest element in $U(c)$. Proposition \ref{proposition:high(u)<v} implies that $\mathit{high}(u)<p(c)$. Now let $c'$ be a vertex in $M_p^{-1}(m)$ such that $c'\leq c$ and $\mathit{high}(u)<p(c')$. Since every $u'$ in $M^{-1}(m)\cap T[u,\mathit{high}(u))$ has $\mathit{high}(u')=\mathit{high}(u)$, Proposition \ref{proposition:high(u)<v} implies that every $u'$ in $U(c)$ is also in $U(c')$ (since such a $u'$ is also a proper descendant of $c'$). Furthermore, we claim that no $u'$ strictly greater than $u$ can be in $U(c')$: since $u$ is the greatest element in $M^{-1}(m)$ that is a descendant of $c$ such that $\{p(c),(u,p(u))\}$ is a cut-pair, for every $u'$ in $M^{-1}(m)$ strictly greater than $u$ we must have $\mathit{high}(u')\geq p(c)$ (by Proposition \ref{proposition:high(u)<v}), and therefore $\mathit{high}(u')\geq p(c')$; our claim follows now from Proposition \ref{proposition:high(u)<v}. We conclude that $U(c')=U(c)\cup( T[c,c') \cap M^{-1}(m))$. Therefore, if $\#U(c)$ is known, in order to find $\#U(c')$ it is sufficient to find the elements of $U(c')$ in $M^{-1}(m)\cap T[c,c')$ - call their collection $C$ - and then $\#U(c')=\#U(c)+\#C$. This explains the counting procedure in Algorithm \ref{algorithm:high(u)<v}.
Now, suppose that we have all the lists $M^{-1}(m)$ and $M_p^{-1}(m)$ computed and their elements sorted in decreasing order. Algorithm \ref{algorithm:high(u)<v} works by finding the first $u$ in $M^{-1}(m)$ with the property that there exists a $c$ in $M_p^{-1}(m)$ such that $c$ is an ancestor of $u$ and $\{p(c),(u,p(u))\}$ is a cut-pair. Now, thanks to what we said above, we can easily calculate $\#U(c)$, for every $c$ in $M_p^{-1}(m)$ that has $\mathit{high}(u)<p(c)$. Then, after we have properly updated all $\mathit{count}[p(c)]$, for every such $c$, we only have to repeat the same process for the greatest element $u'$ in $M^{-1}(m)$ which is lower than $\mathit{high}(u)$ and such that there exists a $c$ in $M_p^{-1}(m)$ such that $c$ is an ancestor of $u'$ and $\{p(c),(u',p(u'))\}$ is a cut-pair - and keep repeating, until we reach the end of $M^{-1}(m)$ (or $M_p^{-1}(m)$).
\end{proof}

As in Section~\ref{section:M(u)>v}, a minor extension of Algorithm \ref{algorithm:high(u)<v} enables us to efficiently answer queries of the form
``report all edges that form a cut-pair (of this type) with a given vertex $v$''.
We only have to store, for every vertex $c$, a vertex $\mathit{max}[c]$, the greatest descendant $u$ of $c$ with the property that $\{p(c),(u,p(u))\}$ is a cut-pair of this type (or $\mathit{null}$ if no such vertex $u$ exists).
To do this, between lines \ref{alg_line_1} and \ref{alg_line_2} we keep $u$ stored in a variable $\mathit{temp}$, and between lines \ref{alg_line_3} and \ref{alg_line_4} we set $\mathit{max}[c] \leftarrow \mathit{temp}$. Now, to answer a query of the above type, we scan the list of the children of $v$, and for every child $c$ of $v$ with $\mathit{max}[c]\ne\mathit{null}$, we traverse the (decreasingly sorted) list $M^{-1}(M_p(c))$ from $\mathit{max}[c]$ until we reach an element $u'$ which is not a proper descendant of $c$. For every vertex $u$ that we encounter (except $u'$), we mark the edge $(u,p(u))$, as one forming a cut-pair with $v$.
\\

Finally, let us briefly explain why Algorithms \ref{algorithm:Mu=v}, \ref{algorithm:Mu>v}, \ref{algorithm:high(u)=v}, and \ref{algorithm:high(u)<v}, run in linear time. All the required sorted lists can be computed in linear time by bucket sorting. For example, we can sort the list of children of $v$, for all vertices $v$, in increasing order w.r.t. the $\mathit{high_p}$ values (as needed in Algorithm \ref{algorithm:Mu=v}), as follows.
First, we initialize all lists $\mathit{high_p^{-1}}(x)$ to $\emptyset$. Then, for every vertex $c$ ($\neq r$ or the child of $r$), we insert into the list $\mathit{high_p^{-1}}(\mathit{high_p}(c))$ the element $c$. Now we initialize the list of children of every vertex $v$ to $\emptyset$. We process all vertices in increasing order, and for every vertex $x$ we do the following: we traverse the list $\mathit{high_p^{-1}}(x)$, and for every $c$ in $\mathit{high_p^{-1}}(x)$ we insert into the list of children of $p(c)$ the element $c$. The computation of all $M^{-1}(m)$, $M_p^{-1}(m)$, $\mathit{high}^{-1}(x)$, etc., is performed in a similar manner. Now, the key observation to see why the main part of Algorithms \ref{algorithm:Mu=v}, \ref{algorithm:Mu>v}, \ref{algorithm:high(u)=v}, and \ref{algorithm:high(u)<v} runs in linear time, is that the final step in every \textbf{while} loop in them always moves forward to the next element of the list under consideration (and there is no backward movement).

\clearpage

\clearpage

\appendix

\section{Computing vertex-edge cut-pairs via SPQR trees}
\label{section:SPQR}

An SPQR tree~\cite{SPQR:triconnected,SPQR:planarity} $\mathcal{T}$ for a biconnected graph $H$ represents the triconnected components of $H$. Each node $\alpha \in \mathcal{T}$  is associated with an undirected graph or multigraph $H_\alpha$. The node $\alpha$, and the graph $H_\alpha$ associated with it, has one of the following types:
\begin{itemize}
\item If $\alpha$ is an $S$-node, then $H_\alpha$ is a cycle graph with three or more vertices and edges.
\item If $\alpha$ is a $P$-node, then  $H_\alpha$ is a multigraph with two vertices and at lest $3$ parallel edges.
\item If $\alpha$ is a $Q$-node, then $H_\alpha$ is a single real edge.
\item If $\alpha$ is an $R$-node, then $H_\alpha$ is a simple triconnected graph.
\end{itemize}
Each edge $\{\alpha,\beta\}$ between two nodes of the SPQR tree is associated with two \emph{virtual edges}, where one is an edge in $H_\alpha$ and the other is an edge in $H_\beta$.
If $\{u,v\}$ is a separation pair in $H$, then one of the following cases applies (see, e.g., \cite{wiki:SPQR}):
\begin{itemize}
\item[(a)] $u$ and $v$ are the endpoints of a virtual edge in the graph $H_\alpha$ associated with an $R$-node $\alpha$ of $\mathcal{T}$.
\item[(b)] $u$ and $v$ are vertices in the graph $H_\alpha$ associated with a $P$-node $\alpha$ of $\mathcal{T}$.
\item[(c)] $u$ and $v$ are vertices in the graph $H_\alpha$ associated with an $S$-node $\alpha$ of $\mathcal{T}$, such that either $u$ and $v$ are not adjacent, or the edge $\{u,v\}$ is virtual.
\end{itemize}

In case (c), if $\{u,v\}$ is a virtual edge, then $u$ and $v$ also belong to a $P$-node or an $R$-node. If $u$ and $v$ are not adjacent then $H \setminus \{u,v\}$ consists of two components that are represented by two paths of the cycle graph $H_\alpha$ associated with the $S$-node $\alpha$ and with the SPQR tree nodes attached to those two paths.
Let $e=\{x,y\}$ be an edge of $H$ such that $\{v,e\}$ is a vertex-edge cut pair of $H$.
Then, $\mathcal{T}$ must contain an $S$-node $\alpha$ such that $v$, $x$ and $y$ are vertices of $H_\alpha$ and $\{u,v\}$ is not a virtual edge.
This observation implies that we can identify (and count) all vertex-edge cut pairs of $H$ by using $\mathcal{T}$.
Gutwenger and P. Mutzel~\cite{SPQR:GM} showed that an SPQR tree can be constructed in linear time, by extending the triconnected components algorithm of Hopcroft and Tarjan.
Hence, this gives a linear-time algorithm to our problem.

\section{Counting all cut-pairs in linear time}
\label{section:finding_cut_pairs}
Let $H=(V,E)$ be a $2$-edge-connected undirected graph. For every $e$ in $E$, we define $\mathit{count}(e)$ $:=$ $\#\{e' \in E \mid \{e,e'\}$ is a cut-pair$\}$. We will find all edges that belong to a cut-pair of $H$ by computing all $\mathit{count}(e)$. The parameter $\mathit{count}(e)$ is also useful for counting TSCCs, as {{}} suggested by the following Lemma.

\begin{lemma}
\label{lemma:counting_tsccs_edge}
Let $G$ be a twinless strongly connected digraph, and let $e$ be a twinless strong bridge of $G$ which is not a strong bridge. Then $\mathit{count}(\tilde{e})+1$, where $\tilde{e}$ is the edge of $G^u$ corresponding to $e$, is the number of TSCCs of $G\setminus{e}$.
\end{lemma}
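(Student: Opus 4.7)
The plan is to mirror the argument already given for Lemma~\ref{lemma:counting_tsccs}, replacing vertex removal with edge removal. First I would note that because $e$ is not a strong bridge of $G$, the digraph $G\setminus e$ is still strongly connected. By Raghavan's characterization (Theorem~\ref{theorem:TwinlessCharacterization}) applied to each SCC, the number of TSCCs of $G\setminus e$ equals the number of $2$-edge-connected components of its underlying graph $(G\setminus e)^u$. Moreover, since a connected graph decomposes into $2$-edge-connected components that meet along bridges, giving a tree structure, the number of $2$-edge-connected components of a connected graph equals the number of its bridges plus $1$. So the core task is to reduce the bridge count in $(G\setminus e)^u$ to $\mathit{count}(\tilde e)$.

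Next I would use Lemma~\ref{lemma:tsb}: since $e=(x,y)$ is a twinless strong bridge which is not a strong bridge, its twin $(y,x)$ is not present in $G$. Therefore removing $e$ from $G$ removes $\tilde e=\{x,y\}$ from the underlying graph, i.e. $(G\setminus e)^u = G^u \setminus \tilde e$. Since $G$ is twinless strongly connected, $G^u$ is $2$-edge-connected (Theorem~\ref{theorem:TwinlessCharacterization}), so $G^u\setminus \tilde e$ is still connected and the ``number of components equals bridges plus one'' formula applies.

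Finally, I would identify the bridges of $G^u\setminus \tilde e$ with the edges that form a cut-pair with $\tilde e$ in $G^u$. Concretely, an edge $e'\neq \tilde e$ is a bridge of $G^u\setminus \tilde e$ if and only if $G^u\setminus\{\tilde e, e'\}$ is disconnected, which is precisely the definition of $\{\tilde e, e'\}$ being a cut-pair of $G^u$. Since $G^u$ is $2$-edge-connected, no such $e'$ is a bridge of $G^u$ alone, so this equivalence is clean. Hence the number of bridges of $G^u\setminus \tilde e$ equals $\mathit{count}(\tilde e)$, and the number of TSCCs of $G\setminus e$ equals $\mathit{count}(\tilde e)+1$, as claimed.

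There is no real obstacle here; the only subtlety worth stating carefully is the use of Lemma~\ref{lemma:tsb} to conclude that deleting the directed edge $e$ actually deletes the corresponding undirected edge $\tilde e$ from $G^u$ (rather than merely reducing its multiplicity when the twin is present). Once that point is made, the lemma follows verbatim from the tree structure of $2$-edge-connected components together with Raghavan's characterization of TSCCs in strongly connected digraphs.
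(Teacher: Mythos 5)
Your proof is correct and follows essentially the same route as the paper's: strong connectivity of $G\setminus e$, Raghavan's characterization, the tree structure of $2$-edge-connected components giving ``bridges $+\,1$'', and the identification of those bridges with the edges forming a cut-pair with $\tilde e$. In fact you make explicit two points the paper leaves implicit, namely that Lemma~\ref{lemma:tsb} guarantees $(G\setminus e)^u = G^u\setminus\tilde e$ because the twin edge is absent, and that the bridges of $G^u\setminus\tilde e$ are exactly the edges counted by $\mathit{count}(\tilde e)$.
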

\begin{proof}
Since $e$ is not a strong bridge, $G\setminus{e}$ is strongly connected. By Theorem~\ref{theorem:TwinlessCharacterization}, the TSCCs of a strongly connected digraph correspond to the $2$-edge-connected components of its underlying undirected graph. Now, the number of the $2$-edge-connected components of $(G\setminus{e})^u = G^u \setminus \tilde{e}$ equals the number of its bridges + $1$ (this is due to the tree structure of the $2$-edge-connected components of a connected graph). By definition, this number is equal to $\mathit{count}(\tilde{e})$.
\end{proof}

To compute all $\mathit{count}(e)$, we will work on the tree structure $T$, with root $r$, provided by a DFS on $H$. Then, if $\{e,e'\}$ is a cut-pair of edges, either one of them is a back-edge and the other one is a tree-edge, or both of them are tree-edges. (It cannot be the case that both of them are back-edges, since their removal would not disconnect the graph.) Furthermore, in the case that both of them are tree-edges, we have the following:

\begin{lemma}
\label{lemma:cut_pairs_on_line}
If $\{e,e'\}$ is a cut-pair such that both $e$ and $e'$ are tree-edges, then they both lie on the simple tree path $T[u,r]$, for some vertex $u$.
\end{lemma}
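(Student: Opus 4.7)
The plan is to prove the contrapositive: if $e=(v,p(v))$ and $e'=(w,p(w))$ are two tree edges such that $v$ and $w$ are incomparable in the ancestor order of $T$ (i.e., neither is an ancestor of the other), then $\{e,e'\}$ cannot be a cut-pair. Note that saying both edges lie on some path $T[u,r]$ is exactly the statement that one of $v,w$ is an ancestor of the other (take $u$ to be the deeper of the two), so this contrapositive gives the lemma.

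Assume for contradiction that $v$ and $w$ are incomparable. Let $a$ be the nearest common ancestor of $v$ and $w$ in $T$. Then $a \neq v$ and $a \neq w$, and the subtrees $T(v)$ and $T(w)$ sit in distinct subtrees of two different children of $a$; in particular, $T(v) \cap T(w) = \emptyset$. Removing $e$ and $e'$ from the tree $T$ therefore yields exactly three connected pieces: $T(v)$, $T(w)$, and $R := V \setminus (T(v) \cup T(w))$, the last of which remains connected because it is a subtree of $T$ (the root component obtained by pruning off $T(v)$ and $T(w)$).

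Next, I invoke $2$-edge-connectivity: since $H$ has no bridges, $e$ alone does not disconnect $H$, so there must exist a non-tree edge with one endpoint in $T(v)$ and the other outside $T(v)$. In a DFS tree of an undirected graph every non-tree edge is a back-edge, so such an edge has its other endpoint at a proper ancestor of $v$, which necessarily lies on the path $T[r,p(v)]$. Crucially, this path is disjoint from $T(w)$: the ancestors of $v$ are the vertices $a$ together with the ancestors of $a$ and the vertices of $T(a,v]$, none of which belong to $T(w)$. Hence such a back-edge joins $T(v)$ to $R$. By the symmetric argument, there is a back-edge joining $T(w)$ to $R$.

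Putting these together, in $H \setminus \{e,e'\}$ both $T(v)$ and $T(w)$ are attached to $R$ by a back-edge, so the whole graph remains connected, contradicting the assumption that $\{e,e'\}$ is a cut-pair. Thus $v$ and $w$ must be comparable, and both $e,e'$ lie on $T[u,r]$ where $u$ is the deeper of the two. The only subtle point is verifying that the back-edges escaping $T(v)$ land in $R$ and not in $T(w)$ (and vice versa), which is exactly where the incomparability of $v$ and $w$, combined with the fact that DFS on undirected graphs produces only tree- and back-edges, is used.
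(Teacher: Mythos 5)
Your proof is correct and follows essentially the same route as the paper: assume the two lower endpoints are incomparable and use $2$-edge-connectivity to produce back-edges escaping the subtree(s) to proper ancestors, which by incomparability land outside the other subtree, so the graph stays connected after removing both tree edges. The only cosmetic difference is that you establish connectivity of all three pieces ($T(v)$, $T(w)$, and the rest) with two back-edges, whereas the paper uses a single back-edge out of $T(u)$ to reconnect $u$ with $p(u)$; both arguments rest on the same observations.
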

\begin{proof}
Since both $e$ and $e'$ are tree-edges, there exist vertices $u$ and $v$, such that $e=(u,p(u))$ and $e'=(v,p(v))$. Since the graph is $2$-edge-connected, $u$ is distinct from $v$, and let's assume, without loss of generality, that $u>v$. Now, suppose that $e$ and $e'$ are not part of the simple tree path $T[u,r]$. Then $v$ is not an ancestor of $u$. Furthermore, since $u>v$, $v$ is not a descendant of $u$ either. Now, remove $\{e,e'\}$ from the graph. We note three facts. First, since $v$ is not a descendant of $u$, $T(u)$ remains connected. Second, since the graph is $2$-edge-connected, there exists a back-edge connecting a vertex from $T(u)$ with a proper ancestor $x$ of $u$. Third, since $v$ is not an ancestor of $u$, the vertices on the simple tree path $T[p(u),x]$ remain connected. These three facts imply that $u$ remains connected with $p(u)$, and therefore $\{e,e'\}$ is not a cut-pair. A contradiction.
\end{proof}

Thus we have two distinct cases to consider, and we will compute $\mathit{count}(e)$ by counting the number of cut-pairs $\{e,e'\}$ in each case. We will handle these cases separately, by providing a specific algorithm for each one of them. We shall begin with the case that one of those edges is a back-edge, since this is the easiest to handle. We suppose that all $\mathit{count}(e)$ have been initialized to zero.

\subsection{The case back-edge - tree-edge}
\label{subsection:back_edge_tree_edge}
Our algorithm for this case is based on the following observation:
\begin{proposition}
\label{proposition:back_edge}
Let $e$ be a back-edge and $u$ a vertex distinct from $r$. The pair $\{e,(u,p(u))\}$ is a cut-pair if and only if $e$ starts from $T(u)$, ends in a proper ancestor of $u$, and is the only back-edge with this property.
\end{proposition}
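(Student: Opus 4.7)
The proof is a straightforward DFS-tree cut argument, resting on the standard fact that in an undirected DFS the only non-tree edges are back-edges, so every edge of $H$ with exactly one endpoint inside $T(u)$ is either the tree edge $(u,p(u))$ itself or a back-edge joining a descendant of $u$ with a proper ancestor of $u$.

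For the forward direction, suppose $\{e,(u,p(u))\}$ is a cut-pair. Since $H$ is $2$-edge-connected, $(u,p(u))$ is not a bridge, so there is at least one back-edge $f$ whose upper endpoint lies in $T(u)$ and whose lower endpoint is a proper ancestor of $u$. I would then argue by contradiction: if $e$ did not have this same form, then after removing both $e$ and $(u,p(u))$, the back-edge $f$ (or any other such back-edge) would still connect $T(u)$ with $V\setminus T(u)$, and, together with the fact that $T(u)$ and $H\setminus T(u)$ are each internally connected in the remaining graph (the former because only one boundary edge of $T(u)$ is touched, the latter because removing any single edge of the $2$-edge-connected graph $H$ leaves a connected graph, and all of $H\setminus T(u)$ stays intact), $H\setminus\{e,(u,p(u))\}$ would still be connected. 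Hence $e$ must be a back-edge from $T(u)$ to a proper ancestor of $u$. The same argument applied to any putative second such back-edge $e'\ne e$ shows uniqueness: otherwise removing $e$ and $(u,p(u))$ still leaves $e'$ joining $T(u)$ to the rest.

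For the backward direction, suppose $e$ is the unique back-edge from $T(u)$ to a proper ancestor of $u$. Then the only edges of $H$ with exactly one endpoint in $T(u)$ are $(u,p(u))$ and $e$, so deleting both leaves no edge between $T(u)$ and $V\setminus T(u)$. Since $u\in T(u)$ and $p(u)\in V\setminus T(u)$, the resulting graph is disconnected, and $\{e,(u,p(u))\}$ is a cut-pair.

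There is no real obstacle; the only care needed is the clean enumeration of the ``boundary edges'' of $T(u)$ via DFS properties, and the observation that neither $H\setminus e$ nor $H\setminus(u,p(u))$ is disconnected when taken alone (so the cut-pair property really forces both removed edges to lie in the minimal cut separating $T(u)$ from its complement). I would state this boundary-edge fact as a one-line preliminary observation and then handle the two directions as above.
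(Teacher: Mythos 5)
Your proof is correct and follows essentially the same route as the paper: $2$-edge-connectivity guarantees at least one back-edge from $T(u)$ to a proper ancestor of $u$, any second such edge (or an $e$ not of this form) keeps $u$ and $p(u)$ connected after the removal, and the converse follows by noting that $(u,p(u))$ and such back-edges are the only edges leaving $T(u)$. The only cosmetic remark is that the internal connectivity of $V\setminus T(u)$ needs no appeal to $2$-edge-connectivity --- it is immediate since only tree edges incident to $T(u)$ and the back-edge $e$ are removed, so the tree edges outside $T(u)$ already span it.
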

\begin{proof}
{{}}
($\Rightarrow$) Since the graph is $2$-edge-connected, there exists at least one back-edge $e''$ with the property that $e''$ connects a descendant of $u$ with a proper ancestor of $u$. Supposing $e''\neq e$, we see that $\{e,(u,p(u))\}$ cannot be a cut-pair: for in this case, by removing $\{e,(u,p(u))\}$, $u$ remains connected with $p(u)$. A contradiction.\\
($\Leftarrow$) Remove the edge $(u,p(u))$ from the graph. Now every path that connects $u$ with $p(u)$ must necessarily use a back-edge connecting a vertex from $T(u)$ with a proper ancestor of $u$. If $e$ is the only back-edge with this property, its removal leaves $u$ and $p(u)$ in different connected components, and thus $\{e, (u,p(u))\}$ is a cut-pair.
\end{proof}

This implies that, for every vertex $u$, there exists at most one cut-pair of the form $\{e,(u,p(u))\}$, where $e$ is a back-edge, and it immediately suggests an algorithm for determining whether such a cut-pair exists. We only have to count, for every vertex $u$ ($\neq r$), the number $\mathit{b\_count}(u)$ $:=$ $\#\{$back-edges that start from $T(u)$ and end in a proper ancestor of $u$$\}$. To do this, we define, for every vertex $u$, $\mathit{up}(u)$ $:=$ $\#\{$back-edges that start from $u$ and end in an ancestor of $u$$\}$, and $\mathit{down}(u)$ $:=$ $\#\{$back-edges that start from $T(u)$ and end in $u$$\}$. 
All $\mathit{up}(u)$ and $\mathit{down}(u)$ can be computed easily in linear time, during the DFS. $\mathit{b\_count}(u)$ can be computed recursively: if $c_1,\dotsc,c_k$ are the children of $u$, then $\mathit{b\_count}(u)=\mathit{up}(u)+\mathit{b\_count}(c_1)+\dotsc+\mathit{b\_count}(c_k)-\mathit{down}(u)$; if $u$ is childless, $\mathit{b\_count}(u)=\mathit{up}(u)$.
Now, for every vertex $u$ that has $\mathit{b\_count}(u)=1$, we set $\mathit{count}[(u,p(u)] \leftarrow \mathit{count}[(u,p(u)]+1$. Furthermore, in this case there exists only one back-edge $(x,y)$ such that $x$ is a descendant of $u$ and $y$ is a proper ancestor of $u$, and so we have $x=M(u)$ and $y=\mathit{low}(u)$. Thus we also set $\mathit{count}[(M(u),\mathit{low}(u))] \leftarrow \mathit{count}[(M(u),\mathit{low}(u))]+1$.

\subsection{The case where both edges are tree-edges}
Our algorithm for this case is based on the following observation:

\begin{proposition}
\label{proposition:tree_edges_cut}
Let $u$, $v$ be two vertices such that $v$ is an ancestor of $u$. Then $\{(u,p(u)),(v,p(v))\}$ is a cut-pair if and only if $v$ is a proper ancestor of $u$ with $M(u)=M(v)$ and $\mathit{high}(u)<v$.
\end{proposition}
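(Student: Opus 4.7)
The plan is to analyze the connectivity structure of $H$ after removing both edges $(u,p(u))$ and $(v,p(v))$, and translate the condition ``these DFS-forest pieces fail to be reconnected by back-edges'' into the stated invariants on $M$ and $\mathit{high}$. First I would observe that, since $v$ is an ancestor of $u$, removing the two tree edges from $T$ splits it into three subtrees: $C_1 := T \setminus T(v)$ (containing $r$), $C_2 := T(v) \setminus T(u)$ (containing both $v$ and $p(u)$), and $C_3 := T(u)$ (containing $u$). The pair is a cut-pair iff, after this removal, the remaining back-edges of $H$ fail to reconnect all three $C_i$. Note that if $u=v$ then both edges coincide and $2$-edge-connectedness rules out a cut-pair, so throughout we may assume $v$ is a \emph{proper} ancestor of $u$.

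Next I would classify every back-edge $(x,y)$ of $H$ (with $y$ a proper ancestor of $x$) by which two components its endpoints lie in. The only back-edges whose endpoints lie in distinct $C_i$'s are of three kinds: type (a), with $x \in T(u)$ and $y \in T[v,p(u)]$ (reconnecting $C_2$ and $C_3$); type (b), with $x \in T(u)$ and $y$ a proper ancestor of $v$ (reconnecting $C_1$ and $C_3$); and type (c), with $x \in T(v) \setminus T(u)$ and $y$ a proper ancestor of $v$ (reconnecting $C_1$ and $C_2$). Since $H$ is $2$-edge-connected, deleting $(u,p(u))$ alone keeps $H$ connected, which forces at least one back-edge of type (a) or (b) to exist; similarly, deleting $(v,p(v))$ alone forces at least one back-edge of type (b) or (c) to exist.

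Viewing $C_1, C_2, C_3$ as the vertices of an auxiliary graph whose three possible edges are witnessed by the presence of back-edges of types (a), (b), (c), the original pair is a cut-pair exactly when this $3$-vertex graph is disconnected. A short case analysis, using the two existence constraints from the preceding paragraph, shows that the only disconnected configuration is: type (b) is present while neither type (a) nor type (c) is, which isolates $C_2$. Hence $\{(u,p(u)),(v,p(v))\}$ is a cut-pair iff no back-edge of type (a) or type (c) exists and at least one back-edge of type (b) does.

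The final step will be to rewrite these three combinatorial conditions in terms of $\mathit{high}$ and $M$. The absence of type-(a) back-edges is immediately $\mathit{high}(u) < v$, since the upper endpoints of type-(a) back-edges are precisely the proper ancestors of $u$ that are descendants of $v$. For the $M$-condition, $M(u)$ is by definition the NCA of the starts of all type-(a) and type-(b) back-edges, while $M(v)$ is the NCA of the starts of all type-(b) and type-(c) back-edges. If some type-(c) back-edge exists, its start lies in $T(v) \setminus T(u)$, which forces $M(v)$ into $T[v,p(u)]$ and in particular outside $T(u)$, while $M(u) \in T(u)$; so $M(u) \ne M(v)$. Conversely, under $\mathit{high}(u) < v$ (no type (a)) and $2$-edge-connectedness, at least one type-(b) back-edge is present, $M(u)$ coincides with the NCA of the type-(b) starts, and $M(u)=M(v)$ becomes equivalent to the absence of type-(c) back-edges. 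The main subtlety is this last step: carefully tracking how the NCA ``escapes'' $T(u)$ once a type-(c) start appears, which pins down $M(v)$ versus $M(u)$ precisely enough to make the two conditions jointly equivalent to the cut-pair property.
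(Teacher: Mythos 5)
Your argument is correct, and it packages the proof differently from the paper. You first prove a clean intermediate characterization: deleting the two tree edges splits $T$ into $C_1=T\setminus T(v)$, $C_2=T(v)\setminus T(u)$ and $C_3=T(u)$, and (using $2$-edge-connectedness to guarantee, for each single deletion, at least one crossing back-edge) the pair is a cut-pair exactly when there is no back-edge from $T(u)$ to an ancestor of $p(u)$ that is a descendant of $v$ (your type (a)) and no back-edge from $T(v)\setminus T(u)$ to a proper ancestor of $v$ (your type (c)), i.e.\ exactly when $C_2$ gets isolated; only afterwards do you translate this into $\mathit{high}(u)<v$ and $M(u)=M(v)$. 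The paper interleaves these two steps: in the forward direction it argues back-edge by back-edge (every back-edge leaving $T(u)$ must jump above $v$, and every back-edge leaving $T(v)$ must start inside $T(u)$, on pain of reconnecting $u$ to $p(u)$ or $v$ to $p(v)$), and in the reverse direction it rules out on a hypothetical $u$--$p(u)$ path precisely the two back-edge types that are your (a) and (c). So the underlying combinatorial facts coincide; what your three-component, auxiliary-triangle view buys is an explicit and symmetric case analysis showing that only $C_2$ can be cut off, plus a clean separation of the connectivity argument from the $M$/$\mathit{high}$ bookkeeping, whereas the paper's more local argument re-derives the same dichotomy separately inside each direction. One small imprecision: when a type-(c) edge exists, $M(v)$ need not lie on the tree path $T[p(u),v]$ (if no type-(b) start exists it may sit in a branch of $T(v)\setminus T(u)$ off that path); but you only use the weaker, correct fact that $M(v)\notin T(u)$ while $M(u)\in T(u)$, so your conclusion $M(u)\neq M(v)$ is unaffected.
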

\begin{proof}
($\Rightarrow$) Since the graph is $2$-edge-connected, the removal of one edge is not sufficient to disconnect the graph, and therefore $v$ must be a \textit{proper} ancestor of $u$. Now, let $(x,y)$ be a back-edge such that $x$ is a descendant of $u$ and $y$ is a proper ancestor of $u$. Since $u$ is a descendant of $v$, $x$ is also a descendant of $v$. Furthermore, we notice that, since $\{(u,p(u)),(v,p(v))\}$ is a cut-pair, $y$ is a proper ancestor of $v$. (For otherwise, by removing $(u,p(u))$ and $(v,p(v))$, $T(u)$ remains connected (since $v$ is an ancestor of $u$), the vertices in the simple tree path $T[p(u),y]$ remain connected (since $v$ is an ancestor of $y$), and, therefore, the existence of the back-edge $(x,y)$ implies that $u$ remains connected with $p(u)$.) This shows that $M(v)$ is an ancestor of $M(u)$, and $\mathit{high}(u)$ is a proper ancestor of $v$. Conversely, let $(x,y)$ be a back-edge such that $x$ is a descendant of $v$ and $y$ is a proper ancestor of $v$. We observe that, since $\{(u,p(u)),(v,p(v))\}$ is a cut-pair, $x$ must be a descendant of $u$. (For otherwise, by removing $(u,p(u))$ and $(v,p(v))$, the simple tree paths $T[x,v]$ and $T[p(v),y]$ have not been affected (since $u$ is a descendant of $x$), and, therefore, the existence of the back-edge $(x,y)$ implies that $v$ remains connected with $p(v)$.) Furthermore, since $v$ is an ancestor of $u$, $y$ is a proper ancestor of $u$. This shows that $M(u)$ is an ancestor of $M(v)$. We conclude that $M(u)=M(v)$.\\
($\Leftarrow$) Remove the edges $(u,p(u))$ and $(v,p(v))$. Now, if there exists a path connecting $u$ with $p(u)$, this path must use a back-edge $(x,y)$ such that either (1) $x$ is in $T(u)$ and $y$ in $T[p(u),v]$, or (2) $x$ is a descendant of a vertex in $T[p(u),v]$, but not a descendant of $u$, and $y$ is a proper ancestor of $v$. A back-edge satisfying (1) does not exist, since $\mathit{high}(u)<v$. A back-edge $(x,y)$ satisfying (2) does not exist, since $M(u)=M(v)$, and therefore, if $x$ is a descendant of $v$ and $y$ a proper ancestor of $v$, then $x$ is a descendant of $M(u)$, and therefore a descendant of $u$. We conclude that the pair $\{(u,p(u)),(v,p(v))\}$ is a cut-pair.
\end{proof}

Algorithm \ref{algorithm:tree_edges_cut} describes how we can compute, for every vertex $u$, the number of cut-pairs of the form $\{(u,p(u)),(v,p(v))\}$. To prove correctness, we will need the following:

\begin{lemma}
\label{lemma:same_M(v)}
Let $u$, $v$ be two vertices such that $M(u)=M(v)$, $v$ is an ancestor of $u$, and $\mathit{high}(u)<v$. Then $\mathit{high}(u)=\mathit{high}(v)$.
\end{lemma}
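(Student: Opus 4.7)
The plan is to establish the equality by proving the two inequalities $\mathit{high}(u) \le \mathit{high}(v)$ and $\mathit{high}(v) \le \mathit{high}(u)$ separately. Recall that, by convention, DFS-numbers increase as we descend in $T$, so for two ancestors of a given vertex, the larger one (in DFS order) is the deeper one.

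First I would prove $\mathit{high}(u) \le \mathit{high}(v)$. Let $(x,\mathit{high}(u))$ be a back-edge witnessing $\mathit{high}(u)$, where $x$ is a descendant of $u$. Since $v$ is an ancestor of $u$, $x$ is also a descendant of $v$. Moreover, the hypothesis $\mathit{high}(u)<v$ means that $\mathit{high}(u)$ is a proper ancestor of $v$. Hence $\mathit{high}(u)$ is a proper ancestor of $v$ connected by a back-edge to a descendant of $v$, and by the definition of $\mathit{high}(v)$ as the maximum such ancestor, we get $\mathit{high}(u) \le \mathit{high}(v)$.

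Next I would prove the reverse inequality, and this is where the assumption $M(u)=M(v)$ enters. Let $(y,\mathit{high}(v))$ be a back-edge witnessing $\mathit{high}(v)$, with $y$ a descendant of $v$. Since $\mathit{high}(v)$ is a proper ancestor of $v$, and $v$ is an ancestor of $u$, $\mathit{high}(v)$ is also a proper ancestor of $u$. The key observation is that $y$ must be a descendant of $u$: by the definition of $M(v)$, every descendant of $v$ connected via a back-edge to a proper ancestor of $v$ is a descendant of $M(v)$; and since $M(v)=M(u)$ and $M(u)$ is (as the nearest common ancestor of descendants of $u$) a descendant of $u$, every such vertex, and in particular $y$, is a descendant of $u$. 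Hence $(y,\mathit{high}(v))$ is a back-edge from a descendant of $u$ to a proper ancestor of $u$, which yields $\mathit{high}(v) \le \mathit{high}(u)$.

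Combining the two inequalities gives $\mathit{high}(u)=\mathit{high}(v)$. The main subtlety is the second direction: one must check carefully that $M(u)$ (equivalently $M(v)$) is itself a descendant of $u$, so that the constraint $M(u)=M(v)$ actually pins down the starting points of the relevant back-edges inside $T(u)$; everything else is a direct unpacking of the definitions together with the given hypotheses.
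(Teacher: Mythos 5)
Your proof is correct and follows essentially the same route as the paper: the inequality $\mathit{high}(u)\le\mathit{high}(v)$ from the hypothesis $\mathit{high}(u)<v$ (using that two ancestors of $u$ are comparable), and the reverse inequality from $M(u)=M(v)$ together with the fact that $M(u)$ is a descendant of $u$, which forces the witnessing back-edge for $\mathit{high}(v)$ to start inside $T(u)$. The only cosmetic difference is that you argue on the witness back-edges directly while the paper quantifies over all relevant back-edges; the substance is identical.
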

\begin{proof}
Let $(x,y)$ be a back-edge such that $x$ is a descendant of $u$ and $y$ is a proper ancestor of $u$. Then, since $u$ is a descendant of $v$, $x$ is also a descendant of $v$. Furthermore, $\mathit{high}(u)<v$ implies that $y$ is a proper ancestor of $v$. This shows that $y\leq\mathit{high}(v)$, and therefore, since $\mathit{high}(u)\leq y$, we have $\mathit{high}(u)\leq\mathit{high}(v)$. Conversely, let $(x,y)$ be a back-edge such that $x$ is a descendant of $v$ and $y$ is a proper ancestor of $v$. Then, since $M(v)=M(u)$, $M(v)$ is a descendant of $u$, and therefore $x$ is also a descendant of $u$. Furthermore, since $v$ is an ancestor of $u$, $y$ is a proper ancestor of $u$. This shows that $y\leq \mathit{high}(u)$, and therefore, since $\mathit{high}(v)\leq y$, we have $\mathit{high}(v)\leq\mathit{high}(v)$. We conclude that $\mathit{high}(u)=\mathit{high}(v)$.
\end{proof}

Now, we let $S(u)$, for every vertex $u\neq r$, denote the set $\{u\} \cup \{ v  \mid  \{(u,p(u)),(v,p(v))\}$ is a cut-pair $\}$. Then we have the following:

\begin{proposition}
\label{proposition:tree_edges_cut_2}
For every $v \in S(u)$ we have $S(v)=S(u)$.
\end{proposition}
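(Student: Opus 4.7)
The plan is to show that all vertices in $S(u)$ share a common value of $M(\cdot)$ and a common value of $\mathit{high}(\cdot)$, and then use Proposition~\ref{proposition:tree_edges_cut} to conclude that any two distinct members of $S(u)$ form a cut-pair.

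First, I would fix some $v \in S(u)$ with $v \neq u$. By Lemma~\ref{lemma:cut_pairs_on_line}, either $v$ is a proper ancestor of $u$ or $u$ is a proper ancestor of $v$; the two cases are symmetric, so assume $v$ is a proper ancestor of $u$. Proposition~\ref{proposition:tree_edges_cut} yields $M(u) = M(v)$ and $\mathit{high}(u) < v$, and then Lemma~\ref{lemma:same_M(v)} yields $\mathit{high}(u) = \mathit{high}(v)$. Thus $u$ and $v$ agree on both invariants.

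Next, given any $w \in S(u)$ with $w \notin \{u,v\}$, I would like to show $w \in S(v)$, i.e., that $\{(v,p(v)),(w,p(w))\}$ is a cut-pair. Applying Lemma~\ref{lemma:cut_pairs_on_line} to $\{u,w\}$ gives that $u,w$ are comparable in the ancestor order, so Proposition~\ref{proposition:tree_edges_cut} together with Lemma~\ref{lemma:same_M(v)} applied to the deeper of the two gives $M(w) = M(u)$ and $\mathit{high}(w) = \mathit{high}(u)$. Hence $M(w) = M(v)$ and $\mathit{high}(w) = \mathit{high}(v)$ as well. Now $v$ and $w$ both lie on the tree path $T[\mathit{M}(u), r]$, so one is a proper ancestor of the other; in either arrangement, the shared values of $M$ and $\mathit{high}$, combined with the inequality $\mathit{high}(u) < \min(v,w)$ already guaranteed by $v, w \in S(u)$, let me verify the two conditions of Proposition~\ref{proposition:tree_edges_cut} for the pair $(v,w)$. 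This proves $S(u) \subseteq S(v)$, and by symmetry (since $v \in S(u)$ iff $u \in S(v)$, the cut-pair relation being symmetric) we also get $S(v) \subseteq S(u)$.

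The main obstacle is, unsurprisingly, the case analysis at the last step, where we must handle the three possible ancestor relationships between $v$ and $w$ ($w$ a proper ancestor of $u$ with $w$ ancestor of $v$, $w$ a proper ancestor of $u$ with $v$ ancestor of $w$, and $w$ a proper descendant of $u$, hence automatically below $v$). But the verification in each case is immediate once the shared $M$ and $\mathit{high}$ values have been established, because the required $\mathit{high}$-strict-inequality in Proposition~\ref{proposition:tree_edges_cut} reduces in every case to $\mathit{high}(u) < v$ or $\mathit{high}(u) < w$, which is part of the hypothesis $v,w \in S(u)$.
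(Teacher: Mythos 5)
Your proof is correct and takes essentially the same route as the paper's: both first show that every member of $S(u)$ shares the same $M$ and $\mathit{high}$ values (via Lemma~\ref{lemma:cut_pairs_on_line}, Proposition~\ref{proposition:tree_edges_cut} and Lemma~\ref{lemma:same_M(v)}), and then reapply Proposition~\ref{proposition:tree_edges_cut} to two comparable members to conclude. The only cosmetic difference is that you compare two elements $v,w \in S(u)$ directly and justify their comparability by noting both lie on $T[M(u),r]$ (a slightly cleaner justification than the paper's), whereas the paper chains through $w \in S(v)$; the substance is identical.
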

\begin{remark}
In other words, this Proposition says that the binary relation ``$e$ forms a cut-pair with $e'$'', defined on the set of tree-edges, is transitive.
\end{remark}
\begin{proof}
Let $u$ be a vertex. We will show that all vertices in $S(u)$ have the same $\mathit{high}$ point. So let $v$ be a member of $S(u)$. By Lemma \ref{lemma:cut_pairs_on_line}, we have that $v$ is either an ancestor of $u$ or a descendant of $u$. Suppose that $v$ is an ancestor of $u$. Then, Proposition \ref{proposition:tree_edges_cut} implies that $\mathit{high}(u)<v$. By the same Proposition, we also have $M(u)=M(v)$. By Lemma \ref{lemma:same_M(v)}, these two facts imply that $\mathit{high}(u)=\mathit{high}(v)$. Now, if $v$ is a descendant of $u$, the same argument (with a reversal of the roles of $u$ and $v$) shows that $\mathit{high}(u)=\mathit{high}(v)$.

Now let $v$ be a member of $S(u)$ and $w$ a member of $S(v)$. Since $\mathit{high}(w)=\mathit{high}(v)$ and $\mathit{high}(v)=\mathit{high}(u)$, we conclude that $\mathit{high}(w)=\mathit{high}(u)$, and therefore (by the definition of $\mathit{high}$) $\mathit{high}(w)<u$ and $\mathit{high}(u)<w$. By Proposition \ref{proposition:tree_edges_cut}, we have $M(w)=M(v)$ and $M(v)=M(u)$, and thus we conclude that $M(w)=M(u)$. Now, since $u$ and $v$ are related as ancestor-descendant, and $v$ and $w$ also have the same relation, we conclude that $u$ and $w$ are also related as ancestor-descendant. Now, it does not matter whether $u$ is an ancestor or a descendant of $w$: since $M(u)=M(w)$ and $\mathit{high}(u)<w$ and $\mathit{high}(w)<u$, either one is a proper ancestor of the other, and, therefore, by Proposition \ref{proposition:tree_edges_cut}, we have that $w$ is in $S(u)$, or $w=u$, and therefore $w$ is in $S(u)$ (by definition). Thus we have $S(v)\subseteq S(u)$. Since $u \in S(v)$ (by definition), the same argument shows that $S(u) \subseteq S(v)$, and thus we conclude that $S(v)=S(u)$. 
\end{proof}

\begin{theorem}
Algorithm \ref{algorithm:tree_edges_cut} is correct.
\end{theorem}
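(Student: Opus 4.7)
The plan is to reduce the correctness of Algorithm~\ref{algorithm:tree_edges_cut} to the clean characterization of $S(u)$ established earlier. Proposition~\ref{proposition:tree_edges_cut_2} already shows that the sets $S(u)$ partition the non-root vertices, so for each $u \ne r$ the contribution of tree-edge partners to $\mathit{count}[(u,p(u))]$ is exactly $|S(u)| - 1$. I would therefore aim to prove that each class $S(u)$ coincides with the set of all vertices sharing the same pair $(M(\cdot), \mathit{high}(\cdot))$ as $u$, and then observe that Algorithm~\ref{algorithm:tree_edges_cut} performs precisely this bucketing (or an operationally equivalent aggregation).

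The first step is the ``only if'' direction: any two vertices in a common class $S(u)$ share the same $M$-value, directly from Proposition~\ref{proposition:tree_edges_cut}, and then the same $\mathit{high}$-value, via Lemma~\ref{lemma:same_M(v)}. The converse is where the geometric content lies. Suppose $u \ne v$ both satisfy $M(u)=M(v)=m$ and $\mathit{high}(u)=\mathit{high}(v)=h$. By definition, $m$ is a descendant of both $u$ and $v$, so $u$ and $v$ both lie on the tree path from $r$ to $m$; in particular one of them, say $v$, is a proper ancestor of the other. Next, $h=\mathit{high}(v)$ is a proper ancestor of $v$, which gives $v > h = \mathit{high}(u)$, i.e.\ $\mathit{high}(u) < v$. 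Proposition~\ref{proposition:tree_edges_cut} then yields $\{(u,p(u)),(v,p(v))\} \in S(u)$, as required. Hence the classes $S(u)$ are exactly the fibers of $u \mapsto (M(u),\mathit{high}(u))$.

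From here the algorithmic content is straightforward: using the linear-time routines of Sections~\ref{sec:high} and \ref{sec:M} to compute all $M(u)$ and $\mathit{high}(u)$, bucket-sort the non-root vertices by the pair $(M(u),\mathit{high}(u))$, and for each bucket of size $k$ add $k-1$ to $\mathit{count}[(u,p(u))]$ for every $u$ in the bucket. A quick sanity check: each unordered cut-pair $\{(u,p(u)),(v,p(v))\}$ within a class of size $k$ is charged once to each endpoint, yielding a total of $2\binom{k}{2} = k(k-1)$ increments, matching the $k$ vertices each receiving $k-1$.

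The main obstacle I anticipate is not conceptual but bookkeeping: I would have to match the specific control flow of Algorithm~\ref{algorithm:tree_edges_cut} against the equivalence-class description above, verifying that its loops visit exactly the pairs $(u,v)$ with common $(M,\mathit{high})$ and that each such pair contributes the correct increment to both $\mathit{count}[(u,p(u))]$ and $\mathit{count}[(v,p(v))]$ without double counting. Once this is in place, the combination of this proof with the analysis of Section~\ref{subsection:back_edge_tree_edge} completes the computation of $\mathit{count}(e)$ for every tree edge $e$; the back-edge cases have already been handled symmetrically in Section~\ref{subsection:back_edge_tree_edge} via the $\mathit{b\_count}(u) = 1$ update on $(M(u), \mathit{low}(u))$, so no further accounting is needed.
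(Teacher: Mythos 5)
Your structural analysis is sound and uses the same ingredients as the paper: Proposition~\ref{proposition:tree_edges_cut}, Lemma~\ref{lemma:same_M(v)}, and Proposition~\ref{proposition:tree_edges_cut_2}. Your fiber characterization --- that $S(u)$ is exactly the set of vertices sharing the pair $(M(u),\mathit{high}(u))$ --- is correct and is an equivalent repackaging of the paper's claim; the converse argument (both vertices are ancestors of $m=M(u)=M(v)$, hence comparable, and $\mathit{high}$ of the descendant is below the ancestor since it is already below the descendant's own ancestor $v$) is fine, as is the observation that each member of a class of size $k$ should receive the increment $k-1$.

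The gap is that you never prove the statement actually being claimed, namely that \emph{Algorithm~\ref{algorithm:tree_edges_cut}} computes these counts: you substitute a different procedure (bucket-sort the vertices by the pair $(M,\mathit{high})$) and explicitly defer ``matching the control flow'' of the given algorithm as anticipated bookkeeping. That matching is the actual content of the theorem, and it is not automatic from the fiber characterization alone. What is needed, and what the paper's proof supplies, is: (i) every class $S(u)$ lies entirely inside one list $M^{-1}(m)$ and, since all its members are ancestors of $m$, they are totally ordered; (ii) in the decreasingly sorted list, each class is a \emph{contiguous} segment, namely $S(u)=M^{-1}(m)\cap T[u,\mathit{high}(u))$ where $u$ is the deepest member (contiguity needs an extra application of Lemma~\ref{lemma:same_M(v)}: if two members sandwich a third element of $M^{-1}(m)$, that element has the same $\mathit{high}$ value and so belongs to the class); and (iii) the element at which the scan restarts (the first $v\le\mathit{high}(u)$) is again the deepest member of its own class, which follows from Proposition~\ref{proposition:tree_edges_cut_2} since otherwise its class would intersect an earlier segment and hence be that segment. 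Only with (i)--(iii) does the two-pass scan of Algorithm~\ref{algorithm:tree_edges_cut} --- counting $\mathit{n\_edges}$ over a segment and then adding it to $\mathit{count}[(w,p(w))]$ for every $w$ in that segment --- coincide with your bucketing, so these steps must be written out rather than waved at.
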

\begin{proof}
According to Proposition \ref{proposition:tree_edges_cut}, all cut-pairs of the form $\{(u,p(u)),(v,p(v))\}$ have $M(u)=M(v)$. Therefore, in order to count all these cut-pairs, it is sufficient to focus our attention on the lists $M^{-1}(m)$, for all vertices $m$ ($\neq r$), to find therein vertices $u$ and $v$ such that $\{(u,p(u)),(v,p(v))\}$ is a cut-pair. Now, suppose that we have all these lists computed and their elements sorted in decreasing order, and let $m$ be a vertex. Let $u$ be an element of $M^{-1}(m)$ which is maximal in $S(u)$ (i.e. if there exists a $v$ such that $\{(u,p(u)),(v,p(v))\}$ is a cut-pair, then $v$ is a proper ancestor of $u$). Then, by Proposition \ref{proposition:tree_edges_cut}, we have $S(u)=M^{-1}(m)\cap T[u,\mathit{high}(u))$. Furthermore, by Proposition \ref{proposition:tree_edges_cut_2}, we have that, for every $v$ in $S(u)$, $S(v)=S(u)$. This explains why Algorithm \ref{algorithm:tree_edges_cut} works. We start with the first element $u$ of $M^{-1}(m)$, and we traverse the list $M^{-1}(m)$ until we reach a vertex $v$ such that $v\leq\mathit{high}(u)$ (or until we run out of elements). While doing that, we keep a counter $\mathit{n\_edges}$ of the elements in $M^{-1}(m)\cap T(u,\mathit{high}(u))$. Then we traverse the segment $M^{-1}(m)\cap T[u,\mathit{high}(u))$ of the list $M^{-1}(m)$ again, and, for every $w$ in $M^{-1}(m)\cap T[u,\mathit{high}(u))$, we set $\mathit{count}[(w,p(w))] := \mathit{count}[(w,p(w))]+\mathit{n\_edges}$. Then we repeat the same process from $v$, until we reach the end of $M^{-1}(m)$.
\end{proof}

\begin{algorithm}[!h]
\caption{\textsf{counting cut-pairs of tree-edges}}
\label{algorithm:tree_edges_cut}
\LinesNumbered
\DontPrintSemicolon
calculate all lists $M^{-1}(m)$, for all vertices $m$, and have their elements sorted in decreasing order\;
\ForEach{vertex $m$}{
$u \leftarrow $ first element of $M^{-1}(m)$\;
\While{$u\neq\emptyset$}{
  $v \leftarrow $ successor of $u$ in $M^{-1}(m)$\;
  $\mathit{n\_edges} \leftarrow 0$\;
  \While{$v\neq\emptyset$ and $\mathit{high}(u)<v$}{
    $\mathit{n\_edges} \leftarrow \mathit{n\_edges}+1$\;
    $v \leftarrow $ next element of $M^{-1}(m)$\;
   }
  $v \leftarrow u$\;
  \While{$v\neq\emptyset$ and $\mathit{high}(u)<v$}{
    $\mathit{count}[(v,p(v))] \leftarrow \mathit{count}[(v,p(v))]+\mathit{n\_edges}$\;
    $v \leftarrow $ next element of $M^{-1}(m)$\;
  }
  $u \leftarrow v$\;
}
}
\end{algorithm}

In conclusion, we note that the method we have proposed to count all cut-pairs allows us to preprocess the graph in linear time so that we can answer queries of the form ``report all edges that form a cut-pair with $e$" in time analogous to the length of the answer. Let us explain how this works. Suppose, first, that $e$ is a back-edge. Then all the edges with which $e$ forms a cut-pair (if such edges exist) must be tree-edges. By Proposition \ref{proposition:back_edge}, if $e$ forms a cut-pair with a tree-edge $(u,p(u))$, for some vertex $u$, then there is no other back-edge $e'$ such that $\{e',(u,p(u))\}$ is a cut-pair. Furthermore, in this case we have $e=(M(u),\mathit{low}(u))$, and $\mathit{low}(u)=\mathit{low}(M(u))$. Thus, we may store in a list $\mathit{b\_cut}[v]$, for every vertex $v$ ($\neq r$), all vertices $u$ such that $M(u)=v$ and $\{(v,\mathit{low}(v)),(u,p(u))\}$ is a cut-pair. To fill these lists appropriately, we process each vertex $u\neq r$: if $\mathit{b\_count}(u)$ (as defined in the beginning of this Appendix) is equal to $1$, we insert in $\mathit{b\_cut}[M(u)]$ the element $u$. Now, to answer a query of the form ``report all edges that form a cut-pair with $(x,y)$", where $(x,y)$ is a back-edge, we check whether $y=\mathit{low}(x)$, in which case we return all edges of the form $(u,p(u))$, for every $u$ in $\mathit{b\_cut}[x]$ (if $\mathit{b\_cut}[x]$ is not empty). 
Now, suppose that $(u,p(u))$ is a tree-edge. In this case we cannot store in a list all the edges $e$ such that $\{e,(u,p(u))\}$ is a cut-pair (for otherwise we may violate the time bound in the preprocessing phase). Instead, we may store in a variable $\mathit{Max}[u]$, for every vertex $u\neq r$, the maximal element of $S(u)$. To do this, we only have to insert between lines 12 and 13 of Algorithm \ref{algorithm:tree_edges_cut} the assignment ``$\mathit{Max}[v] \leftarrow u $''. Then, to answer a query of the form ``report all edges that form a cut-pair with $(u,p(u))$", we first have to check whether $\mathit{b\_count}(u)=1$, in which case we return (as part of the output) the edge $(M(u),\mathit{low}(u))$. Then, we traverse the (decreasingly ordered) list $M^{-1}(M(u))$ from $\mathit{Max}[u]$ until we reach an element $w$ such that $\mathit{high}(w)\leq \mathit{high}(\mathit{Max}[u])$, or the end of the list. For every such element $v$ that we encounter - excluding $u$ and $w$ - we return $(v,p(v))$.
\end{document}